\pgfplotsset{width=9cm,compat=1.5.1}
\newtheorem{theorem}{Theorem}[section]
\newtheorem{lemma}[theorem]{Lemma}
\newtheorem{corollary}[theorem]{Corollary}
\newtheorem{remark}[theorem]{Remark}
\newtheorem{proposition}[theorem]{Proposition}
\newtheorem{example}[theorem]{Example} 
\newcommand{\zero}{\mathbf 0}
\newcommand{\one}{\mathbf 1}
\newcommand{\be}{\mathbf e}
\newcommand{\bff}{\mathbf f}
\newcommand{\bh}{\mathbf h}
\newcommand{\bu}{\mathbf u}
\newcommand{\bv}{\mathbf v}
\newcommand{\bxx}{\mathbf x}
\newcommand{\by}{\mathbf y}
\newcommand{\bw}{\mathbf w}
\newcommand{\bmu}{{\boldsymbol{\mu}}}
\newcommand{\bnu}{{\boldsymbol{\nu}}}
\newcommand{\bomega}{{\boldsymbol{\omega}}}
\newcommand{\supp}{{\Phi}}
\newcommand{\Lsupp}{{\Lambda}}
\newcommand\spn{{\mathrm{span\ }}}
\newcommand\dist{{\mathrm{dist}}}
\newcommand\spec{{\mathrm{spec}}}
\newcommand\rank{{\mathrm{rank}}}
\newcommand{\Line}[1]{{{\cal L}(#1)}}
\newcommand{\LineA}{{A_{\cal L}}}
\newcommand{\LineQ}{{Q_{\cal L}}}
\newcommand{\wt}[1]{{\widetilde{#1}}}
\newcommand{\etalchar}[1]{$^{#1}$}
\def\C{\mathbb{C}}
\def\R{\mathbb{R}}
\def\Z{\mathbb{Z}}
\def\Q{\mathbb{Q}}
\def\ii{\mathrm{i}}
\newcommand{\bra}[1]{\langle #1 \rvert}
\newcommand{\ket}[1]{\lvert #1 \rangle}
\newcommand{\ketbra}[2]{\lvert #1 \rangle\langle #2 \rvert}
\newcommand{\tr}{\operatorname{Tr}}
\newcommand\ignore[1]{{}}
\begin{document}










\title{A generalization of quantum pair state transfer}

\author{Sooyeong Kim\textsuperscript{1}, Hermie Monterde\textsuperscript{2}, Bahman Ahmadi\textsuperscript{3}, Ada Chan\textsuperscript{1}\\
	Stephen Kirkland,\textsuperscript{2} and  Sarah Plosker\textsuperscript{2,4}
}

\maketitle

\begin{abstract}
An \textit{$s$-pair state} in a graph is a quantum state of the form $\mathbf{e}_u+s\mathbf{e}_v$, where $u$ and $v$ are vertices in the graph and $s$ is a non-zero complex number. If $s=-1$ (resp., $s=1$), then such a state is called a \textit{pair state} (resp. \textit{plus state}). In this paper, we develop the theory of perfect $s$-pair state transfer in continuous quantum walks, where the Hamiltonian is taken to be the adjacency, Laplacian or signless Laplacian matrix of the graph. We characterize perfect $s$-pair state transfer in complete graphs, cycles and antipodal distance-regular graphs admitting vertex perfect state transfer. We construct infinite families of graphs with perfect $s$-pair state transfer using quotient graphs and graphs that admit fractional revival. We provide necessary and sufficient conditions such that perfect state transfer between vertices in the line graph relative to the adjacency matrix is equivalent to perfect state transfer between the plus states formed by corresponding edges in the graph relative to the signless Laplacian matrix. Finally, we characterize perfect state transfer between vertices in the line graphs of Cartesian products relative to the adjacency matrix.
\end{abstract}

\noindent \textbf{Keywords:} continuous-time quantum walk, perfect state transfer, pair states, strong cospectrality, line graph\\
	
\noindent \textbf{MSC2010 Classification:} {05C50; 
	81P45; 
	05C76; 
	15A18; 
	81Q10 
 
\addtocounter{footnote}{1}
\footnotetext{Department of Mathematics and Statistics, York University, Toronto, ON, Canada M3J 1P3}
\addtocounter{footnote}{1}
\footnotetext{Department of Mathematics, University of Manitoba, Winnipeg, MB, Canada R3T 2N2}
\addtocounter{footnote}{1}
\footnotetext{Department of Mathematics, Shiraz University, Shiraz, Iran}
\addtocounter{footnote}{1}
\footnotetext{Department of Mathematics \& Computer Science, Brandon University, Brandon, MB, Canada R7A 6A9}

\medskip

\section{Introduction}
\label{Section:Introduction}

The use of a continuous-time quantum walk to transfer quantum states was proposed by Bose in 2003 \cite{Bose:Quantum}. Since then, continuous-time quantum walks have become invaluable tools in the theory of quantum computation and information. See \cite{Coutinho2021} for the background on continuous-time quantum walks.

Motivated by high probability quantum transmission, Christandl et. al\ introduced the concept of perfect state transfer in 2004 \cite{Christandl:PSTonHypercubes}. For two decades, the focus of most studies on perfect state transfer was between vertex states. However, a result due to Godsil implies that perfect state transfer between vertex states is rare \cite{Godsil2010}. This prompted Chen to extend the study of perfect state transfer to edge states \cite{chen2019edge}. Chen and Godsil subsequently expanded the preceding work to cover the so-called pair states and plus states \cite{Chen2020PairST}. In this paper, we study perfect state transfer between $s$-pair states, which is a natural generalization of both pair states and plus states.

Let $X$ be a connected graph on $n$ vertices, and $M$ be a Hermitian matrix associated with $X$.  
The {\sl continuous-time quantum walk on $X$ with Hamiltonian $M$} has transition matrix
\begin{equation*}
U_M(t) :=e^{-\ii t M}.
\end{equation*}
The transition matrix is unitary because $M$ is Hermitian.

Some results in this paper apply specifically to $M$ being
the adjacency matrix $A$ of $X$, the Laplacian matrix $L$ of $X$,
or the signless Laplacian matrix $Q$ of $X$. 
If we do not specify the Hamiltonian in a statement by using $U(t)$ to denote the
transition matrix, then one may assume it is any real symmetric matrix associated with $X$.
Unless explicitly stated, we assume $X$ is a simple connected undirected graph and all edges in $X$ are unweighted.

A {\sl (pure) state} is a $1$-dimensional subspace of $\C^{n}$.   We represent a state by a unit vector $\bu$ spanning the $1$-dimensional subspace.  
Note that $\eta\bu$ represents the same state as $\bu$, for any phase factor $\eta$.  The {\sl density matrix} of this state is $D_{\bu}:=\bu \bu^*$.  
We say the state $\bu$ is a {\sl real state} if its density matrix $D_\bu$ is real.

{\sl Perfect state transfer} occurs from the state $\bu$ to the state $\bmu$ at time $\tau$ if 
\begin{equation*}
U(\tau)\bu=\eta \bmu
\end{equation*} 
for some phase factor $\eta$, or equivalently,
\begin{equation*}
U(\tau)D_{\bu} U(-\tau) = D_{\bmu}.
\end{equation*}
A {\sl vertex state} is the characteristic vector $\be_a$ of some vertex $a$ in $X$.
We define an {\sl $s$-pair state} as a state in the form
\begin{equation*}
\frac{1}{\sqrt{1+\vert s\vert^2}}\left(\be_a+s \be_b\right)
\end{equation*}
for some non-zero complex number $s$ and distinct vertices $a$ and $b$. We say $\bu$ is a {\sl pair state} if $s=-1$, and it is a {\sl plus state} if $s=1$. For simpler exposition, we will drop the normalization factor $\frac{1}{\sqrt{1+s^2}}$.


When studying a quantum spin network, one considers an arbitrary graph where each vertex represents a spin, and the weight of the edge between vertices represents the coupling strength of interaction between the two spins in the quantum system. Strictly speaking, the vector $\be_a$ represents excitation of spin $a$, and so the state of the system can be represented as $\ket{1}_a\ket{0}_b\ket{0}_c\cdots\in \mathbb C^{2^n}$. Similarly,  $\be_b$ corresponds to $\ket{0}_a\ket{1}_b\ket{0}_c\cdots$, where we have, without loss of generality, labelled our vertices so that the vertices $a$ and $b$ appear first. Since the excitation only occurs on vertices $a$ and $b$, we can effectively ignore the rest of the system (mathematically, we can trace out all other qubits) and focus solely on the state of two qubits: $\ket{1}_a\ket{0}_b+s\ket{0}_a\ket{1}_b$, where $s\in \mathbb C$ and we have again dropped the normalization factor for simplicity. In this way, an $s$-pair state $\be_a+s\be_b$ represents a pair of entangled qubits, forming a state in the 1-excitation subspace $\mathbb C^n$ of the full $2^n$-dimensional system of $n$ spins. This state is always entangled for any $s\neq 0$.

Furthermore, although $s$ is not in general a Schmidt coefficient (since we are allowing for $s\in \mathbb C$), we can still view $s$ as a measure of the degree of entanglement of our state $\ket{1}_a\ket{0}_b+s\ket{0}_a\ket{1}_b$, in so far as the  entropy of entanglement is given by
\begin{eqnarray*}
	S(\rho_a)&=&S\left(\tr_b (\ket{1}_a\ket{0}_b+s\ket{0}_a\ket{1}_b)(\bra{1}_a\bra{0}_b+\bar{s}\bra{0}_a\bra{1}_b)\right)\\
	&=&S(\ketbra{1_a}{1_a}+s\ketbra{0_a}{1_a}+\bar{s}\ketbra{1_a}{0_a}+\lvert s\rvert^2\ketbra{0_a}{0_a}),    
\end{eqnarray*}
where $\tr_b$ is the partial trace over the $b$ subsystem. The closer $\vert s\vert$ is to zero, the closer the density matrix $\rho=(\ket{1}_a\ket{0}_b+s\ket{0}_a\ket{1}_b)(\bra{1}_a\bra{0}_b+\bar{s}\bra{0}_a\bra{1}_b)$ is to $\ketbra{1_a}{1_a}$, a pure state (having entropy of zero), and therefore the closer the state $\ket{1}_a\ket{0}_b+s\ket{0}_a\ket{1}_b$ is to a separable state (namely $\ket{1}_a\ket{0}_b$). On the other hand, the closer $\vert s\vert$ is to 1, the closer $S(\rho_a)$ is to its maximum value, and the closer $\ket{1}_a\ket{0}_b$ is to a maximally entangled state. Thus, we may think of pair states (resp., plus states) as $s$-pair states that are maximally entangled.

We say that perfect state transfer between vertices $a$ and $b$ at time $\tau$ if
\begin{equation*}
U(\tau) \be_a = \eta \be_b,
\end{equation*}
for some phase factor $\eta$.   If the Hamiltonian is real and symmetric then $U(\tau)$ is symmetric and
$U(\tau) \be_b = \eta \be_a$, which gives
\begin{equation*}
U(\tau) (\be_a+s\be_b) = \eta (\be_b+s \be_a),\quad \text{for $s\in \C$.}
\end{equation*}
Hence, we can view perfect state transfer between vertices as a special case of perfect transfer of $s$-pair states.

Given two states 
$\bu=\be_a+r \be_b$ and $\bmu=\be_\alpha +s \be_\beta$, 
we say {\sl perfect $s$-pair state transfer} occurs from $\bu$ to $\bmu$  if 
$U(\tau)\bu=\eta \bmu$, 
for some time $\tau$. As a preliminary investigation of $s$-pair states, we opt to study perfect $s$-pair state transfer 
where $r=s$ is a non-zero real number. 
\ignore{
	satisfying equations of the form
	\begin{equation*}
	U(\tau) \left(\frac{1}{\sqrt{1+s^2}}\left(\be_a + s \be_b\right)\right) = \eta  \left(\frac{1}{\sqrt{1+s^2}}\left(\be_\alpha + s \be_\beta\right)\right),
	\end{equation*}
	and restrict $s$ to be non-zero real number.}
When $s=\pm 1$, we have perfect pair state transfer and perfect plus state transfer as introduced in \cite{Chen2020PairST}. Since perfect state transfer between $s$-pair states represents accurate transmission of a pair of entangled qubits to another pair of entangled qubits in a quantum spin network, it follows that perfect $s$-pair state transfer allows for the transfer and generation of entanglements, a property considered desirable in quantum information theory \cite{Chan2019}. 
This paper is organized as follows. In Section~\ref{Section:P2ST}, we provide necessary conditions for perfect $s$-pair state transfer and supply examples of graphs that admit perfect $s$-pair state transfer. Section~\ref{Section:Periodic} deals with real periodic $s$-pair states. In particular, we prove that for every positive integer $k$ and positive rational $s$, there are only finitely many connected graphs with maximum valency $k$ such that $\be_a+s\be_b$ is periodic in $X$ relative to the adjacency, Laplacian and signless Laplacian matrix. Thus, similar to the vertex case \cite{Godsil2010}, perfect $s$-pair state transfer is rare when $s$ is a positive rational number. In Section~\ref{Section:SC}, we establish combinatorial and algebraic properties of graphs with strongly cospectral $s$-pair states.
Section~\ref{Section:Vto2} is dedicated to constructions of graphs with perfect $s$-pair state transfer using quotient graphs and graphs that admit fractional revival. We also extend a transitivity property of perfect pair state transfer in \cite{Chen2020PairST} to perfect $s$-pair state transfer. In Section~\ref{Section:Special}, we characterize perfect $s$-pair state transfer in complete graphs and cycles. It turns out that complete graphs do not admit perfect $s$-pair state transfer, while $C_4$, $C_6$ and $C_8$ are the only cycles that admit perfect $s$-pair state transfer. For distance-regular graphs admitting perfect state transfer between vertices, we provide necessary and sufficient conditions such that these graphs also admit perfect $s$-pair state transfer. Section~\ref{Section:Line} is devoted to exploring the relationship between the existence of perfect state transfer between plus states formed by edges in a graph relative to the signless Laplacian matrix, and the existence of perfect state transfer between the corresponding vertices in the line graph relative to the adjacency matrix. Then we utilize the singular values and singular vectors of the incidence matrix of a graph to characterize strong cospectrality and perfect vertex state transfer in the line graph. Finally, in Section \ref{Section:Cart}, we characterize adjacency perfect state transfer between vertices in the line graphs of Cartesian products. Taken together, our results broaden the literature on pair and plus states, establishing new instances of perfect state transfer between $s$-pair states, while developing techniques that will facilitate future research on this topic.


\section{Perfect $s$-pair state transfer}
\label{Section:P2ST}

In a graph $X$, {\sl perfect $s$-pair state transfer} occurs from $\bu = \be_a + s \be_b$ to $\bmu=\be_{\alpha} + s \be_{\beta}$ at time $\tau$ if
there exists a unit complex number $\eta$, called a {\sl phase factor}, such that
\begin{equation}
\label{Eqn:P2T}
U(\tau) \bu = \eta \bmu,
\end{equation}
equivalently,
\begin{equation}
\label{Eqn:P2TDensity}
U(\tau) D_\bu U(-\tau) = D_\bmu.
\end{equation}
If $\bu=\bmu$ then we say the state $\bu$ is {\sl periodic} at time $\tau$.
Different from vertex states, it is possible for an $s$-pair state $\bnu$ to be an eigenvector of $M$ corresponding to some eigenvalue $\lambda$.  In this case, 
\begin{equation*}
U(t) \bnu = e^{-\ii t \lambda} \bnu,
\end{equation*}
for any time $t$, and we call $\bnu$ a {\sl fixed state}.

\begin{example}
	\label{Eg:P2ST}
	We give  a weighted graph $P_5(w)$ with perfect $s$-pair state transfer, and an infinite family of trees admitting pair state transfer.
	\begin{enumerate}[(a)]
		\item
		\label{Eg:P5}
		For a positive real number $w$, the weighted path $P_5(w)$ has perfect $s$-pair state transfer from $\left(\be_3-\frac{2}{\sqrt{w}}\be_1\right)$ to $\left(\be_3-\frac{2}{\sqrt{w}}\be_5\right)$ at time $\frac{\pi}{\sqrt{w}}$.  Note that $s\neq \pm 1$ for $w\neq 4$.
		
		\begin{figure}[h!]
			\begin{center}
				\includegraphics[scale=0.8]{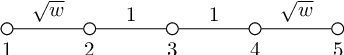}
				\caption{The weighted path $P_5(w)$} \label{Fig:P5}
			\end{center}
		\end{figure}
		\item
		\label{Eg:Pal}
		As a special case of the  construction in \cite{Pal2024},
		the infinite family of trees $T_n$ shown in Figure~\ref{Fig:Pal}, with $n\geq 0$, has adjacency perfect $s$-pair state transfer between two states $\be_a-\be_b$ and $\be_\alpha-\be_\beta$ at time $\frac{\pi}{\sqrt{2}}$.
		\begin{figure}[h!]
			\begin{center}
				\includegraphics[scale=0.8]{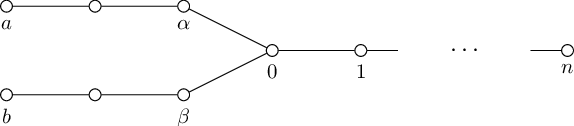}
				\caption{$T_n$} \label{Fig:Pal}
			\end{center}
		\end{figure}
	\end{enumerate}
\end{example}
\vspace{-0.2in}

For a Hamiltonian $M$, we use $\spec(M)$ to denote its spectrum.
Given the spectral decomposition of the Hamiltonian
\begin{equation}
\label{Eqn:SpecDecomp}
M = \sum_{\lambda \in \spec(M)} \lambda E_\lambda,
\end{equation}
we have 
\begin{equation*}
U_M(t) =  \sum_{\lambda \in \spec(M)} e^{-\ii t \lambda } E_\lambda.
\end{equation*}  
Multiplying $E_\lambda$ to both sides of Equation~(\ref{Eqn:P2T}) yields
\begin{equation*}
E_\lambda \bu = \left( e^{\ii \tau \lambda}\eta\right) E_\lambda \bmu,
\quad \text{for $\lambda \in \spec(M)$.}
\end{equation*}
Since $E_\lambda$, $\bu$ and $\bmu$ are real,  $e^{\ii \tau \lambda}\eta$ is a real phase factor.
Hence we have 
\begin{equation}
\label{Eqn:SC}
E_\lambda \bu =\pm E_\lambda \bmu, \quad \text{for $\lambda \in \spec(M)$.}
\end{equation}
We say the $s$-pair states are {\sl strongly cospectral} if they satisfy the above condition.  
As in the case of vertex state transfer, strong cospectrality is a necessary condition for perfect $s$-pair state transfer.

For an arbitrary state $\bnu \in \C^n$,  the {\sl eigenvalue support of $\bnu$} relative to $M$ is the set
\begin{equation*}
\supp_{\bnu} := \left\{ \lambda : E_\lambda\bnu \neq \zero\right\}.
\end{equation*}
It is obvious that if $\bu=\be_a+s\be_b$ then $\supp_\bu \subseteq \supp_{\be_a} \cup \supp_{\be_b}$.

Suppose $\bu$ and $\bmu$ are strongly cospectral states.  Equation~(\ref{Eqn:SC}) implies $\supp_\bu=\supp_\bmu$, and gives the  natural partition
of $\supp_\bu = \supp_{\bu, \bmu}^+\ \dot\cup\ \supp_{\bu, \bmu}^-$, where
\begin{equation*}
\supp_{\bu, \bmu}^+ = \left\{\lambda: E_\lambda \bu = E_\lambda \bmu \neq \zero \right\}
\quad \text{and} \quad
\supp_{\bu, \bmu}^- = \left\{\theta : E_\theta \bu = -E_\theta \bmu \neq \zero\right\}.
\end{equation*}

We now give a lower bound on the size of the support of an $s$-pair state $\be_a+s\be_b$ in terms of the distance of $a$ and $b$, denoted by $\dist(a,b)$.

\begin{proposition}
	\label{Prop:Support}
	Suppose the Hamiltonian for a graph $X$ is either $A$, $L$ or $Q$.
	If $\bu=\be_a + s \be_b$ is a fixed state then $\vert \supp_\bu\vert = 1$.  Otherwise, 
	\begin{equation*}
	\vert \supp_\bu \vert \geq \left\lceil \frac{\dist(a,b)}{2} \right\rceil.
	\end{equation*}
	\begin{proof}
		Let $k$ be the maximum degree of $X$ and let $M$ be $I+A$, $(k+1)I-L$ or $Q$ if the Hamiltonian  is $A$, $L$ or $Q$, respectively.
		For each case, the support of $\bu$ with respect to $M$ is the same as that with respect to the original Hamiltonian.
		Note that $M$ is a non-negative matrix with positive diagonal entries.
		
		If $M\bu = \lambda \bu$ then $\supp_\bu=\{\lambda\}$. On the other hand, if $\bu$ is not an eigenvector of $M$, then the  number of non-zero entries in the vectors $M^\ell \bu$ is strictly increasing as $\ell$ increases from $0$ to $\left(\left\lceil \frac{\dist(a,b)}{2} \right\rceil-1\right)$.
		Hence the set
		\begin{equation*}
		\left\{M^\ell \bu : 0\leq \ell \leq \left\lceil \frac{\dist(a,b)}{2} \right\rceil-1\right\}
		\end{equation*}
		is linearly independent in $\spn\{E_\lambda\bu: j=1,\ldots, d\}$ which has dimension $\vert \supp_\bu\vert$.
	\end{proof}
\end{proposition}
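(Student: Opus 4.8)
The plan is to replace the given Hamiltonian by a combinatorially convenient matrix sharing the same eigenprojectors, and then to exhibit $\lceil \dist(a,b)/2\rceil$ linearly independent vectors inside a space of dimension $\vert\supp_\bu\vert$. First I would pass from $A$, $L$ or $Q$ to the matrix $M$ equal to $I+A$, $(k+1)I-L$ or $Q$ respectively, where $k$ is the maximum degree. In each case $M$ is an affine shift $M=cI\pm(\text{Hamiltonian})$, so it has exactly the same eigenprojectors $E_\lambda$, and hence every state has the same eigenvalue support relative to $M$ as relative to the original Hamiltonian. The point of this choice is that $M$ is entrywise nonnegative, its off-diagonal support coincides with the edge set of $X$, and (using connectedness, which forces every degree to be at least $1$, so that the diagonal of $Q$ is positive) its diagonal entries are all strictly positive.

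The positivity of the diagonal lets me read off supports combinatorially: since $M_{vw}>0$ exactly when $v=w$ or $vw$ is an edge, a nonzero contribution to $(M^\ell)_{vw}$ comes from a length-$\ell$ walk in $X$ that is allowed to idle at a vertex, so $(M^\ell)_{vw}>0$ if and only if $\dist(v,w)\le \ell$. In particular the $v$-entry of $M^\ell\bu$ equals $(M^\ell)_{va}+s\,(M^\ell)_{vb}$, which I can force to be nonzero by choosing $v$ close to exactly one of $a,b$.

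For the main bound, write $d=\dist(a,b)$ and note that $M^\ell\bu=\sum_{\lambda\in\supp_\bu}\lambda^\ell E_\lambda\bu$, so every $M^\ell\bu$ lies in $\spn\{E_\lambda\bu:\lambda\in\supp_\bu\}$, a space of dimension exactly $\vert\supp_\bu\vert$ because the nonzero vectors $E_\lambda\bu$ sit in pairwise orthogonal eigenspaces. It therefore suffices to show that $\{M^\ell\bu:0\le \ell\le \lceil d/2\rceil-1\}$ is linearly independent. I would fix a shortest path $a=v_0,v_1,\dots,v_d=b$ and take $v_\ell$ as the candidate pivot for $M^\ell\bu$. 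Because $\ell\le\lceil d/2\rceil-1<d/2$ we have $\dist(v_\ell,a)=\ell$ and $\dist(v_\ell,b)=d-\ell>\ell$, so the $b$-term drops and $(M^\ell\bu)_{v_\ell}=(M^\ell)_{v_\ell,a}>0$; while for every earlier index $\ell'<\ell$ both $\dist(v_\ell,a)=\ell>\ell'$ and $\dist(v_\ell,b)=d-\ell>\ell'$ hold, so $(M^{\ell'}\bu)_{v_\ell}=0$. Thus the matrix with entries $(M^\ell\bu)_{v_j}$ is triangular with nonzero diagonal, certifying linear independence and giving $\vert\supp_\bu\vert\ge\lceil d/2\rceil$.

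Finally, the fixed-state case is immediate: if $\bu$ is an eigenvector of $M$ then $\supp_\bu$ is a singleton, which is why the linear-independence argument is run only in the complementary case. The step I expect to require the most care is controlling possible cancellation in $(M^\ell)_{va}+s\,(M^\ell)_{vb}$ when $s<0$: this is precisely why one must stop at $\ell<d/2$ and place each pivot $v_\ell$ on the geodesic, where $\dist(v_\ell,b)$ strictly exceeds $\ell$ and the $b$-contribution vanishes identically, so no choice of real $s$ can annihilate the pivot entry.
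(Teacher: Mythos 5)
Your proposal is correct and follows essentially the same route as the paper: pass to the shifted nonnegative matrix $M\in\{I+A,\,(k+1)I-L,\,Q\}$ with positive diagonal, then show $\{M^\ell\bu : 0\le \ell\le \lceil \dist(a,b)/2\rceil-1\}$ is linearly independent inside $\spn\{E_\lambda\bu:\lambda\in\supp_\bu\}$. Your explicit geodesic-pivot triangularity argument is simply a rigorous rendering of the paper's statement that the number of non-zero entries of $M^\ell\bu$ strictly increases, and it correctly handles the potential cancellation when $s<0$ by keeping $\ell<\dist(a,b)/2$.
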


Adapting the proof of Theorems~2.4.2 to 2.4.4 in \cite{Coutinho2014} yields the following characterization of perfect state transfer between real states.
\begin{theorem}
	\label{Thm:RealPST}
	Let $\bu$ and $\bmu$ be real states, and $\supp_\bu$ be closed under taking algebraic conjugates.
	Perfect state transfer occurs from $\bu$ to $\bmu$ if and only if the following conditions hold.
	\begin{enumerate}[(i)]
		\item
		\label{Thm:RealPST1}
		The $s$-pair states $\bu$ and $\bmu$ are strongly cospectral.
		\item
		\label{Thm:RealPST2}
		The elements in $\supp_{\bu}=\supp_{\bmu}$ are either
		\begin{enumerate}[(a)]
			\item
			all integers, or
			\item
			there exists a square-free integer $\Delta > 1$ and an integer $c$ such that each element in $\supp_\bu$ is in the form
			$\frac{c+d\sqrt{\Delta}}{2}$, for some integer $d$.
		\end{enumerate}
		\item
		\label{Thm:RealPST3}
		Let $\lambda \in  \supp_{\bu, \bmu}^+$, and
		\begin{equation*}
		g=\gcd \left\{\frac{\lambda - \theta}{\sqrt{\Delta}}\right\}_{\theta \in \supp_\bu}
		\end{equation*}
		(with $\Delta=1$ for Case~(a) above).
		Then $\theta \in \supp_{\bu, \bmu}^+$ if and only if $\frac{\lambda - \theta}{g\sqrt{\Delta}}$ is even.
		
	\end{enumerate}
	The minimum perfect $s$-pair state transfer time is $\frac{\pi}{g\sqrt{\Delta}}$, and $U\left(\frac{\pi}{g\sqrt{\Delta}}\right)\bu = e^{-\ii \tau \lambda}\bmu$.
\end{theorem}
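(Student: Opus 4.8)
The plan is to prove both directions by passing to the spectral idempotents and then invoking the number-theoretic structure of the eigenvalues, exactly as in Coutinho's treatment of the vertex case. Necessity of strong cospectrality (condition~(\ref{Thm:RealPST1})) is already recorded in Equation~(\ref{Eqn:SC}), so I may assume it throughout. Starting from $U(\tau)\bu = \eta\bmu$ and multiplying by $E_\lambda$, strong cospectrality gives $e^{-\ii\tau\lambda}E_\lambda\bu = \pm\eta E_\lambda\bu$ according as $\lambda\in\supp_{\bu,\bmu}^+$ or $\lambda\in\supp_{\bu,\bmu}^-$; since $E_\lambda\bu\neq\zero$ on the support, this yields the phase equations $e^{-\ii\tau\lambda}=\eta$ for $\lambda\in\supp_{\bu,\bmu}^+$ and $e^{-\ii\tau\lambda}=-\eta$ for $\lambda\in\supp_{\bu,\bmu}^-$. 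Dividing one phase equation by another, I get that $\frac{\tau}{\pi}(\lambda-\theta)$ is an integer for every pair $\lambda,\theta\in\supp_\bu$, and that this integer is even precisely when $\lambda$ and $\theta$ lie in the same part of the partition. In particular every difference of support eigenvalues is a rational multiple of every other, which is the engine driving conditions~(\ref{Thm:RealPST2}) and~(\ref{Thm:RealPST3}).

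Next I would extract the field structure, condition~(\ref{Thm:RealPST2}). For each of the three Hamiltonians the eigenvalues are algebraic integers (for $A$ directly; for $L$ and $Q$ after the nonnegative shifts used in Proposition~\ref{Prop:Support}), and the support is assumed closed under algebraic conjugation. Fix $\lambda_1\in\supp_\bu$ and set $\delta=\lambda_1-\lambda_2$ for a second support element; the previous paragraph writes every eigenvalue as $\lambda_1-r_\theta\,\delta$ with $r_\theta\in\Q$. Applying a Galois automorphism $\phi$ of the splitting field sends the support to itself, so $\phi(\delta)$ is again a difference of support elements, hence $\phi(\delta)=c_\phi\,\delta$ with $c_\phi\in\Q$; finiteness of the Galois group forces $c_\phi^m=1$ and therefore $c_\phi=\pm 1$. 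If every $c_\phi=1$ then $\delta\in\Q$, and a trace (sum-of-conjugates) argument shows $\lambda_1\in\Q$, so all eigenvalues are rational algebraic integers, i.e.\ integers, giving case~(a). If some $c_\phi=-1$ then $\delta^2\in\Q$ but $\delta\notin\Q$, so $\delta\in\Q(\sqrt\Delta)$ for a unique square-free $\Delta>1$; every eigenvalue then lies in the quadratic field $\Q(\sqrt\Delta)$, and being an algebraic integer there it has the stated form $\frac{c+d\sqrt\Delta}{2}$, giving case~(b).

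Condition~(\ref{Thm:RealPST3}) and the minimum time then follow by bookkeeping. With $\Delta$ fixed (taking $\Delta=1$ in case~(a)) each difference $\lambda-\theta$ is a half-integer multiple of $\sqrt\Delta$, so $g=\gcd\{(\lambda-\theta)/\sqrt\Delta\}_{\theta\in\supp_\bu}$ is well defined and every $(\lambda-\theta)/(g\sqrt\Delta)$ is an integer. The phase equations say $\tau(\lambda-\theta)\in\pi\Z$ with the correct parity, so the smallest admissible time is the one making $\frac{\tau}{\pi}g\sqrt\Delta=1$, namely $\tau=\frac{\pi}{g\sqrt\Delta}$; substituting shows $\theta\in\supp_{\bu,\bmu}^+$ exactly when $(\lambda-\theta)/(g\sqrt\Delta)$ is even, which is~(\ref{Thm:RealPST3}). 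For the converse I would run this in reverse: assuming~(\ref{Thm:RealPST1})--(\ref{Thm:RealPST3}), set $\tau=\frac{\pi}{g\sqrt\Delta}$ and $\eta=e^{-\ii\tau\lambda}$ for a chosen $\lambda\in\supp_{\bu,\bmu}^+$ (well defined, since elements of $\supp^+_{\bu,\bmu}$ differ by even multiples of $g\sqrt\Delta$), verify via the parity condition that $e^{-\ii\tau\theta}=\eta$ on $\supp^+_{\bu,\bmu}$ and $e^{-\ii\tau\theta}=-\eta$ on $\supp^-_{\bu,\bmu}$, and reassemble $U(\tau)\bu=\sum_\lambda e^{-\ii\tau\lambda}E_\lambda\bu=\eta\bmu$ using strong cospectrality.

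I expect the main obstacle to be the Galois-theoretic step in the second paragraph: justifying rigorously that rationally related algebraic integers whose set is conjugation-closed must either all be rational or all lie in one quadratic field of the half-integer form. The delicate points are that $c_\phi$ is genuinely rational (because $\phi$ permutes a finite, rationally parametrized set), that $\lambda_1$ itself lands in the same field as $\delta$, and that a single $\Delta$ works simultaneously for the whole support; handling the distinction between $\Delta\equiv 1$ and $\Delta\not\equiv 1\pmod 4$ in the ring of integers is what produces the factor $\tfrac12$, and this is precisely the part that needs the care taken in Theorems~2.4.2--2.4.4 of \cite{Coutinho2014}.
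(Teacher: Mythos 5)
Your proposal is correct and follows essentially the same route as the paper, which gives no in-line proof but explicitly obtains the theorem by adapting Theorems~2.4.2--2.4.4 of \cite{Coutinho2014}: idempotent phase equations from strong cospectrality, the Galois/quadratic-field argument for condition~(\ref{Thm:RealPST2}), and the gcd/parity bookkeeping for condition~(\ref{Thm:RealPST3}) and the minimum time. One small refinement to the step you flag: since all support elements are algebraic integers in $\Q(\sqrt{\Delta})$ sharing the same rational part $c$, the integers $d$ all have the same parity, so each $\frac{\lambda-\theta}{\sqrt{\Delta}}$ is genuinely an integer (not merely a half-integer), which is exactly what makes the gcd and the parity condition well posed.
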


If $\bu$ is a fixed state relative to $M$, then $\vert\supp_\bu\vert=1$ by Proposition \ref{Prop:Support}, and so $\bu$ is not involved in strong cospectrality. In this case, $\bu$ cannot exhibit perfect state transfer relative to $M$ by Theorem \ref{Thm:RealPST} (\ref{Thm:RealPST1}). By way of example, if $a$ and $b$ are twin vertices in $X$ ( i.e., $a$ and $b$ have the same neighbours), then $\be_a-\be_b$ is an eigenvector for $M\in\{A,L,Q\}$ \cite{Monterde2022}. Thus, $\be_a-\be_b$ is a pair state in $X$ that is not involved in perfect state transfer relative to $M\in\{A,L,Q\}$ .

Note that the following proposition applies to $M\in \{A, L, Q\}$
\begin{proposition}
	\label{Prop:Rat}
	Let $\bu=\be_a+s\be_b$, for some $s\in \Q\backslash\{0\}$.
	If the entries of $M$ are algebraic integers, then $\supp_\bu$ is closed under taking algebraic conjugates.
	\begin{proof}
		If $\lambda$ and $\theta$ are algebraic conjugates, then so are $E_{\lambda}$ and $E_{\theta}$.
		Therefore,   $E_{\lambda} \bu = \zero$ if and only if $E_{\theta} \bu = \zero$.
	\end{proof}
\end{proposition}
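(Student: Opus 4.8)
The plan is to exploit the Galois action on the spectral idempotents, using that $\bu=\be_a+s\be_b$ has rational entries whenever $s\in\Q$. First I would fix a Galois extension $K/\Q$ large enough to contain every eigenvalue of $M$ (and, since $A$, $L$, $Q$ have integer entries, the entries of $M$ as well). Because the characteristic polynomial of $M$ then lies in $\Q[x]$, the group $\mathrm{Gal}(K/\Q)$ permutes $\spec(M)$, and saying that $\theta$ is an algebraic conjugate of $\lambda$ means precisely that $\sigma(\lambda)=\theta$ for some $\sigma\in\mathrm{Gal}(K/\Q)$. Since $\sigma$ fixes $\Q$ pointwise and $\bu$ is rational, we have $\sigma(\bu)=\bu$, so the whole argument reduces to tracking how $\sigma$ moves the idempotent $E_\lambda$.

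The key step is the observation that conjugate eigenvalues carry conjugate idempotents. Using the Lagrange--Sylvester interpolation formula
\begin{equation*}
E_\lambda=\prod_{\substack{\mu\in\spec(M)\\ \mu\neq\lambda}}\frac{M-\mu I}{\lambda-\mu},
\end{equation*}
each $E_\lambda$ is a polynomial in $M$ with coefficients in $K$. Applying $\sigma$ entrywise, using $\sigma(M)=M$ and that $\sigma$ permutes $\spec(M)$, the product reindexes through $\nu=\sigma(\mu)$ to yield $\sigma(E_\lambda)=E_{\sigma(\lambda)}=E_\theta$; this is the precise sense in which $E_\lambda$ and $E_\theta$ are algebraic conjugates. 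Combining this with $\sigma(\bu)=\bu$ gives $\sigma(E_\lambda\bu)=\sigma(E_\lambda)\sigma(\bu)=E_\theta\bu$. As a field automorphism acting entrywise is a bijection fixing $\zero$, it follows that $E_\lambda\bu=\zero$ if and only if $E_\theta\bu=\zero$, i.e.\ $\lambda\in\supp_\bu$ iff $\theta\in\supp_\bu$, which is exactly the claimed closure.

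I expect the main obstacle to be the interpretation of the hypothesis on the entries of $M$. The argument runs cleanly precisely when $\sigma(M)=M$, equivalently when the characteristic polynomial of $M$ has rational coefficients, and this is automatic for $A$, $L$ and $Q$ because their entries are ordinary integers. For genuinely irrational algebraic-integer entries the spectrum need not be closed under $\Q$-conjugation at all (for instance $\mathrm{diag}(\sqrt 2,0)$ has $-\sqrt 2\notin\spec(M)$), so in that generality ``algebraic conjugate'' must be read relative to the field generated by the entries of $M$, and $\sigma$ taken in the subgroup fixing that field. Pinning down this choice of base field so that $\sigma(M)=M$ holds is the one point requiring care; everything else is the formal manipulation above.
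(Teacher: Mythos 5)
Your argument is correct and is precisely the paper's proof in expanded form: the paper's one-line justification asserts exactly your key step, that conjugate eigenvalues carry conjugate idempotents (i.e.\ $\sigma(E_\lambda)=E_{\sigma(\lambda)}$, which your Lagrange-interpolation computation makes precise), and then uses rationality of $\bu=\be_a+s\be_b$ to conclude $E_\lambda\bu=\zero$ if and only if $E_\theta\bu=\zero$. Your closing caveat is well taken but does not change the comparison: the argument indeed needs $\sigma(M)=M$, which is automatic for the cases $M\in\{A,L,Q\}$ to which the paper applies the proposition (as its preceding remark signals), while for genuinely irrational algebraic-integer entries ``algebraic conjugate'' must be read relative to the field generated by the entries of $M$ --- a point the paper's terse proof leaves implicit.
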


When $s\in \R\backslash\{0\}$, $D_\bu$ and $D_\bmu$ are real density matrices and the following theorem follows from
Lemmas~2.3, 5.2 and Corollary 5.3 of \cite{Godsil2017RST}.
\begin{theorem}
	\label{Thm:RST}
	Let $\bu = \be_a + s \be_b$ and $\bmu=\be_{\alpha} + s \be_{\beta}$ be distinct $s$-pair states, for some $s\in \R\backslash\{0\}$.
	If perfect $s$-pair state transfer occurs from $\bu$ to $\bmu$ at time $\tau$, then
	\begin{enumerate}[(i)]
		\item
		\label{Thm:RST1}
		Perfect $s$-pair state transfer occurs from $\bmu$ to $\bu$ at time $\tau$.
		\item
		\label{Thm:RST2}
		Both $\bu$ and $\bmu$ are periodic at time $2\tau$.
		\item
		\label{Thm:RST3}
		If the minimum period of $\bu$ is $\tau$ then perfect $s$-pair state transfer between $\bu$ and $\bmu$ occurs at time $\frac{\tau}{2}$.
		\item
		\label{Thm:RST4}
		There is no perfect state transfer from $\bu$ to a state with real density matrix other than $D_\bu$ and $D_\bmu$.
	\end{enumerate}
\end{theorem}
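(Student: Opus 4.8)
The plan is to reduce the whole statement to the general theory of perfect state transfer between \emph{real} density matrices. The crucial observation is that when $s\in\R\backslash\{0\}$, both $\bu=\be_a+s\be_b$ and $\bmu=\be_{\alpha}+s\be_{\beta}$ are real vectors, so $D_\bu=\bu\bu^*$ and $D_\bmu=\bmu\bmu^*$ are real symmetric matrices. Since $M$ is real symmetric, the transition matrix $U(t)=e^{-\ii t M}$ satisfies $U(t)^T=U(t)$ and $U(-t)=U(t)^*$. Once these reality and symmetry facts are recorded, parts (i)--(iv) are exactly Lemmas~2.3 and 5.2 and Corollary~5.3 of \cite{Godsil2017RST} specialized to $D_\bu$ and $D_\bmu$; I would verify that the hypotheses of those results hold in our situation and then transcribe their conclusions.

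For part (\ref{Thm:RST1}) I would start from the density-matrix form $U(\tau)D_\bu U(-\tau)=D_\bmu$, rearrange to $D_\bu=U(-\tau)D_\bmu U(\tau)$, and take transposes. Using $U(\pm\tau)^T=U(\pm\tau)$ together with $D_\bu^T=D_\bu$ and $D_\bmu^T=D_\bmu$, the right-hand side becomes $U(\tau)D_\bmu U(-\tau)$, yielding $U(\tau)D_\bmu U(-\tau)=D_\bu$, which is precisely perfect $s$-pair state transfer from $\bmu$ to $\bu$ at time $\tau$. Part (\ref{Thm:RST2}) then follows by composition: $U(2\tau)D_\bu U(-2\tau)=U(\tau)\left(U(\tau)D_\bu U(-\tau)\right)U(-\tau)=U(\tau)D_\bmu U(-\tau)=D_\bu$ by part (\ref{Thm:RST1}), so $\bu$ is periodic at $2\tau$, and the symmetric computation gives the same for $\bmu$.

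Part (\ref{Thm:RST3}) is where I expect the real work. I would pass to the spectral decomposition and use that strong cospectrality (Equation~\eqref{Eqn:SC}, a necessary condition already recorded) supplies a sign $\sigma_\lambda\in\{+1,-1\}$ with $E_\lambda\bmu=\sigma_\lambda E_\lambda\bu$ for each $\lambda\in\supp_\bu=\supp_\bmu$. Transfer at time $\tau$ forces $e^{-\ii\tau\lambda}=\eta\,\sigma_\lambda$ on the support, while periodicity with minimum period $\tau$ forces $e^{-\ii\tau\lambda}$ to be constant there; the interplay of these phase constraints across the partition $\supp_{\bu,\bmu}^+\,\dot\cup\,\supp_{\bu,\bmu}^-$ is what pins the first transfer time to half the minimum period. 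Concretely, I would track the phases $e^{-\ii t\lambda}$ as $t$ runs over $[0,\tau]$ and show that the common phase separating the two parts of the support is already attained at $t=\tau/2$, so that $U(\tau/2)\bu=\eta'\bmu$ for a suitable phase $\eta'$. The main obstacle is exactly this step: it is the analogue for $s$-pair states of the period-versus-transfer-time relation underlying Theorem~\ref{Thm:RealPST}, and making it precise requires the same gcd/number-theoretic bookkeeping on the eigenvalue support rather than a one-line symmetry argument.

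Finally, for part (\ref{Thm:RST4}) I would argue monogamy. Suppose $\bu$ also had perfect state transfer to some state $\bnu$ with real density matrix and $D_\bnu\notin\{D_\bu,D_\bmu\}$, say at time $\tau'$. Composing the reversed transfer from part (\ref{Thm:RST1}) with this new transfer sends $\bmu$ to $\bnu$, and combining with periodicity at $2\tau$ forces $E_\lambda\bnu=\pm E_\lambda\bu$ on the common support with a sign pattern that, by the phase rigidity established in part (\ref{Thm:RST3}), must coincide with that of either $\bu$ or $\bmu$. Hence $D_\bnu$ equals $D_\bu$ or $D_\bmu$, a contradiction. This is Corollary~5.3 of \cite{Godsil2017RST} in the present setting, and once part (\ref{Thm:RST3}) is in hand it is a short case analysis rather than a new difficulty.
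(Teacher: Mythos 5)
Your overall reduction is exactly the paper's proof: for $s\in\R\backslash\{0\}$ the density matrices $D_\bu$ and $D_\bmu$ are real, and the paper simply cites Lemmas~2.3, 5.2 and Corollary~5.3 of \cite{Godsil2017RST}; your transpose argument for (\ref{Thm:RST1}) and the composition argument for (\ref{Thm:RST2}) are correct transcriptions of that machinery. The genuine gap is in your plan for (\ref{Thm:RST3}), in two ways. First, you impose both constraints at the same time $\tau$: transfer at $\tau$ gives $e^{-\ii\tau\lambda}=\eta\,\sigma_\lambda$ on $\supp_\bu$, while periodicity at $\tau$ gives $e^{-\ii\tau\lambda}$ constant there --- but together these force $\sigma_\lambda$ to be constant, i.e.\ $\bmu=\pm\bu$ and $D_\bmu=D_\bu$, contradicting distinctness. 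The statement re-binds the symbol: the $\tau$ of (\ref{Thm:RST3}) is the minimum period, which is in general \emph{not} the transfer time (a state can never be transferred to a distinct state at one of its own periods). Second, the ``gcd/number-theoretic bookkeeping'' you reach for is the machinery of Theorem~\ref{Thm:RealPST}, which requires $\supp_\bu$ to be closed under taking algebraic conjugates --- supplied by Proposition~\ref{Prop:Rat} only for rational $s$ and Hamiltonians with algebraic integer entries. Theorem~\ref{Thm:RST} is asserted for arbitrary $s\in\R\backslash\{0\}$ and arbitrary real symmetric Hamiltonians (compare the weighted path $P_5(w)$ of Example~\ref{Eg:P2ST}), where no such arithmetic structure is available, so that route does not go through.

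What the cited Lemma~5.2 of \cite{Godsil2017RST} actually uses is elementary and hypothesis-free: the set $S=\{t: U(t)D_\bu U(-t)=D_\bu\}$ of periods of $\bu$ is a closed subgroup of $(\R,+)$, nontrivial by your part (\ref{Thm:RST2}) and proper (if $S=\R$ then $D_\bu$ commutes with $M$, so $\bu$ is a fixed state and cannot transfer to a distinct state), hence $S=\sigma\Z$ with $\sigma$ the minimum period. Transfer to $\bmu$ at time $t$ forces $2t\in S$, so $t=k\sigma/2$; if $k$ were even then $t\in S$ and $D_\bmu=D_\bu$, so $k=2m+1$ is odd, and since $m\sigma\in S$,
\begin{equation*}
U\left(\frac{\sigma}{2}\right) D_\bu\, U\left(-\frac{\sigma}{2}\right) = U(t)\,U(-m\sigma)\, D_\bu\, U(m\sigma)\,U(-t) = U(t)\, D_\bu\, U(-t) = D_\bmu,
\end{equation*}
with no gcd computation anywhere. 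The same parity dichotomy disposes of (\ref{Thm:RST4}) directly: any perfect transfer from $\bu$ to a state with real density matrix at time $t'$ gives $2t'\in S$, so $t'=k'\sigma/2$; even $k'$ returns $D_\bu$, and any two odd multiples of $\sigma/2$ differ by an element of $S$, so every odd-time target equals the single matrix $D_\bmu$. In particular (\ref{Thm:RST4}) should not be made to depend on ``phase rigidity established in part (\ref{Thm:RST3})'' as your sketch has it; both parts fall out of the subgroup structure at once.
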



\section{Periodic $s$-pair states}
\label{Section:Periodic}

From Thereom~\ref{Thm:RST}, a  real $s$-pair state is necessarily periodic if it is involved in perfect $s$-pair state transfer in $X$. 
We focus here on the real periodic $s$-pair states.

A set $S \subset \R$ with at least two elements satisfies the {\sl ratio condition} if
for any $\lambda_h, \lambda_j, \lambda_k, \lambda_l \in S$ with $\lambda_k\neq \lambda_l$,
\begin{equation*}
\frac{\lambda_h-\lambda_j}{\lambda_k-\lambda_l} \in \Q.
\end{equation*}
If $\vert S\vert=2$, then $S$ automatically satisfies the ratio condition. 
The following theorem follows  directly from Theorem 9.1.1 in \cite{Coutinho2021}.


\begin{theorem}
	\label{Thm:RatioCond}
	Let $s\in \R\backslash\{0\}$.  The $s$-pair state $\bu = \be_a + s \be_b$ is periodic in $X$ if and only if $\supp_\bu$ satisfies the ratio condition.
\end{theorem}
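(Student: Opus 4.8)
The plan is to reduce the claim to the known characterization of periodicity for \emph{states} given by Theorem~9.1.1 of \cite{Coutinho2021}. That theorem characterizes when an arbitrary (real) state is periodic in terms of the eigenvalue support of the state, so the main task is simply to verify that the hypotheses of that theorem apply to the $s$-pair state $\bu = \be_a + s\be_b$ when $s \in \R\backslash\{0\}$, and then to state the conclusion in the present language. Since $s$ is real, the density matrix $D_\bu = \bu\bu^*$ is real, so $\bu$ is a real state and the general periodicity theorem is directly applicable with no modification.

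First I would recall that periodicity of $\bu$ at some time $\tau$ means $U(\tau)\bu = \eta\bu$ for a phase factor $\eta$, which is exactly the $\bu = \bmu$ case of Equation~(\ref{Eqn:P2T}). Projecting onto each eigenspace via $E_\lambda$, as was done in deriving Equation~(\ref{Eqn:SC}), periodicity is equivalent to $e^{-\ii\tau\lambda}$ being a single common phase (up to the overall $\eta$) across all $\lambda \in \supp_\bu$; that is, the quantities $e^{-\ii\tau(\lambda - \mu)}$ must all coincide for $\lambda,\mu \in \supp_\bu$. Next I would translate this into a number-theoretic statement: the existence of a single $\tau$ making all these exponentials equal forces the pairwise differences $\lambda - \mu$ of eigenvalues in $\supp_\bu$ to be rational multiples of one another, which is precisely the ratio condition as defined just before the statement. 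This is the content of Theorem~9.1.1 of \cite{Coutinho2021}, so the bulk of the argument is to observe that invoking it requires only that $\supp_\bu$ be the eigenvalue support of a genuine state, which it is.

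The one point needing care is the degenerate case $\vert\supp_\bu\vert = 1$, i.e.\ when $\bu$ is a fixed state; here periodicity holds trivially (indeed $U(t)\bu = e^{-\ii t\lambda}\bu$ for all $t$), and correspondingly the ratio condition is vacuously or automatically satisfied, matching the remark in the excerpt that a two-element set automatically satisfies the condition. I would note that $\vert\supp_\bu\vert \geq 1$ always, and handle the singleton case separately before applying the cited theorem to the case $\vert\supp_\bu\vert \geq 2$.

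The step I expect to be the main obstacle — more bookkeeping than genuine difficulty — is confirming that Theorem~9.1.1 of \cite{Coutinho2021} is stated for arbitrary real states (or states with real density matrix) rather than only for vertex states, so that it genuinely covers $s$-pair states without requiring that $\bu$ be a characteristic vector. Once that is verified, the equivalence is immediate and the proof amounts to the sentence ``this is a direct specialization of Theorem~9.1.1 in \cite{Coutinho2021} to the real state $\bu = \be_a + s\be_b$,'' together with the observation that $D_\bu$ is real precisely because $s \in \R$.
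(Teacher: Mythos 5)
Your proposal is correct and matches the paper's proof, which consists of exactly the same move: observing that the theorem follows directly from Theorem~9.1.1 of \cite{Coutinho2021} applied to the real state $\bu = \be_a + s\be_b$. Your additional care about the case $\vert\supp_\bu\vert = 1$ and about checking that the cited theorem applies to arbitrary real states (not just vertex states) is sound bookkeeping that the paper leaves implicit.
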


By applying Theorem~7.6.1 of \cite{Coutinho2021}, we obtain further restrictions on the eigenvalues if $\supp_\bu$ is closed under taking algebraic
conjugates.
\begin{theorem}
	\label{Thm:Quadratic}
	Let $\bu = \be_a + s \be_b$, for some $s\in \R\backslash\{0\}$.  
	If $\vert \supp_\bu \vert > 2$ and $\supp_\bu$ is closed under taking algebraic conjugates, then $\supp_\bu$ satisfies the ratio
	condition  if and only if one of the following holds.
	\begin{enumerate}[(i)]
		\item
		The elements in $\supp_\bu$ are integers.
		\item
		There is a square-free integer $\Delta > 1$ and an integer $c$ such that each element in $\supp_\bu$ is in the form
		\begin{equation*}
		\frac{c+d\sqrt{\Delta}}{2},\quad  \text{for some integer $d$.}
		\end{equation*}
	\end{enumerate}
	If either condition holds (with $\Delta=1$ in the first case), let
	\begin{equation*}
	g=\gcd \left\{\frac{\lambda - \theta}{\sqrt{\Delta}}\right\}_{\lambda, \theta \in \supp_\bu}.
	\end{equation*}
	The minimum period of $\bu$ is $\frac{2\pi}{g\sqrt{\Delta}}$,  and 
	\begin{equation*}
	U\left(\frac{2\pi}{g\sqrt{\Delta}}\right)\bu = e^{-\ii \tau \lambda}\bu, \quad \text{for $\lambda\in\supp_\bu$.}
	\end{equation*}
\end{theorem}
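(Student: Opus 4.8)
The plan is to separate the stated equivalence into its two implications and to treat the minimum-period formula as a final, independent computation. The reverse implication is the routine one, so I would dispatch it first: if every element of $\supp_\bu$ is an integer then each difference $\lambda-\theta$ is an integer and every ratio $\tfrac{\lambda_h-\lambda_j}{\lambda_k-\lambda_l}$ is rational, so the ratio condition holds; and if every element has the form $\tfrac{c+d\sqrt{\Delta}}{2}$ with $c$ and $\Delta$ fixed, then any difference equals $\tfrac{d_\lambda-d_\theta}{2}\sqrt{\Delta}$, so the factor $\sqrt{\Delta}$ cancels in any ratio of two differences and the ratio condition again holds.

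For the forward implication I would first record the two structural facts that license the use of Theorem~7.6.1 of \cite{Coutinho2021}: the eigenvalues of $A$, $L$ and $Q$ are algebraic integers, so $\supp_\bu\subseteq\spec(M)$ consists of algebraic integers, and by hypothesis $\supp_\bu$ is closed under taking algebraic conjugates. Assuming the ratio condition, Theorem~\ref{Thm:RatioCond} shows that $\bu$ is periodic, and Theorem~7.6.1 then forces the dichotomy (i)/(ii). If I had to reprove the underlying number theory, I would fix a nonzero difference $\gamma=\lambda_2-\lambda_1$; the ratio condition places every eigenvalue on the affine line $\lambda_1+\Q\gamma$, so for each Galois automorphism $\sigma$ one gets $\sigma(\gamma)=c_\sigma\gamma$ with $c_\sigma\in\Q$. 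Since $\sigma\mapsto c_\sigma$ is a homomorphism from a finite Galois group into $\Q^{\times}$, its image is finite and hence lies in $\{\pm1\}$, giving $\gamma^2\in\Q$ and so $\gamma\in\Q\sqrt{\Delta}$ for a unique square-free $\Delta$. The average $\tfrac{1}{\vert\supp_\bu\vert}\sum_{\lambda\in\supp_\bu}\lambda$ is rational, which pins $\lambda_1$ into $\Q(\sqrt{\Delta})$; since each $\lambda-\lambda_1$ is a rational multiple of $\sqrt{\Delta}$, all eigenvalues share a common rational part, and algebraic-integrality in $\Q(\sqrt{\Delta})$ yields the common-$c$ form $\tfrac{c+d\sqrt{\Delta}}{2}$. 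The hypothesis $\vert\supp_\bu\vert>2$ is what guarantees a genuinely quadratic, rather than degenerate, conclusion.

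It remains to compute the minimum period. Writing $U(t)\bu=\sum_{\lambda\in\supp_\bu}e^{-\ii t\lambda}E_\lambda\bu$, periodicity at time $t$ is equivalent to $e^{-\ii t\lambda}$ being independent of $\lambda\in\supp_\bu$, i.e.\ to $t(\lambda-\theta)\in2\pi\Z$ for every pair $\lambda,\theta\in\supp_\bu$. Substituting the form from (i)/(ii) (with $\Delta=1$ in case (i)) turns this into the requirement that $t\sqrt{\Delta}$ times each $\tfrac{\lambda-\theta}{\sqrt{\Delta}}$ lie in $2\pi\Z$. The set $\left\{\tfrac{\lambda-\theta}{\sqrt{\Delta}}\right\}_{\lambda,\theta}$ generates the subgroup $g\Z$ of $\Q$, so the divisibility condition over all pairs collapses to the single condition $g\,t\sqrt{\Delta}\in2\pi\Z$, whose least positive solution is $t=\tfrac{2\pi}{g\sqrt{\Delta}}$; at this value all phases $e^{-\ii t\lambda}$ coincide, giving $U\!\left(\tfrac{2\pi}{g\sqrt{\Delta}}\right)\bu=e^{-\ii\tau\lambda}\bu$. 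The main obstacle I anticipate is purely bookkeeping: in case (ii) the quantities $\tfrac{\lambda-\theta}{\sqrt{\Delta}}$ are half-integers, so I must interpret $g=\gcd\{\cdot\}$ as the positive generator of the subgroup of $\Q$ they generate and verify carefully that meeting the divisibility condition for that generator is equivalent to meeting it for every pairwise difference.
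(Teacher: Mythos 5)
Your proposal is correct and follows essentially the same route as the paper, which obtains the dichotomy by combining Theorem~\ref{Thm:RatioCond} with a direct application of Theorem~7.6.1 of \cite{Coutinho2021}; your contingency Galois argument (the homomorphism $\sigma \mapsto c_\sigma$ into $\Q^{\times}$ with finite image $\{\pm 1\}$, forcing $\gamma^2 \in \Q$) is precisely the standard proof of that cited theorem, and your gcd-as-subgroup-generator computation of the minimum period is sound. One small bonus you could note: in case (ii) the parity constraints on algebraic integers of $\Q(\sqrt{\Delta})$ together with the common value of $c$ force each $\frac{\lambda-\theta}{\sqrt{\Delta}} = \frac{d_\lambda - d_\theta}{2}$ to be a genuine integer, so the half-integer bookkeeping you anticipated never actually arises.
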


\begin{corollary}
	If $\supp_{\be_a} \cup \supp_{\be_b}$ satisfies the ratio condition then the $s$-pair state $\bu = \be_a + s \be_b$ is periodic for any $s\in \R\backslash\{0\}$. 
	
	\begin{proof}
		This result follows immediately from the fact that $\supp_\bu \subseteq \supp_{\be_a} \cup \supp_{\be_b}$.
	\end{proof}
\end{corollary}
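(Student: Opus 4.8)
The plan is to reduce the claim to Theorem~\ref{Thm:RatioCond} together with the containment $\supp_\bu \subseteq \supp_{\be_a} \cup \supp_{\be_b}$ recorded just before the statement. By Theorem~\ref{Thm:RatioCond}, the state $\bu = \be_a + s\be_b$ is periodic if and only if $\supp_\bu$ satisfies the ratio condition, so it suffices to show that $\supp_\bu$ inherits the ratio condition from the larger set $\supp_{\be_a} \cup \supp_{\be_b}$, which is assumed to satisfy it.

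First I would observe that the ratio condition is hereditary: it passes to subsets. Indeed, for a set $S$ with at least two elements the ratio condition is a statement quantified universally over all quadruples $\lambda_h, \lambda_j, \lambda_k, \lambda_l \in S$ with $\lambda_k \neq \lambda_l$, asserting that every such ratio $\frac{\lambda_h-\lambda_j}{\lambda_k-\lambda_l}$ lies in $\Q$. Hence if $S' \subseteq S$ and $S$ satisfies the ratio condition, then any admissible quadruple drawn from $S'$ is in particular an admissible quadruple from $S$, so its ratio is rational; thus $S'$ satisfies the ratio condition as well. Applying this with $S = \supp_{\be_a} \cup \supp_{\be_b}$ and $S' = \supp_\bu$, and then invoking Theorem~\ref{Thm:RatioCond}, gives periodicity of $\bu$ for every $s \in \R\backslash\{0\}$.

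The only point requiring care is the degenerate case $\vert \supp_\bu \vert \leq 1$, since the ratio condition is defined only for sets with at least two elements and Theorem~\ref{Thm:RatioCond} implicitly presumes this. But if $\vert \supp_\bu \vert = 1$, then $\bu$ is an eigenvector of $M$, i.e.\ a fixed state, and then $U(t)\bu = e^{-\ii t \lambda}\bu$ shows that $D_{U(t)\bu} = D_\bu$ for all $t$, so $\bu$ is periodic outright. There is no genuine obstacle in this argument; the substance is entirely the hereditary nature of the ratio condition, which renders the corollary a one-line consequence of the preceding theorem.
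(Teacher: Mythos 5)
Your proof is correct and is essentially the paper's own argument made explicit: the paper's one-line proof relies on the containment $\supp_\bu \subseteq \supp_{\be_a} \cup \supp_{\be_b}$ together with the (implicit) hereditary nature of the ratio condition and Theorem~\ref{Thm:RatioCond}, exactly as you spell out. Your extra treatment of the degenerate case $\vert\supp_\bu\vert = 1$ (where $\bu$ is a fixed state and hence trivially periodic) is a harmless refinement the paper leaves tacit.
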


In particular, if $U(\tau)\be_a = \eta \be_a$ and $U(\tau)\be_b = \eta \be_b$,
then  $\be_a + s \be_b$ is periodic, for any $s\in \R\backslash\{0\}$. 
We now give a family of graphs that have periodic pair states but no periodic vertices.
Let $X$ be a conference graph on $n$ vertices, where $\sqrt{n} \not\in \Z$.  (See Section~1.3 of \cite{BCN1989} for conference graphs.)
Note that $X$ is regular with valency $k=(n-1)/2$.  The eigenvalue support of each vertex with respect to the adjacency matrix consists of
\begin{equation*}
\lambda_1=k, \quad  \lambda_2= \frac{-1+\sqrt{n}}{2},\quad \text{and}\quad \lambda_3= \frac{-1-\sqrt{n}}{2},
\end{equation*}
which does not satisfy the ratio condition.
Hence $X$ has no periodic vertices.   Let $\bu=\be_a-\be_b$.  Since $E_{\lambda_1}=\frac{1}{n}J$, we have $E_{\lambda_1} \bu=\zero$ and $\supp_\bu = \{\lambda_2, \lambda_3\}$ satisfies the ratio
condition.  By Theorem~\ref{Thm:RatioCond}, the pair state $\bu$ is periodic.  Since $X$ is regular, $\bu$ is also periodic relative to
the Laplacian or the signless Laplacian matrix of $X$.  

The next corollary follows immediately from Theorem~\ref{Thm:Quadratic}.
\begin{corollary}
	\label{Cor:Gap}
	Let $\bu=\be_a + s \be_b$ be a real periodic $s$-pair state in $X$.   If $\supp_\bu$ is closed under taking algebraic conjugates, then $\vert \lambda - \theta \vert \geq 1$,
	for all $\lambda, \theta \in \supp_\bu$ with $\lambda\neq \theta$.
\end{corollary}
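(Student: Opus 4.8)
The plan is to combine the periodicity hypothesis with Theorem~\ref{Thm:Quadratic} and the elementary fact that eigenvalues of $A$, $L$, and $Q$ are algebraic integers. First I would observe that, since $\bu$ is a real periodic $s$-pair state (with $s\in\R\backslash\{0\}$), Theorem~\ref{Thm:RatioCond} guarantees that $\supp_\bu$ satisfies the ratio condition; this is precisely what is needed to feed into Theorem~\ref{Thm:Quadratic}. I would also record at the outset the standing fact that the Hamiltonian has integer entries, so every element of $\supp_\bu$ is an algebraic integer, and hence so is every difference $\lambda-\theta$. This closure of the algebraic integers under subtraction is the quiet workhorse that ultimately forces the gap to be at least $1$.

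Next I would split on $|\supp_\bu|$. The cases $|\supp_\bu|\le 1$ are vacuous. For $|\supp_\bu|=2$, write $\supp_\bu=\{\lambda,\theta\}$; closure under algebraic conjugation forces either both eigenvalues to be rational --- hence, being algebraic integers, ordinary integers, so that $|\lambda-\theta|\ge 1$ --- or else $\lambda$ and $\theta$ are the two conjugate roots of a monic integer quadratic $x^2-px+q$, in which case $\lambda-\theta=\pm\sqrt{p^2-4q}$ with $p^2-4q$ a positive integer, again giving $|\lambda-\theta|\ge 1$. For $|\supp_\bu|>2$, Theorem~\ref{Thm:Quadratic} applies and places $\supp_\bu$ in one of its two normal forms. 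In the integer case (i) the bound is immediate, since two distinct integers differ by at least $1$.

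The crux, and the step I expect to be the main obstacle, is case (ii), where each element has the shape $\frac{c+d\sqrt{\Delta}}{2}$ with $c$ and the square-free $\Delta>1$ fixed and $d\in\Z$. Here two distinct elements differ by $\lambda-\theta=\tfrac12(d_1-d_2)\sqrt{\Delta}$, whose magnitude \emph{a priori} could be as small as $\tfrac12\sqrt{2}<1$, so Theorem~\ref{Thm:Quadratic} alone does not suffice. The key is to invoke algebraic integrality: setting $m=\tfrac12(d_1-d_2)$, the number $\lambda-\theta=m\sqrt{\Delta}$ is a nonzero algebraic integer whose minimal polynomial is $x^2-m^2\Delta$, so $m^2\Delta\in\Z$; since $\Delta$ is square-free this forces $m\in\Z$. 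Consequently $|d_1-d_2|\ge 2$ and $|\lambda-\theta|=|m|\sqrt{\Delta}\ge\sqrt{\Delta}\ge\sqrt{2}\ge 1$, which completes the argument. I would emphasize in the write-up that it is exactly this square-free plus algebraic-integer input, rather than the form supplied by Theorem~\ref{Thm:Quadratic} on its own, that upgrades the naive bound to the sharp gap of $1$.
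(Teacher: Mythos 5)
Your proposal is correct, and it follows the same route the paper intends: periodicity gives the ratio condition via Theorem~\ref{Thm:RatioCond}, which feeds into Theorem~\ref{Thm:Quadratic}, and the gap is then read off from the two normal forms. The paper, however, dispatches the whole corollary with the single line that it ``follows immediately from Theorem~\ref{Thm:Quadratic}'', so the value of your write-up is that it makes explicit two points the paper elides. First, Theorem~\ref{Thm:Quadratic} is stated only for $\vert\supp_\bu\vert>2$, so the two-element support case needs the separate (easy) argument you give: closure under conjugation forces $\{\lambda,\theta\}$ to be either two rational algebraic integers, hence integers, or the two roots of a monic integer quadratic, with difference $\sqrt{p^2-4q}\geq 1$. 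Second, and more substantively, you are right that in case~(ii) the normal form $\frac{c+d\sqrt{\Delta}}{2}$ alone only yields $\vert\lambda-\theta\vert\geq\frac{\sqrt{\Delta}}{2}$, which can be below $1$ for $\Delta\in\{2,3\}$; the missing input is algebraic integrality. Your difference-based version (the nonzero algebraic integer $m\sqrt{\Delta}$ with $2m\in\Z$ has $m^2\Delta\in\Z$, and square-freeness of $\Delta$ forces $m\in\Z$, so $\vert d_1-d_2\vert\geq 2$) is sound and is equivalent to applying integrality to each eigenvalue separately, which pins down the parity of $d$ modulo $2$ via the ring of integers of $\Q(\sqrt{\Delta})$ and likewise gives $d_1\equiv d_2 \pmod 2$. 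One small caveat worth keeping in your preamble: the eigenvalues are algebraic integers only because the intended Hamiltonians ($A$, $L$, $Q$, or more generally matrices with algebraic integer entries as in Proposition~\ref{Prop:Rat}) have this property --- for an arbitrary real symmetric Hamiltonian the stated gap would be false (e.g.\ rescale $A$ by $\frac{1}{2}$) --- so your standing assumption is not cosmetic but exactly matches how the corollary is used in Corollary~\ref{Cor:MaxDeg}.
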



In \cite{Godsil2010}, Godsil showed that graphs with periodic vertices relative to the adjacency matrix are rare. We show a similar statement about real periodic $s$-pair states with non-negative entries. We denote the spectral radius of a square matrix $N$ by $\rho(N)$.

\begin{corollary}
	\label{Cor:MaxDeg}
	For each positive integer $k$ and positive rational number $s$, there are finitely many connected graphs $X$ with maximum valency at most $k$ such that $\be_a+s\be_b$ is periodic in $X$ with Hamiltonian $A$, $L$, or $Q$.
	\begin{proof}
		We let $M$ be $I+A$, $(k+1)I-L$ and $Q$ if the Hamiltonian  is $A$, $L$ or $Q$, respectively.
		Let the spectral decomposition of $M$ be
		\begin{equation*}
		M=\sum_{\lambda \in \spec(M)} \lambda E_\lambda.
		\end{equation*}
		The $E_\lambda$'s are also the orthogonal projection matrices onto the eigenspaces of the original Hamiltonian.
		For each case, an $s$-pair state $\bu= \be_a + s \be_b$ is periodic with $M$ being the Hamiltonian if and only if $\bu$ is periodic with the original Hamiltonian.
		
		Let $r$ be the covering radius of the set $\{a, b\}$ in $X$,  which is the smallest integer $r$ such that every vertex in $X$  is at distance at most $r$ from some vertex in $\{a,b\}$.  Since the entries in $M$ and $\bu$ are non-negative, 
		\begin{equation*}
		\{M^h\bu : h=0,\ldots, r\}
		\end{equation*}
		is a linearly independent set in  $\spn \{E_\lambda \bu : \lambda \in \supp_\bu\}$.  Hence $r < \vert \supp_\bu \vert$.
		
		Now, observe that
		\begin{equation*}
		\rho(I+A) = \rho(A)+1 \leq k+1.
		\end{equation*}
		For each vertex $v$, let $d(v)$ be the degree of $v$ in $X$.   The proof of Theorem~3.1 in \cite{shi2007Lbound} states the inequality
		\begin{equation*}
		\rho(L) \leq \rho(Q) \leq \sqrt{2} \max_{v} \sqrt{d(v)^2+\sum_{v' \sim v} d(v')} \leq 2k.
		\end{equation*}
		We conclude that $\rho\left((k+1)I-L\right) \leq k+1$ and $\rho(Q) \leq 2k$.
		
		For all three choices of the Hamiltonian, Corollary~\ref{Cor:Gap} implies 
		\begin{equation*}
		r<\vert \supp_\bu \vert \leq 2\rho(M)+1 \leq 4k+1.
		\end{equation*}
		For each positive integer $k$, there are finitely many connected graphs with maximum degree at most $k$
		and covering radius of a pair of vertices bounded above by $4k$.
	\end{proof}
\end{corollary}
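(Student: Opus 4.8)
The plan is to bound the eigenvalue support of $\bu=\be_a+s\be_b$ from below by a combinatorial quantity and from above by the spectral radius, and then conclude that any such $X$ has bounded order. First I would replace the Hamiltonian by a non-negative matrix with positive diagonal sharing the same eigenprojections. Taking $M$ to be $I+A$, $(k+1)I-L$, or $Q$ according to the choice of Hamiltonian, each $M$ is an affine shift of the original Hamiltonian, so the spectral idempotents $E_\lambda$ are unchanged and $\bu$ is periodic relative to $M$ if and only if it is periodic relative to the original. The gain is twofold: $M$ is entrywise non-negative with strictly positive diagonal (here the degree bound $d(v)\le k$ guarantees $(k+1)-d(v)\ge 1>0$), and because $s>0$ the vector $\bu$ is entrywise non-negative.

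Next I would extract the lower bound on $\vert\supp_\bu\vert$ from the geometry of $X$, in the spirit of Proposition~\ref{Prop:Support}. Let $r$ be the covering radius of $\{a,b\}$, the least integer with every vertex within distance $r$ of $a$ or $b$. Since $M\ge 0$ has positive diagonal and $\bu\ge 0$, no cancellation can occur in the products $M^h\bu$, so the set of nonzero coordinates of $M^h\bu$ is exactly $\{v:\dist(v,\{a,b\})\le h\}$, which grows strictly as $h$ runs from $0$ to $r$. This strictly nesting chain of coordinate supports forces $\{M^h\bu:0\le h\le r\}$ to be linearly independent; as these vectors all lie in $\spn\{E_\lambda\bu:\lambda\in\supp_\bu\}$, a space of dimension $\vert\supp_\bu\vert$, I obtain $r<\vert\supp_\bu\vert$.

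For the upper bound I would invoke the quadratic structure of periodic supports. Since $s\in\Q$ and $M$ has integer entries, Proposition~\ref{Prop:Rat} shows $\supp_\bu$ is closed under algebraic conjugation, so Corollary~\ref{Cor:Gap} applies and the elements of $\supp_\bu$ are pairwise at distance at least $1$. As $M$ is symmetric, $\supp_\bu\subseteq[-\rho(M),\rho(M)]$, an interval of length $2\rho(M)$, whence $\vert\supp_\bu\vert\le 2\rho(M)+1$. Using $\rho(I+A)=\rho(A)+1\le k+1$, the estimate $\rho(Q)\le 2k$ from the proof of Theorem~3.1 in \cite{shi2007Lbound}, and $\rho((k+1)I-L)\le k+1$, all three cases yield $\vert\supp_\bu\vert\le 4k+1$. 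Combining the two bounds gives $r<4k+1$, so the covering radius of $\{a,b\}$ is at most $4k$; a connected graph of maximum degree at most $k$ in which some vertex pair has covering radius at most $4k$ has at most $2(1+k+\cdots+k^{4k})$ vertices, and only finitely many such graphs exist.

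The step I expect to demand the most care is the linear independence claim underlying $r<\vert\supp_\bu\vert$. One must argue precisely that the non-negativity of both $M$ and $\bu$ prevents any cancellation, so that the coordinate supports nest \emph{strictly} all the way up to the covering radius; this is exactly where positivity of $s$ (rather than arbitrary rational $s$) and the strictly positive diagonal of $M$ are indispensable, and it is the only place where the hypotheses are used in an essential, non-formal way.
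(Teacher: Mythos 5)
Your proposal is correct and follows essentially the same route as the paper's proof: the same shift to $M\in\{I+A,(k+1)I-L,Q\}$, the same covering-radius lower bound $r<\vert\supp_\bu\vert$ via non-negativity and strictly nested coordinate supports, and the same upper bound $\vert\supp_\bu\vert\le 2\rho(M)+1\le 4k+1$ from Proposition~\ref{Prop:Rat}, Corollary~\ref{Cor:Gap}, and the cited spectral radius estimates. Your additional details (the explicit check that $(k+1)-d(v)\ge 1$, the strict nesting argument for linear independence, and the explicit vertex-count bound $2(1+k+\cdots+k^{4k})$) are correct elaborations of steps the paper leaves implicit.
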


\begin{remark}
	\quad 
	\begin{enumerate}[(a)]
		\item
		The proof of the Corollary~\ref{Cor:MaxDeg} is adaptable to any  state $\bu$ with non-negative rational entries.  In particular, it applies to periodic vertex states.
		\item
		Corollary~\ref{Cor:MaxDeg} does not hold when $s<0$.   
		Figure~\ref{Fig:Pal} gives an infinite family of trees $\left\{T_n: n \geq 0\right\}$ with maximum degree three containing periodic $s$-pair state $\be_a-\be_b$.
	\end{enumerate}
\end{remark}

We close this section with a result that will prove useful in Section~\ref{Subsection:Cn}.
\begin{proposition}
	\label{Prop:CospectralPeriodic}
	Let $a$ and $b$ be cospectral vertices in $X$.   If $(\be_a + s \be_b)$ is periodic at time $\tau$, then either $s=\pm 1$ or both $a$ and $b$ are periodic at time $\tau$.
	\begin{proof}
		Suppose $U(\tau) (\be_a + s \be_b) = \eta(\be_a + s \be_b)$ for some phase factor $\eta$, which gives
		\begin{equation*}
		\begin{cases}
		\be_a^TU(\tau)\be_a + s \be_a^T U(\tau) \be_b = \eta\\
		\be_b^TU(\tau)\be_a + s \be_b^T U(\tau) \be_b =s \eta.
		\end{cases}
		\end{equation*}
		We have $\be_a^T U(\tau)\be_a = \be_b^TU(\tau)\be_b$ because $a$ and $b$ are cospectral,  and $\be_b^T U(\tau)\be_a = \be_a^TU(\tau)\be_b$ because $U(\tau)$ is symmetric.
		The above system of equations give
		\begin{equation*}
		(s^2-1) \be_a^TU(\tau)\be_b =0.
		\end{equation*}
		Hence either $s=\pm 1$, or $\be_a^TU(\tau)\be_a = \be_b^T U(\tau) \be_b = \eta$.
	\end{proof}
\end{proposition}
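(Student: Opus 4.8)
The plan is to exploit the two defining equations that periodicity of $\be_a+s\be_b$ imposes, together with the two symmetries available to us: cospectrality of $a$ and $b$, and the symmetry of the transition matrix. First I would write $U(\tau)(\be_a+s\be_b)=\eta(\be_a+s\be_b)$ and read off its two nontrivial coordinates, namely the $a$- and $b$-entries, giving the linear system
\begin{equation*}
\begin{cases}
\be_a^T U(\tau)\be_a + s\,\be_a^T U(\tau)\be_b = \eta,\\
\be_b^T U(\tau)\be_a + s\,\be_b^T U(\tau)\be_b = s\eta.
\end{cases}
\end{equation*}
The remaining coordinates are automatically satisfied since the state is supported on $\{a,b\}$ and any perfect-return/periodicity condition is an eigenvector equation for $U(\tau)$.

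The key structural facts are that $a$ and $b$ being cospectral forces the diagonal transition amplitudes to agree, $\be_a^T U(\tau)\be_a=\be_b^T U(\tau)\be_b$, and that $U(\tau)$ symmetric forces the off-diagonal amplitudes to agree, $\be_a^T U(\tau)\be_b=\be_b^T U(\tau)\be_a$. I would substitute these equalities into the system above and eliminate $\eta$: multiplying the first equation by $s$ and comparing with the second (or subtracting appropriately) cancels the common diagonal and off-diagonal terms and leaves a single scalar identity. The arithmetic should collapse to
\begin{equation*}
(s^2-1)\,\be_a^T U(\tau)\be_b = 0,
\end{equation*}
which is the crux of the argument.

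From this factored identity the conclusion is immediate by a case split. Either $s^2-1=0$, giving $s=\pm 1$, or the off-diagonal amplitude vanishes, $\be_a^T U(\tau)\be_b=0$; feeding the latter back into the original system shows each equation reduces to $\be_a^T U(\tau)\be_a=\eta$ and $\be_b^T U(\tau)\be_b=\eta$, which says precisely that each of $a$ and $b$ returns to itself with phase $\eta$ at time $\tau$, i.e.\ both vertices are periodic at time $\tau$. I do not anticipate a genuine obstacle here; the only point requiring care is justifying that the equations extracted on the coordinates $a$ and $b$ capture periodicity completely, which follows because $\be_a+s\be_b$ and $\eta(\be_a+s\be_b)$ are both supported on $\{a,b\}$, so equality of the full vectors is equivalent to equality of these two coordinates.
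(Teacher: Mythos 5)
Your proposal is correct and follows essentially the same route as the paper's own proof: extract the $a$- and $b$-coordinates of the eigenvector equation, invoke cospectrality for the diagonal amplitudes and symmetry of $U(\tau)$ for the off-diagonal ones, and eliminate $\eta$ to reach $(s^2-1)\,\be_a^T U(\tau)\be_b=0$. The only step both you and the paper leave implicit is that $\be_a^T U(\tau)\be_a=\eta$ with $\vert\eta\vert=1$ forces $U(\tau)\be_a=\eta\be_a$ by unitarity of $U(\tau)$ (the column has unit norm, so a modulus-one entry makes all others vanish), which is immediate.
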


\section{Strongly cospectral $s$-pair states}
\label{Section:SC}

In this section, we make some observations on strongly copsectral $s$-pair states, a necessary condition for perfect $s$-pair state transfer as stated in Theorem~\ref{Thm:RealPST}.

Given the spectral decomposition $\displaystyle M=\sum_{\lambda \in \spec(M)} \lambda E_\lambda$, the real states $\bu$ and $\bmu$ are strongly cospectral with respect to $M$
implies 
\begin{equation*}
\bu^T E_\lambda \bu = \bmu^T E_\lambda \bmu
\end{equation*}
for all $j$, which is equivalent to
\begin{equation}
\label{Eqn:SCWalkM}
\bu^T M^h \bu = \bmu^T M^h \bmu,
\end{equation}
for all non-negative integer $h$.

\begin{proposition}
	\label{Prop:ASC}
	Let $\bu = \be_a + r \be_b$ and $\bmu = \be_\alpha+s \be_\beta$ be strongly cospectral $s$-pair states with respect to the adjacency matrix of $X$,
	for some real numbers $r$ and $s$.
	If $r\neq s$ then $a$ is not adjacent to $b$ and $\alpha$ is not adjacent to $\beta$ in $X$.
	\begin{proof}
		Applying (\ref{Eqn:SCWalkM}) with $h=1$ gives
		$r A_{a,b} = s A_{\alpha,\beta}$.
		If $r\neq s$ then 
		$A_{a,b} = A_{\alpha,\beta}=0$.
	\end{proof}
\end{proposition}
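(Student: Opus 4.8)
The plan is to exploit the spectral characterization of strong cospectrality through the walk matrices, exactly as encoded in Equation~(\ref{Eqn:SCWalkM}). The statement to prove is Proposition~\ref{Prop:ASC}: if $\bu=\be_a+r\be_b$ and $\bmu=\be_\alpha+s\be_\beta$ are strongly cospectral with respect to the adjacency matrix $A$, and $r\neq s$, then $a\not\sim b$ and $\alpha\not\sim\beta$.

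First I would recall that strong cospectrality of the real states $\bu$ and $\bmu$ gives $\bu^T A^h \bu = \bmu^T A^h \bmu$ for every non-negative integer $h$. The idea is that the diagonal of $A$ is zero (the graph is simple), so the $h=1$ moment isolates precisely the off-diagonal adjacency information that detects whether the two vertices in each pair are adjacent. Concretely, I would expand $\bu^T A \bu = (\be_a+r\be_b)^T A (\be_a+r\be_b) = A_{a,a} + 2r\,A_{a,b} + r^2 A_{b,b}$, and since $A_{a,a}=A_{b,b}=0$ this collapses to $2r\,A_{a,b}$; similarly $\bmu^T A\bmu = 2s\,A_{\alpha,\beta}$.

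Setting these equal yields $r\,A_{a,b} = s\,A_{\alpha,\beta}$. Now both $A_{a,b}$ and $A_{\alpha,\beta}$ lie in $\{0,1\}$ since the graph is unweighted. If $A_{a,b}=1$, then $r = s\,A_{\alpha,\beta}$, forcing $A_{\alpha,\beta}\neq 0$ (else $r=0$, contradicting that $r$ is a nonzero scalar defining an $s$-pair state), hence $A_{\alpha,\beta}=1$ and $r=s$, contradicting $r\neq s$. The symmetric argument rules out $A_{\alpha,\beta}=1$. Therefore $A_{a,b}=A_{\alpha,\beta}=0$, which is exactly the non-adjacency conclusion. This is essentially the argument the authors give, reading off the $h=1$ equation and noting that $r\neq s$ forces both adjacency entries to vanish.

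I do not anticipate a genuine obstacle here, since the proof is a one-line extraction from the $h=1$ moment. The only point requiring a small amount of care is justifying that $\bu^T A^h \bu = \bmu^T A^h \bmu$ holds for all $h$ from strong cospectrality; this follows from the spectral identity $\bu^T E_\lambda \bu = \bmu^T E_\lambda \bmu$ (a consequence of $E_\lambda\bu = \pm E_\lambda\bmu$, since squaring the sign disappears) together with the fact that powers of $A$ are real linear combinations of the idempotents $E_\lambda$, which is already recorded in the text preceding the proposition. With that in hand, specializing to $h=1$ and using the zero diagonal of the adjacency matrix completes the argument immediately.
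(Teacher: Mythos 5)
Your proposal is correct and is essentially the paper's own argument: both extract the $h=1$ case of $\bu^T A^h \bu = \bmu^T A^h \bmu$, use the zero diagonal of $A$ to reduce it to $r\,A_{a,b} = s\,A_{\alpha,\beta}$, and conclude from $A_{a,b}, A_{\alpha,\beta} \in \{0,1\}$, $r,s \neq 0$, and $r \neq s$ that both entries vanish. Your write-up merely makes explicit the case analysis and the derivation of the walk-moment identity that the paper leaves implicit.
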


\begin{remark}
	Theorem~8.1 of \cite{Chen2020PairST} rules out Laplacian perfect $s$-pair state transfer from $\be_a+\be_b$ to $\be_\alpha-\be_\beta$.  Proposition~\ref{Prop:ASC} rules out adjacency  perfect $s$-pair state transfer from 
	$\be_a+\be_b$ to $\be_\alpha-\be_\beta$  if either $a$ is adjacent to $b$ or $\alpha$ is adjacent to $\beta$.
\end{remark}

\begin{proposition}
	\label{Prop:Walks}
	Suppose $\bu=\be_a+s \be_b$ is strongly cospectral with $\bmu = \be_{\alpha} + s \be_{\beta}$ with respect to $M$,
	for some $s\in \R\backslash\{0\}$.
	Then one of the following holds.
	\begin{enumerate}[(i)]
		\item
		\label{Prop:Walks1}
		$(M^h)_{a,a}=(M^h)_{\alpha, \alpha}$, $(M^h)_{b,b}=(M^h)_{\beta,\beta}$, and $(M^h)_{a,b}=(M^h)_{\alpha,\beta}$, for all $h\geq 0$.
		\item
		\label{Prop:Walks2}
		$s$ is the root of a polynomial of degree at most two.
	\end{enumerate}
	\begin{proof}
		Equation~(\ref{Eqn:SCWalkM}) gives 
		\begin{equation*}
		s^2 \left((M^h)_{b,b}-(M^h)_{\beta,\beta}\right) + 2s\left((M^h)_{a,b}-(M^h)_{\alpha,\beta}\right) + (M^h)_{a,a}-(M^h)_{\alpha,\alpha}=0,
		\end{equation*}
		for all $h\geq 0$.
		Either Condition~(\ref{Prop:Walks1}) holds, or $s$ is a root of some polynomial of degree at most two.
	\end{proof}
\end{proposition}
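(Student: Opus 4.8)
The plan is to feed the strong cospectrality hypothesis into the walk-matrix reformulation already established just above the statement. By Equation~(\ref{Eqn:SCWalkM}), strong cospectrality of $\bu$ and $\bmu$ gives $\bu^T M^h \bu = \bmu^T M^h \bmu$ for every non-negative integer $h$. I would expand each side using $\bu = \be_a + s\be_b$ and $\bmu = \be_\alpha + s\be_\beta$, invoking the symmetry of $M$ (hence of $M^h$) to collapse the two off-diagonal terms, to obtain
\begin{equation*}
(M^h)_{a,a} + 2s (M^h)_{a,b} + s^2 (M^h)_{b,b} = (M^h)_{\alpha,\alpha} + 2s (M^h)_{\alpha,\beta} + s^2 (M^h)_{\beta,\beta}.
\end{equation*}
Moving everything to one side then yields, for each $h$, the single relation
\begin{equation*}
s^2 \left( (M^h)_{b,b} - (M^h)_{\beta,\beta} \right) + 2s \left( (M^h)_{a,b} - (M^h)_{\alpha,\beta} \right) + \left( (M^h)_{a,a} - (M^h)_{\alpha,\alpha} \right) = 0.
\end{equation*}

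The second and decisive step is to read this family of identities correctly: for each fixed $h$ the left-hand side is a polynomial in the real number $s$ of degree at most two, whose three coefficients are fixed real numbers determined by $h$ and the chosen pairs of vertices. The desired dichotomy is then a clean coefficient-vanishing case split. If for every $h \ge 0$ all three coefficients are zero, the three equalities read off directly are precisely $(M^h)_{a,a}=(M^h)_{\alpha,\alpha}$, $(M^h)_{b,b}=(M^h)_{\beta,\beta}$, and $(M^h)_{a,b}=(M^h)_{\alpha,\beta}$, which is Condition~(\ref{Prop:Walks1}). Otherwise there exists some $h$ for which at least one coefficient is non-zero, and the displayed relation then exhibits $s$ as a root of a non-zero real polynomial of degree at most two, which is Condition~(\ref{Prop:Walks2}).

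I do not expect a genuine obstacle here; the argument is essentially one expansion followed by the dichotomy. The only point I would state carefully is that a surviving non-zero polynomial really does witness $s$ as an algebraic number of degree at most two: if the quadratic coefficient $(M^h)_{b,b}-(M^h)_{\beta,\beta}$ vanishes the polynomial simply drops to degree one, and it cannot collapse to a non-zero constant since the identity forces the whole expression to equal zero. A secondary modelling remark worth flagging is that the conclusion treats $s$ as fixed while letting the coefficients vary with $h$, so that Condition~(\ref{Prop:Walks2}) is a constraint on the algebraic degree of the entanglement parameter $s$ rather than a statement about any particular $h$.
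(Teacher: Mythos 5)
Your proposal is correct and follows essentially the same route as the paper: both expand the strong-cospectrality identity $\bu^T M^h \bu = \bmu^T M^h \bmu$ from Equation~(\ref{Eqn:SCWalkM}) into the quadratic relation in $s$ and conclude via the coefficient-vanishing dichotomy. Your additional remarks — that the polynomial may drop to degree one, and that it cannot survive as a non-zero constant because the identity itself would then be violated — are details the paper leaves implicit, and you handle them correctly.
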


Theorem~8.3 of \cite{Chen2020PairST} relates Laplacian perfect pair state transfer to perfect plus state transfer with respect to $Q$ for bipartite graphs.
In the following proposition, we give a similar result for strong cospectrality of $s$-pair states in bipartite graphs.
\begin{proposition}
	\label{Prop:Bip}
	Let $X$ be a bipartite graph with bipartition $B_1$ and $B_2$.  Let $a, \alpha \in B_1$ and $b, \beta \in B_2$.
	Then the two states $(\be_a+s \be_b)$ and $(\be_\alpha+s\be_\beta)$ are strongly cospectral with respect to $Q$ if and only if
	$(\be_a-s \be_b)$ and $(\be_\alpha-s\be_\beta)$ are strongly cospectral with respect to $L$.
	\begin{proof}
		Let $P$ be the diagonal matrix where
		\begin{equation*}
		P_{v,v} =
		\begin{cases} 1 & \text{if $v\in B_1$,}\\
		-1& \text{if $v\in B_2$.}
		\end{cases}
		\end{equation*}   
		The proposition follows from
		\begin{equation*}
		U_L(t) P = PU_Q(t),
		\end{equation*}
		and $P(\be_{v_1}+s\be_{v_2})=\be_{v_1}-s\be_{v_2}$, for $v_1\in B_1$ and $v_2 \in B_2$.
	\end{proof}
\end{proposition}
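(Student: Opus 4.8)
The plan is to exploit the standard fact that for a bipartite graph the Laplacian and signless Laplacian are conjugate by a diagonal sign matrix, and then to push that conjugation through the spectral projectors so that it carries strong cospectrality for one matrix onto strong cospectrality for the other. Let $P$ be the diagonal $\pm 1$ matrix with $P_{v,v}=1$ for $v\in B_1$ and $P_{v,v}=-1$ for $v\in B_2$; this is a symmetric orthogonal involution, so $P^2=I$ and $P^{-1}=P$. Writing $Q=D+A$ and $L=D-A$ with $D$ the degree matrix, I would first check that $P$ fixes $D$ and negates $A$ (the latter because $A$ has nonzero entries only between $B_1$ and $B_2$, where $P_{u,u}P_{v,v}=-1$), so that $PQP=D-A=L$, equivalently $PLP=Q$. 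Taking matrix exponentials and using $P^2=I$ then recovers the stated identity $U_L(t)P=PU_Q(t)$.

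Next I would transfer this to the level of spectral idempotents. Let $Q=\sum_\lambda \lambda F_\lambda$ and $L=\sum_\lambda \lambda E_\lambda$ be the respective spectral decompositions. Since $L=PQP$ with $P$ orthogonal, the two matrices share the same spectrum and their projectors are related by $E_\lambda=PF_\lambda P$, equivalently $E_\lambda P=PF_\lambda$ for every $\lambda$ (each $F_\lambda$ is a fixed Lagrange polynomial in $Q$, and applying the same polynomial to $L=PQP$ conjugates it by $P$). Alternatively, one can read $E_\lambda P = P F_\lambda$ off directly from $U_L(t)P=PU_Q(t)$ by comparing coefficients of the linearly independent functions $e^{-\ii t\lambda}$.

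Finally I would translate the two states. Because $a,\alpha\in B_1$ and $b,\beta\in B_2$, the map $P$ sends $\be_a+s\be_b\mapsto \be_a-s\be_b$ and $\be_\alpha+s\be_\beta\mapsto \be_\alpha-s\be_\beta$. Setting $\bu=\be_a+s\be_b$ and $\bmu=\be_\alpha+s\be_\beta$, the identity $E_\lambda P=PF_\lambda$ gives $E_\lambda(P\bu)=PF_\lambda\bu$ and $E_\lambda(P\bmu)=PF_\lambda\bmu$ for all $\lambda$. Since $P$ is invertible, $E_\lambda(P\bu)=\pm E_\lambda(P\bmu)$ holds (with a fixed sign for each $\lambda$) if and only if $F_\lambda\bu=\pm F_\lambda\bmu$ holds with the same sign. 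By Equation~(\ref{Eqn:SC}) this is precisely the statement that $\be_a-s\be_b$ and $\be_\alpha-s\be_\beta$ are strongly cospectral with respect to $L$ if and only if $\be_a+s\be_b$ and $\be_\alpha+s\be_\beta$ are strongly cospectral with respect to $Q$.

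I do not anticipate a serious obstacle. The only points needing care are the sign bookkeeping, so that the $\pm$ in the strong cospectrality condition is matched eigenvalue-by-eigenvalue (equivalently, that the $+$ and $-$ parts of the eigenvalue support correspond under $P$), and confirming that $\spec(L)=\spec(Q)$ so that the projector indices line up correctly.
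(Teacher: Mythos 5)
Your proposal is correct and takes essentially the same approach as the paper: conjugation by the diagonal sign matrix $P$ satisfying $PQP=L$, which yields the intertwining relation $U_L(t)P=PU_Q(t)$ together with $P(\be_a+s\be_b)=\be_a-s\be_b$. The paper's proof simply leaves implicit the projector-level step $E_\lambda P=PF_\lambda$ and the eigenvalue-by-eigenvalue sign matching that you spell out.
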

Proposition~\ref{Prop:Bip} does not hold when $a, b \in B_1$.    In Figure~\ref{Fig:Bipartite}, the states 
$(\be_a+ \be_b)$ and $(\be_\alpha+ \be_\beta)$ are strongly cospectral with respect to $Q$ but
$(\be_a- \be_b)$ and $(\be_\alpha-\be_\beta)$ are not strongly cospectral with respect to $L$.

\begin{figure}[h!]
	\begin{center}
		\includegraphics[scale=0.8]{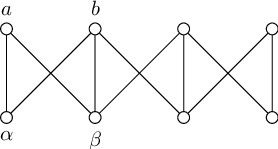}
		\caption{A counter-example of Proposition~\ref{Prop:Bip} if $a,b\in B_1$} \label{Fig:Bipartite}
	\end{center}
\end{figure}

We generalize Corollary~6.4 of \cite{GodsilSmith2024} to $s$-pair states.
\begin{proposition}
	\label{Prop:Aut}
	Let $\bu=\be_a+s \be_b$ and $\bmu=\be_\alpha+s \be_\beta$ be strongly cospectral $s$-pair states in $X$.
	If $s\neq 1$ then any automorphism of $X$ that fixes $(a,b)$ also fixes $(\alpha, \beta)$.
	If $s=1$ then any automorphism of $X$ that fixes $\{a,b\}$ also fixes $\{\alpha, \beta\}$.
	\begin{proof}
		Let $P$ be the permutation matrix of  an automorphism that fixes the state $\bu$. 
		Then $P$ commutes with each $E_j$.
		It follows from Equation~(\ref{Eqn:SC}) that 
		\begin{equation*}
		\bmu = \sum_{\lambda \in \supp_\bmu} E_\lambda \bmu = \left( \sum_{\lambda \in \supp_{\bu,\bmu}^+} E_\lambda - \sum_{\theta \in \supp_{\bu,\bmu}^-} E_\theta \right) \bu.
		\end{equation*}
		Hence
		\begin{equation*}
		P\bmu =  \left( \sum_{\lambda \in \supp_{\bu,\bmu}^+} E_\lambda - \sum_{\theta \in \supp_{\bu,\bmu}^-} E_\theta\right)P \bu,
		\end{equation*}
		and $P\bu = \bu$ if and only if $P\bmu= \bmu$.
	\end{proof}
\end{proposition}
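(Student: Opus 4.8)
The plan is to encode both the hypothesis and the conclusion as statements about whether the permutation matrix of the automorphism fixes the vectors $\bu$ and $\bmu$, and then to transfer the fixing property from $\bu$ to $\bmu$ through strong cospectrality. Let $\sigma$ be an automorphism of $X$ with permutation matrix $P$. Since $\sigma$ preserves $X$, the matrix $P$ commutes with $M$, and therefore with every spectral idempotent $E_\lambda$ in the decomposition $M=\sum_{\lambda\in\spec(M)}\lambda E_\lambda$. First I would record that the hypothesis is, in both cases, exactly the single algebraic statement $P\bu=\bu$: if $s\neq 1$ and $\sigma$ fixes the ordered pair $(a,b)$, then $P\be_a=\be_a$ and $P\be_b=\be_b$, so $P\bu=\bu$; if $s=1$ and $\sigma$ fixes $\{a,b\}$, then $P$ either fixes both $\be_a,\be_b$ or swaps them, and in either case $P(\be_a+\be_b)=\be_a+\be_b$, again giving $P\bu=\bu$.

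Next I would invoke strong cospectrality. Using Equation~(\ref{Eqn:SC}) together with the partition $\supp_\bu=\supp_{\bu,\bmu}^+\ \dot\cup\ \supp_{\bu,\bmu}^-$, I can expand
\begin{equation*}
\bmu=\sum_{\lambda\in\supp_\bmu}E_\lambda\bmu=\left(\sum_{\lambda\in\supp_{\bu,\bmu}^+}E_\lambda-\sum_{\theta\in\supp_{\bu,\bmu}^-}E_\theta\right)\bu.
\end{equation*}
Applying $P$ and commuting it past each idempotent gives
\begin{equation*}
P\bmu=\left(\sum_{\lambda\in\supp_{\bu,\bmu}^+}E_\lambda-\sum_{\theta\in\supp_{\bu,\bmu}^-}E_\theta\right)P\bu,
\end{equation*}
so $P\bu=\bu$ forces $P\bmu=\bmu$; running the same computation with the roles of $\bu$ and $\bmu$ reversed yields the converse, establishing the equivalence $P\bu=\bu\iff P\bmu=\bmu$.

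Finally I would translate $P\bmu=\bmu$ back into a combinatorial statement about $(\alpha,\beta)$. Writing $P\bmu=\be_{\sigma(\alpha)}+s\be_{\sigma(\beta)}$ and comparing with $\be_\alpha+s\be_\beta$ in the standard basis, the outcome hinges on whether the two coefficients $1$ and $s$ coincide. When $s\neq 1$ (and $s\neq 0$), the only solution is $\sigma(\alpha)=\alpha$ and $\sigma(\beta)=\beta$, i.e.\ $\sigma$ fixes the ordered pair $(\alpha,\beta)$; the swap $\sigma(\alpha)=\beta,\ \sigma(\beta)=\alpha$ is excluded precisely because it would require $s=1$. When $s=1$, the equality forces only $\{\sigma(\alpha),\sigma(\beta)\}=\{\alpha,\beta\}$, i.e.\ $\sigma$ fixes the set $\{\alpha,\beta\}$. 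The only delicate point is this last bookkeeping step: the genuine difference between the ordered and unordered conclusions comes entirely from whether the coefficient distinctness $1\neq s$ permits matching basis vectors individually. Everything upstream reduces to the commutation of $P$ with the spectral idempotents, which is immediate from $\sigma$ being an automorphism.
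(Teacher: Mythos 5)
Your proposal is correct and follows essentially the same route as the paper's proof: encode the fixing hypothesis as $P\bu=\bu$, use the commutation of $P$ with the spectral idempotents together with the strong-cospectrality expansion $\bmu=\bigl(\sum_{\lambda\in\supp_{\bu,\bmu}^+}E_\lambda-\sum_{\theta\in\supp_{\bu,\bmu}^-}E_\theta\bigr)\bu$ to get $P\bu=\bu\iff P\bmu=\bmu$. The only difference is that you explicitly carry out the translation steps at both ends (hypothesis to $P\bu=\bu$, and $P\bmu=\bmu$ back to ordered versus unordered fixing via the coefficient comparison $1\neq s$), which the paper leaves implicit; this is a welcome completion rather than a departure.
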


\section{More $s$-pair state transfer}
\label{Section:Vto2}

In this section, we use existing vertex state transfer and $s$-pair state transfer to build more examples of $s$-pair state transfer.


\subsection{Fractional revival}
\label{Subsection:FR}

{\sl Fractional revival} occurs from vertex $a$ to vertex $b$ at time $\tau$ if
\begin{equation}
\label{Eqn:FR}
U(\tau) \be_a = \eta \be_a + \varpi \be_b
\end{equation}
for some complex numbers $\eta$ and $\varpi \neq 0$ satisfying $\vert\eta\vert^2+\vert \varpi \vert^2=1$.
(Vertex) perfect state transfer occurs between $a$ and $b$ if $\eta=0$, so it is a special case of fractional revival. For more information about fractional revival, we refer the reader to \cite{Chan2019}.

\begin{proposition}
	\label{Prop:P2ST-PST}
	Let $a$ and $b$ be distinct vertices in $X$.
	Let $s\in\R\backslash\{-1,0, 1\}$.  Perfect $s$-pair state transfer occurs between $\bu=\be_a+s \be_b$ and $\bmu=\be_b+ s\be_a$  if and only if
	(vertex) perfect state transfer occurs between $a$ and $b$ at the same time.
	\begin{proof}
		We first show that $\bu$ and $\bmu$ are strongly cospectral if and only if $\be_a$ and $\be_b$ are strongly cospectral.
		The states $\bu$ and $\bmu$ are strongly cospectral if and only if there exists $\sigma_\lambda \in \{\pm1\}$, for $\lambda \in \supp_\bu$,
		such that
		\begin{equation*}
		E_\lambda \bu = \sigma_\lambda E_\lambda \bmu,
		\end{equation*}
		which is equivalent to
		\begin{equation*}
		(1-s\sigma_\lambda) E_\lambda \be_a = (1-s\sigma_\lambda) \sigma_\lambda E_\lambda \be_b.
		\end{equation*}
		As $s\neq \pm 1$, the above equations hold if and only if $\be_a$ and $\be_b$ are strongly cospectral and
		\begin{equation*}
		\supp_{\be_a, \be_b}^+ = \supp_{\bu, \bmu}^+ 
		\quad\text{and}\quad 
		\supp_{\be_a, \be_b}^- = \supp_{\bu, \bmu}^-.
		\end{equation*}
		These equalities imply that $\bu$ and $\bmu$ satisfy the conditions in Theorem~\ref{Thm:RealPST} if and only if $\be_a$ and $\be_b$ satisfy 
		the conditions in Theorem~\ref{Thm:RealPST}.
	\end{proof}
\end{proposition}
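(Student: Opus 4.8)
The plan is to handle the two implications separately, with the reverse direction (vertex transfer $\Rightarrow$ $s$-pair transfer) being essentially free and all the real content living in the forward direction. For the easy direction, suppose vertex perfect state transfer occurs from $a$ to $b$ at time $\tau$, say $U(\tau)\be_a = \gamma \be_b$. Since the Hamiltonian is real symmetric, $U(\tau)$ is symmetric, so $U(\tau)\be_b = \gamma\be_a$ as well; linearity then gives $U(\tau)\bu = U(\tau)(\be_a+s\be_b) = \gamma(\be_b + s\be_a) = \gamma\bmu$, which is $s$-pair state transfer from $\bu$ to $\bmu$ at the \emph{same} time $\tau$ (and with the same phase $\gamma$). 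Note this direction does not even use $s\neq\pm1$.

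For the forward direction I would route everything through strong cospectrality, which is a necessary condition for perfect $s$-pair state transfer (established after Equation~(\ref{Eqn:SC})). The key step is to show that, because $s\neq\pm1$, the pair $\bu,\bmu$ is strongly cospectral if and only if the vertex states $\be_a,\be_b$ are, \emph{and} that the two notions carry the same eigenvalue support with the same sign partition. Starting from $E_\lambda\bu = \sigma_\lambda E_\lambda\bmu$ with $\sigma_\lambda\in\{\pm1\}$ and expanding $\bu=\be_a+s\be_b$, $\bmu=s\be_a+\be_b$, I would rearrange to
\begin{equation*}
(1-s\sigma_\lambda)\,E_\lambda\be_a = \sigma_\lambda(1-s\sigma_\lambda)\,E_\lambda\be_b .
\end{equation*}
Since $\sigma_\lambda=\pm1$ and $s\neq\pm1$, the scalar $1-s\sigma_\lambda$ is nonzero and may be cancelled, leaving $E_\lambda\be_a = \sigma_\lambda E_\lambda\be_b$, which is exactly strong cospectrality of $\be_a$ and $\be_b$. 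The same computation (run in reverse, and noting $1+s\sigma_\lambda\neq0$) shows $\supp_\bu = \supp_{\be_a}=\supp_{\be_b}$ together with $\supp_{\bu,\bmu}^{\pm} = \supp_{\be_a,\be_b}^{\pm}$.

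With the supports and their sign partitions identified, I would finish in one of two ways. The cleanest is to invoke Theorem~\ref{Thm:RealPST}: its conditions (ii) and (iii) are phrased purely in terms of $\supp_\bu$ and the partition $\supp_{\bu,\bmu}^+\,\dot\cup\,\supp_{\bu,\bmu}^-$, which now agree for the two pairs, so $\bu,\bmu$ satisfy the characterization precisely when $\be_a,\be_b$ do, and the minimum time $\frac{\pi}{g\sqrt{\Delta}}$ is literally the same number. Alternatively, and avoiding the algebraic-conjugacy hypothesis of Theorem~\ref{Thm:RealPST}, I would argue directly: projecting $U(\tau)\bu=\eta\bmu$ onto $E_\lambda$ and substituting $E_\lambda\be_b = \sigma_\lambda E_\lambda\be_a$ yields $e^{-\ii\tau\lambda}(1+s\sigma_\lambda) = \eta(s+\sigma_\lambda)$ on the support; since $s+\sigma_\lambda = \sigma_\lambda(1+s\sigma_\lambda)$ and $1+s\sigma_\lambda\neq0$, this collapses to $e^{-\ii\tau\lambda} = \eta\sigma_\lambda$ for every $\lambda\in\supp_{\be_a}$, which is exactly the eigenvalue condition forcing $U(\tau)\be_a = \eta\be_b$ at the same $\tau$.

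The main obstacle---really the only subtle point---is the strong-cospectrality reduction, and it is precisely here that the hypothesis $s\neq\pm1$ is indispensable: it is what makes the scalar factor $1\mp s$ nonzero, so that the vector identity can be cancelled down to the vertex statement. I would be careful to record that the argument gives \emph{equality} of supports (not merely that strong cospectrality transfers between the two pairs) and that the $+/-$ classes match, since both the characterization in Theorem~\ref{Thm:RealPST} and the transfer time depend on the full support and its partition; tracking this is what guarantees the transfer occurs at the \emph{same} time in both settings.
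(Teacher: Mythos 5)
Your proof is correct, and its core is exactly the paper's argument: the same cancellation of the nonzero scalar $1-s\sigma_\lambda$ (valid precisely because $s\neq\pm1$) to show $\bu,\bmu$ are strongly cospectral if and only if $\be_a,\be_b$ are, with $\supp_{\bu,\bmu}^{\pm}=\supp_{\be_a,\be_b}^{\pm}$, followed by the observation that the conditions of Theorem~\ref{Thm:RealPST} then hold for one pair exactly when they hold for the other. Your alternative finish, however, is a genuine improvement worth noting: projecting $U(\tau)\bu=\eta\bmu$ onto each $E_\lambda$ and collapsing $e^{-\ii\tau\lambda}(1+s\sigma_\lambda)=\eta(s+\sigma_\lambda)$ to $e^{-\ii\tau\lambda}=\eta\sigma_\lambda$ gives the forward implication directly, time by time, for any real symmetric Hamiltonian, whereas the paper's route through Theorem~\ref{Thm:RealPST} tacitly needs $\supp_\bu$ closed under algebraic conjugates (a hypothesis of that theorem, guaranteed by Proposition~\ref{Prop:Rat} for rational $s$ and integral $M$ but not stated in the proposition itself). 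Your version also makes the easy direction explicit, including the observation that it needs no restriction on $s$; together with the direct forward argument this yields the ``at the same time'' claim with the same phase factor, which the paper only obtains implicitly through the minimum-time formula $\frac{\pi}{g\sqrt{\Delta}}$.
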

In Theorem~\ref{Thm:CnPST}, we see that $C_8$ has perfect $s$-pair state transfer with $s=\pm 1$ but $C_8$ does not admit (vertex) perfect state transfer.

\begin{proposition}
	\label{Prop:PSTPeriodic}
	Let the entries of $M$ be  algebraic integers.
	Suppose (vertex) perfect state transfer occurs between $a$ and $\alpha$ at time $\tau$ and $v$ is periodic at $\tau$.
	Then perfect $s$-pair state transfer occurs between $\be_a+s\be_v$ to $\be_\alpha+s\be_v$ at time $\tau$, for $s\in \Q\backslash\{0\}$, if and only if 
	there exist $\lambda \in \supp_{\be_a,\be_\alpha}^+$ and $\lambda'\in \supp_{\be_v}$ such that
	\begin{equation*}
	\lambda \tau = \lambda' \tau \pmod{2\pi}.
	\end{equation*}
	\begin{proof}
		Let $\lambda \in \supp_{\be_a,\be_\alpha}^+$ and $\lambda' \in \supp_{\be_v}$.  By Proposition~\ref{Prop:Rat}, Theorems~\ref{Thm:RealPST} and \ref{Thm:Quadratic},
		we have
		\begin{equation*}
		U(\tau) \be_a = e^{-\ii \tau \lambda}\be_\alpha 
		\quad \text{and}\quad 
		U(\tau) \be_v = e^{-\ii \tau \lambda'}\be_v.
		\end{equation*}
		Then
		\begin{equation*}
		U(\tau)\left(\be_a + s \be_v\right) = e^{-\ii \tau \lambda}\be_\alpha +s e^{-\ii \tau \lambda'}\be_v = \eta\left(\be_\alpha +s \be_v\right)
		\end{equation*}
		if and only if $e^{-\ii \tau \lambda}=e^{-\ii \tau \lambda'}=\eta$.
	\end{proof}
\end{proposition}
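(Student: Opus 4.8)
The plan is to compute $U(\tau)(\be_a+s\be_v)$ explicitly from the two hypotheses and compare the result with $\eta(\be_\alpha+s\be_v)$. Because $s\in\Q\backslash\{0\}$ and the entries of $M$ are algebraic integers, both $\bu=\be_a+s\be_v$ and $\bmu=\be_\alpha+s\be_v$ are real states whose eigenvalue supports are closed under algebraic conjugation by Proposition~\ref{Prop:Rat}; this is exactly the hypothesis needed to invoke the characterizations of Theorems~\ref{Thm:RealPST} and~\ref{Thm:Quadratic} on the underlying vertex states $\be_a,\be_\alpha,\be_v$.

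First I would extract a single global phase from each hypothesis. Vertex perfect state transfer from $a$ to $\alpha$ at time $\tau$, combined with Theorem~\ref{Thm:RealPST} applied to the real states $\be_a$ and $\be_\alpha$, yields $U(\tau)\be_a=e^{-\ii\tau\lambda}\be_\alpha$, and the even-gap condition in part~(\ref{Thm:RealPST3}) guarantees that the scalar $e^{-\ii\tau\lambda}$ is the same for every $\lambda\in\supp_{\be_a,\be_\alpha}^+$. Likewise, periodicity of $\be_v$ at $\tau$ together with Theorem~\ref{Thm:Quadratic} gives $U(\tau)\be_v=e^{-\ii\tau\lambda'}\be_v$, with $e^{-\ii\tau\lambda'}$ independent of the choice of $\lambda'\in\supp_{\be_v}$. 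Thus each of the two scalars is well defined, so the quantifier ``there exist $\lambda,\lambda'$'' in the statement collapses to a single comparison of two fixed phases.

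With these in hand, linearity of $U(\tau)$ gives
\begin{equation*}
U(\tau)(\be_a+s\be_v)=e^{-\ii\tau\lambda}\be_\alpha+s\,e^{-\ii\tau\lambda'}\be_v.
\end{equation*}
Since $\alpha\neq v$ (otherwise $\bmu$ would be a scalar multiple of a vertex state rather than an $s$-pair state) the vectors $\be_\alpha$ and $\be_v$ are linearly independent, and $s\neq0$, so the right-hand side equals $\eta(\be_\alpha+s\be_v)$ if and only if $\eta=e^{-\ii\tau\lambda}$ and $\eta=e^{-\ii\tau\lambda'}$ simultaneously. Equating the two expressions for $\eta$ gives $e^{-\ii\tau\lambda}=e^{-\ii\tau\lambda'}$, which is precisely $\lambda\tau\equiv\lambda'\tau\pmod{2\pi}$, establishing both directions at once.

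The step I expect to be the crux is the second one: verifying that the transfer and periodicity phases are genuinely constant across their supports, so that a single representative pair $(\lambda,\lambda')$ controls the whole computation. This is not automatic from the bare definitions of perfect state transfer and periodicity; it relies on the number-theoretic structure of the supports furnished by Theorems~\ref{Thm:RealPST} and~\ref{Thm:Quadratic}, and the requirement that $M$ have algebraic-integer entries with $s$ rational enters exactly here, via Proposition~\ref{Prop:Rat}. Once that uniformity is secured, the remainder is the routine coefficient-matching above.
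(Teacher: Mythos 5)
Your proof is correct and follows essentially the same route as the paper's: extract the phases $e^{-\ii\tau\lambda}$ and $e^{-\ii\tau\lambda'}$ from the transfer and periodicity hypotheses via Proposition~\ref{Prop:Rat} and Theorems~\ref{Thm:RealPST} and~\ref{Thm:Quadratic}, then match coefficients on the linearly independent vectors $\be_\alpha$ and $\be_v$. One small remark: the constancy of the phase across $\supp_{\be_a,\be_\alpha}^+$ (resp.\ $\supp_{\be_v}$) is in fact elementary---apply $E_\lambda$ to $U(\tau)\be_a=\eta\be_\alpha$ and use $E_\lambda\be_a=E_\lambda\be_\alpha\neq\zero$---so the number-theoretic structure you identify as the crux is not strictly needed for that uniformity, though the paper invokes the same results at the same point.
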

Suppose the Hamiltonian $M$ is non-negative and irreducible, for example $M=A$ or $Q$. 
Then the condition on the eigenvalue supports in Proposition~\ref{Prop:PSTPeriodic} always holds with $\lambda=\lambda'$
being the Perron-Frobenius-eigenvalue of $M$.

\begin{example}
	\label{Ex:P3}
	Consider the path $P_3$ on vertex set $\{1,2,3\}$ where the vertex $2$ has degree two.
	At time $\tau=\frac{\pi}{\sqrt{2}}$,
	\begin{equation*}
	U_A(\tau)=
	-\begin{pmatrix} 0&0&1\\0&1&0\\1&0&0\end{pmatrix}
	\end{equation*}
	For any positive integer $m$, the Cartesian power $P_3^{\Box m}$ has transition matrix $U_A(t)^{\otimes m}$.
	Thus $P_3^{\Box m}$ has perfect state transfer from $a=(1,1,\ldots, 1)$ to $\alpha=(3,3,\ldots,3)$ at time $\tau$ and is periodic at $v=(2,2,\ldots,2)$ at the
	same time. As $m\sqrt{2} \in \supp_{\be_a,\be_\alpha}^+ \cap \supp_{\be_v}$, Proposition~\ref{Prop:PSTPeriodic} implies that perfect $s$-pair state transfer
	occurs from $(\be_a+s\be_v)$ to $(\be_\alpha+s\be_v)$ at time $\tau$ for all $s\in \R\backslash\{0\}$.
\end{example}

Suppose that fractional revival occurs from $a$ to $b$, and both $\eta$ and $\varpi$ in Equation~(\ref{Eqn:FR}) are non-zero.   As $U(\tau)$ is unitary, we have
\begin{equation}
\label{Eqn:P2TFR}
U(\tau) \be_b = \varpi \be_a - \varpi  \overline{\left(\frac{\eta}{\varpi}\right)} \be_b,
\end{equation}
and
\begin{equation*}
U(\tau) \left(\be_a + s \be_b\right)=(\eta + s\varpi) \be_a + \varpi   \left(1 -s \overline{\left(\frac{\eta}{\varpi}\right)} \right)\be_b.
\end{equation*}
The $s$-pair state on the right-hand side is a scalar multiple of $(\be_a+s\be_b)$ if and only if $s$ is a root
of the quadratic
\begin{equation*}
x^2+\left(\frac{\eta}{\varpi}+\overline{\left(\frac{\eta}{\varpi}\right)}\right)x-1=0.
\end{equation*}
If $s$ is a root of the above quadratic, then the $s$-pair state $(\be_a+s\be_b)$ is periodic at time $\tau$.
Otherwise, perfect $s$-pair state transfer occurs from $\be_a+s\be_b$ to the state on the right-hand side of Equation~(\ref{Eqn:P2TFR}),
but this state might not be a real state.

\begin{example}
	\label{Ex:FR}
	Let $X:=X(12,8,24)$ denote the graph obtained from identifying 8 leaves from $K_{1,20}$ and $K_{1,32}$.
	Let $a$ and $b$ denote the vertices of degree $20$ and $32$, respectively, see Figure~\ref{Fig:FR}.
	Let $A$ be the adjacency matrix of $X$.
	Fractional revival from $\be_a$ to $\be_b$ occurs at time $\tau:=\frac{\pi}{2}$ \cite{Zhang2024}, and
	\begin{equation*}
	U_A(\tau) \be_a = \frac{3}{5} \be_a - \frac{4}{5} \be_b
	\quad \text{and}\quad
	U_A(\tau)  \be_b = -\frac{4}{5} \be_a - \frac{3}{5} \be_b.
	\end{equation*}
	Then
	\begin{equation*}
	U_A(\tau)  (\be_a+s \be_b) = \frac{(3-4s)}{5}\be_a+\frac{(-4-3s)}{5}\be_b.
	\end{equation*}
	When $s\in \left\{2, -\frac{1}{2}\right\}$, the state $\be_a+s \be_b$ is periodic.
	
	\begin{figure}[h!]
		\begin{center}
			\includegraphics[scale=0.8]{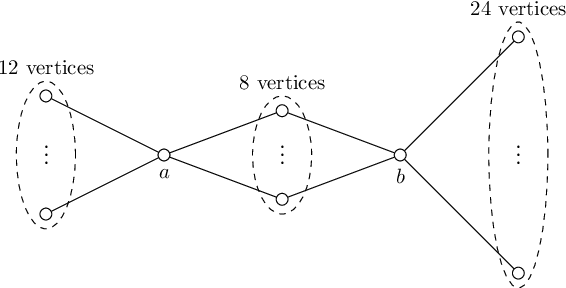}
			\caption{$X(12,8,24)$} \label{Fig:FR}
		\end{center}
	\end{figure}
	
	Let $Y:=X \Box K_2$ and $B$ be the adjacency matrix of $Y$.  Since the transition matrix of $K_2$ with adjacency Hamiltonian at time $\tau$ is 
	\begin{equation*}
	\begin{bmatrix}0 & -\ii \\ -\ii & 0\end{bmatrix},
	\end{equation*}
	we get
	\begin{equation}
	\label{Eqn:Star}
	U_{B}(\tau) \left(\be_{(a,0)} + s \be_{(b,0)} \right)=  \frac{(-3+4s)\ii}{5}\be_{(a,1)}+\frac{(4+3s)\ii}{5}\be_{(b,1)}.
	\end{equation}
	When $s\in \left\{2, -\frac{1}{2}\right\}$, perfect $s$-pair state transfer occurs in $Y$ from the $s$-pair state $\left(\be_{(a,0)} + s \be_{(b,0)}\right)$ to  $\left(\be_{(a,1)} + s \be_{(b,1)}\right)$ at time $\frac{\pi}{2}$.
\end{example}


\subsection{Quotient}
\label{Subsection:Quotient}

Let $M$ be a real and symmetric Hamiltonian associated with a graph $X$ on $n$ vertices.
Let $P$ be an $n\times m$ real matrix satisfying
\begin{equation}
\label{Eqn:Quo}
P^TP=I 
\quad \text{and}\quad
MP=PB,
\end{equation}
for some $m\times m$ real symmetric matrix $B$ called a quotient matrix. Then the column space of $P$ is $M$-invariant
and 
\begin{equation}
\label{Eqn:Comm}
\left(PP^T\right)M = M\left(PP^T\right).
\end{equation}
It follows from the conditions in (\ref{Eqn:Quo}) and (\ref{Eqn:Comm}) that 
\begin{equation*}
PU_B(t) = U_M(t) P.
\end{equation*}
As a result, if (vertex) perfect state transfer with Hamiltonian $B$ occurs at time $\tau$ from $\be_h$ to $\be_\ell$,
then $U_M(\tau) \left(P\be_h\right) = \eta (P\be_\ell)$.
We have perfect $s$-pair state transfer if each of $P\be_h$ and $P\be_\ell$ has two non-zero entries.

\begin{example}
	\label{Ex:C8P}
	For (\ref{Ex:C8Pa}) and (\ref{Ex:C8Pb}), let $M$ be the adjacency matrix of $C_8$, viewed as the Cayley graph on $\Z_8$ with connection set $\{-1,1\}$. 
	\begin{enumerate}[(a)]
		\item
		\label{Ex:C8Pa}
		Let $P$ be the $8\times 3$ matrix given by
		\begin{equation*}
		P^T = 
		\begin{bmatrix}
		\frac{1}{\sqrt{2}}&0&0&0&\frac{1}{\sqrt{2}}&0&0&0\\
		0&\frac{1}{2}&0&\frac{1}{2}&0&\frac{1}{2}&0&\frac{1}{2}\\
		0&0&\frac{1}{\sqrt{2}}&0&0&0&\frac{1}{\sqrt{2}}&0
		\end{bmatrix},
		\end{equation*}
		which satisfies (\ref{Eqn:Quo}).  The quotient matrix is
		\begin{equation*}
		B = \sqrt{2} \begin{bmatrix}0&1&0\\1&0&1\\0&1&0\end{bmatrix}
		\end{equation*}
		which gives
		\begin{equation*}
		U_B\left(\frac{\pi}{2}\right)\begin{bmatrix}1\\0\\0\end{bmatrix} = -\begin{bmatrix}0\\0\\1\end{bmatrix}.
		\end{equation*}
		We conclude that 
		\begin{equation*}
		U_M\left(\frac{\pi}{2}\right) \left(\be_0+\be_4\right) = - \left(\be_2+\be_6\right).
		\end{equation*}
		Thus, perfect $s$-pair state transfer occurs in $C_8$ between $\left(\be_0+\be_4\right)$ and $\left(\be_2+\be_6\right)$ at time $\frac{\pi}{2}$.
		
		\item
		\label{Ex:C8Pb}
		Let $P$ be the $8\times 4$ matrix given by
		\begin{equation*}
		P^T = 
		\begin{bmatrix}
		\frac{1}{\sqrt{2}}&0&-\frac{1}{\sqrt{2}}&0&0&0&0&0\\
		0&0&0&\frac{1}{\sqrt{2}}&0&0&0&-\frac{1}{\sqrt{2}}\\
		0&0&0&0&\frac{1}{\sqrt{2}}&0&-\frac{1}{\sqrt{2}}&0\\
		0&\frac{1}{2}&0&-\frac{1}{2}&0&\frac{1}{2}&0&-\frac{1}{2}
		\end{bmatrix},
		\end{equation*}
		which satisfies (\ref{Eqn:Quo}).  The quotient matrix is
		\begin{equation*}
		B =\begin{bmatrix}0&-1&0&0\\-1&0&1&0\\0&1&0&0\\0&0&0&0\end{bmatrix}
		\end{equation*}
		which gives
		\begin{equation*}
		U_B\left(\frac{\pi}{\sqrt{2}}\right)\begin{bmatrix}1\\0\\0\\0\end{bmatrix} = \begin{bmatrix}0\\0\\1\\0\end{bmatrix}.
		\end{equation*}
		We conclude that 
		\begin{equation*}
		U_M\left(\frac{\pi}{\sqrt{2}}\right) \left(\be_0-\be_2\right) =  \left(\be_4-\be_6\right).
		\end{equation*}
		Thus, perfect $s$-pair state transfer occurs in $C_8$ between $\left(\be_0-\be_2\right)$ and $\left(\be_4-\be_6\right)$ at time $\frac{\pi}{\sqrt{2}}$.
		
		\item
		\label{Ex:C8Pc}
		Let $M$ be the adjacency matrix of the graph $X(m)$, $m\geq 1$, given in Figure~\ref{Fig:Quo}. 
		Let $P$ be a normalized characteristic matrix corresponding to the equitable partition indicated in the figure.  (Please see \cite{Coutinho2021} for more information about equitable partitions.)
		Then the quotient matrix $B$ is the $\sqrt{m} A(C_8)$, where $A(C_8)$ is the adjacency matrix of $C_8$.
		It follows from the above examples that $X(m)$ has perfect $s$-pair state transfer between $\left(\be_0+\be_4\right)$ and $\left(\be_2+\be_6\right)$ at time $\frac{\pi}{2\sqrt{m}}$,
		and between $\left(\be_0-\be_2\right)$ and $\left(\be_4-\be_6\right)$ at time $\frac{\pi}{\sqrt{2m}}$.
		\begin{figure}[h!]
			\begin{center}
				\includegraphics[scale=0.8]{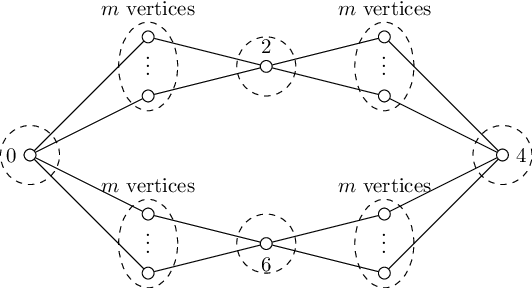}
				\caption{$X(m)$ with an equitable partition} \label{Fig:Quo}
			\end{center}
		\end{figure}
		
	\end{enumerate}
\end{example}


\subsection{Transitivity}
\label{Subsection:Transitivity}

We extend Theorem~6.1 of \cite{Chen2020PairST} to $s$-pair states.  To simplify the proof, we represent a real $s$-pair state $\bu=\be_a+s\be_b$ using its density matrix
\begin{equation*}
D_{\bu} = \frac{1}{(1+s^2)} \bu \bu^T.
\end{equation*}
\ignore{
	Perfect $s$-pair state transfer from $\bu$ to $\bmu$ occurs at time $\tau$ if and only if
	\begin{equation*}
	U(\tau) D_{\bu} U(-\tau) = D_{\bmu}.
	\end{equation*}
}

\begin{theorem}
	\label{Thm:Transitive}
	Suppose perfect $s$-pair state transfer occurs in graph $X$ from $\bu=\be_a + r \be_b$ to $\bmu=\be_{\alpha} + r \be_{\beta}$ at time $\tau$, and from $\bv = \be_b + s \be_c$ to $\bnu = \be_{\beta} + s \be_{\gamma}$ at time $\tau$.
	Then perfect $s$-pair state transfer occurs from $\bw=\be_a - rs \be_c$ to $\bomega=\be_\alpha - rs \be_\gamma$ at time $\tau$.
	
	\begin{proof}
		First observe that 
		\begin{equation*}
		D_\bw= \frac{1+r^2}{(1+r^2s^2)} D_\bu +  \frac{r^2(1+s^2)}{(1+r^2s^2)}D_\bv -   \frac{(1+r^2)(1+s^2)}{(1+r^2s^2)}\left(D_\bu D_\bv + D_\bv D_\bu\right),
		\end{equation*}
		and similarly,
		\begin{equation*}
		D_\bomega= \frac{1+r^2}{(1+r^2s^2)}D_\bmu +  \frac{r^2(1+s^2)}{(1+r^2s^2)}D_\bnu -   \frac{(1+r^2)(1+s^2)}{(1+r^2s^2)}\left(D_\bmu D_\bnu + D_\bnu D_\bmu \right).
		\end{equation*}
		
		The result follows from
		$U(\tau) D_\bu U(-\tau)=D_\bmu$, $U(\tau) D_\bv U(-\tau)=D_\bnu$, 
		\begin{equation*}
		U(\tau) D_\bu D_\bv U(-\tau) = \left[U(\tau) D_\bu U(-\tau)\right] \left[ U(\tau) D_\bv U(-\tau) \right]= D_\bmu D_\bnu,
		\end{equation*}
		and  $U(\tau) D_\bv D_\bu U(-\tau) = D_\bnu D_\bmu$.
	\end{proof}
\end{theorem}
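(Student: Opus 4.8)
The plan is to represent each $s$-pair state by its density matrix and to exploit that conjugation by the unitary $U(\tau)$ is both linear and multiplicative on $n\times n$ matrices: since $U(-\tau)=U(\tau)^{-1}$, we have $U(\tau)XYU(-\tau)=\left[U(\tau)XU(-\tau)\right]\left[U(\tau)YU(-\tau)\right]$ for all $X,Y$. Under this reading the two hypotheses are $U(\tau)D_\bu U(-\tau)=D_\bmu$ and $U(\tau)D_\bv U(-\tau)=D_\bnu$, and the goal becomes the single matrix identity $U(\tau)D_\bw U(-\tau)=D_\bomega$.

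The central step is to express $D_\bw$ as a fixed real-linear combination of $D_\bu$, $D_\bv$, and the anticommutator $D_\bu D_\bv+D_\bv D_\bu$, with coefficients depending only on $r$ and $s$. To find them I would expand $\bw\bw^T=(\be_a-rs\be_c)(\be_a-rs\be_c)^T$ and compare it, entry-block by entry-block, against the outer-product expansions of $D_\bu=\tfrac{1}{1+r^2}\bu\bu^T$ and $D_\bv=\tfrac{1}{1+s^2}\bv\bv^T$. The essential simplification comes from the distinctness of $a,b,c$: it forces $\bu^T\bv=(\be_a+r\be_b)^T(\be_b+s\be_c)=r$, so that $D_\bu D_\bv=\tfrac{r}{(1+r^2)(1+s^2)}\bu\bv^T$ is rank one and the anticommutator collapses to a rank-two expression. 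Matching coefficients on each outer product $\be_i\be_j^T$ then determines the three coefficients (as rational functions of $r$ and $s$ with common denominator $1+r^2s^2$) and yields the desired identity for $D_\bw$. The identical computation with $(\alpha,\beta,\gamma)$ in place of $(a,b,c)$ produces the companion identity for $D_\bomega$, using that $\alpha,\beta,\gamma$ are distinct so that $\bmu^T\bnu=r$ as well.

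With both identities in hand I would simply apply $U(\tau)(\cdot)U(-\tau)$ to the expression for $D_\bw$. Linearity disposes of the $D_\bu$ and $D_\bv$ terms through the two hypotheses, while multiplicativity sends $D_\bu D_\bv\mapsto D_\bmu D_\bnu$ and $D_\bv D_\bu\mapsto D_\bnu D_\bmu$. The outcome is exactly the companion identity for $D_\bomega$, so $U(\tau)D_\bw U(-\tau)=D_\bomega$, which is perfect $s$-pair state transfer from $\bw$ to $\bomega$ at time $\tau$, now with scalar $-rs\neq 0$.

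The only real obstacle I anticipate is discovering and verifying the density-matrix identity for $D_\bw$; once the coefficients are pinned down, the transfer conclusion follows immediately from the automorphism property of conjugation. A minor point worth checking beforehand is that $\bw$ and $\bomega$ are genuine $s$-pair states, i.e.\ that $a\neq c$, $\alpha\neq\gamma$, and $-rs\neq 0$, so that the statement is non-degenerate.
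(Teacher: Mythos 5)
Your proposal is correct and follows essentially the same route as the paper's proof: you express $D_\bw$ as the same linear combination of $D_\bu$, $D_\bv$, and the anticommutator $D_\bu D_\bv + D_\bv D_\bu$ (with the same coefficients over the common denominator $1+r^2s^2$, found via $\bu^T\bv=\bmu^T\bnu=r$), and then conjugate by $U(\tau)$ using linearity and multiplicativity. The only difference is cosmetic: the paper states the density-matrix identity without derivation, whereas you sketch how to discover it, and you sensibly flag the implicit non-degeneracy assumptions $a\neq c$, $\alpha\neq\gamma$, $rs\neq 0$ that the paper leaves unstated.
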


\ignore{
	\begin{proposition}
		\label{Prop:Aut}
		Let the Hamiltonian of $X$ be $M \in \{A, L, Q\}$, and $\Psi$ be a graph automorphism of $X$.
		Suppose perfect $s$-pair state transfer occurs from $\bu= \be_a + s\be_b$ to $\bmu= \be_\alpha + s\be_\beta$ at time $\tau$.
		Then perfect $s$-pair state transfer occurs from $\left(\be_{\Psi(a)} + s\be_{\Psi(b)}\right)$ to $\left(\be_{\Psi(\alpha)} + s\be_{\Psi(\beta)}\right)$ at time $\tau$.
		\begin{proof}
			Let $P$ be the permutation matrix of $\Psi$.  Then  $P\be_c = \be_{\Psi(c)}$ for any vertex $c$ in $X$.
			Since $P$ commutes with $U(\tau)$,
			multiplying $P$ to both sides of $U(\tau) \bu = \eta \bmu$ yields
			\begin{equation*}
			U(\tau) \left(\be_{\Psi(a)} + s\be_{\Psi(b)}\right) = \eta \left(\be_{\Psi(\alpha)} + s\be_{\Psi(\beta)}\right).
			\end{equation*}
		\end{proof}
	\end{proposition}
}


\section{Special classes}
\label{Section:Special}

In this section, we  determine all incidences of perfect $s$-pair state transfers in three families of graphs.     Since the graphs considered in this section are all regular,  
the classifications are the same for any Hamiltonian in $\{A, L, Q\}$.  Without loss of generality, we work with $U_A(t)$ in the following subsections.


\subsection{Complete graphs}
\label{Subsection:Kn}
Let $J_n$ denote the $n\times n$ matrix of all ones and $\one_n$ denote the vector of all ones of length $n$.
The transition matrix of $K_n$ is
\begin{equation*}
U_A(t) = \frac{e^{-\ii t (n-1)}}{n} J_n + e^{\ii t}\left(I - \frac{1}{n}J_n\right).
\end{equation*}
For any distinct vertices $a$ and $b$, $\be_a-\be_b$ is a fixed state.
Without loss of generality, let $\bu=\be_1 + s \be_2$, for some $s\neq -1$.
Then 
\begin{equation*}
U_A(t) \bu = \left(e^{-\ii t (n-1)} - e^{\ii t}\right)\frac{(1+s)}{n} \one_n + e^{\ii t}\bu.
\end{equation*}
We conclude that, for $n\geq 3$, $\bu$ is periodic with minimum period $\frac{2\pi}{n}$.

For $K_2$,  $\be_1 \pm \be_2$ are fixed states.  If $s\neq \pm 1$ then $\be_1 + s \be_2$ is periodic with minimum period $\pi$. Meanwhile, for $n\geq 2$, $K_n$ does not have perfect $s$-pair state transfer between distinct $s$-pair states.


\subsection{Cycles}
\label{Subsection:Cn}

We have shown in the previous section that $K_3$ does not have perfect
$s$-pair state transfer, now we consider cycles of order $n\geq 4$.

The cycle on $n$ vertices, $C_n$ is a Cayley graph on $\Z_n$ with connection set $\{-1, 1\}$.  For $j=0, \ldots, \lfloor \frac{n}{2}\rfloor$, let $\lambda_j = 2\cos \frac{2\pi j}{n}$ be the $j$-th eigenvalues of $A$, the adjacency matrix of $C_n$.
Let $E_{\lambda_j}$ be matrix of the orthogonal projection onto the $\lambda_j$-eigenspace.
Then the $(a,b)$-entry of $E_{\lambda_j}$ is
\begin{equation}
\label{EqnCn}
\begin{cases}
\frac{1}{n} & \text{if $j=0$,}\\[3pt]
\frac{2}{n}\cos \frac{2\pi j (a-b)}{n} & \text{if $1\leq j < \frac{n}{2}$,}\\[3pt]
\frac{(-1)^{a+b}}{n} & \text{if $n$ is even and $j=\frac{n}{2}$,}
\end{cases}
\end{equation}
for $a, b \in \Z_n$ and for $j=0,\ldots, \lfloor \frac{n}{2}\rfloor$.

Without loss of generality, we let the initial state be $\bu=\be_0 + s \be_b$, for some $b\in \{1, 2, \ldots, \lfloor \frac{n}{2}\rfloor\}$ and $s\in \R\backslash\{0\}$.
\ignore{
	For $c\in \Z_n$, the $c$-th entry of $E_j (\be_0 + s \be_b)$ is
	\begin{equation*}
	\be_c^T E_j (\be_0 + s \be_b) = 
	\begin{cases}
	\frac{1+s}{n} & \text{if $j=0$,}\\[5pt]
	\frac{2}{n}\left( \cos \frac{2\pi jc}{n}  + s  \cos \frac{2\pi j(c-b)}{n} \right)& \text{if $1\leq j < \frac{n}{2}$,}\\[5pt]
	\frac{(-1)^c+s(-1)^{c+b}}{n}  & \text{if $n$ is even and $j=\frac{n}{2}$.}
	\end{cases}
	\end{equation*}
}
Then $\lambda_j \in \supp_\bu$ if and only if $E_{\lambda_j} (\be_0 + s \be_b) \neq \zero$.

We first classify periodic $s$-pair states $\bu=\be_0+s\be_b$ in $C_n$, for $s=1$, $s=-1$ and $s\neq \pm 1$.
The following result \cite{lehmer1933trig} is useful in our proofs.

\begin{lemma}
	\label{Lem:Cos}
	If $\cos \frac{2\pi}{n} \in \Q$, then $n \in \{1,2,3,4,6\}$.  If the minimal polynomial of $\cos \frac{2\pi}{n}$ has degree  two, then $n\in \{5, 8, 10, 12\}$.
\end{lemma}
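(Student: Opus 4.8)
The plan is to reduce both assertions to a single degree computation: for $n \ge 3$, the degree of the minimal polynomial of $\cos\frac{2\pi}{n}$ over $\Q$ equals $\varphi(n)/2$, where $\varphi$ is Euler's totient function. Granting this, the first assertion becomes the statement that $\varphi(n)/2 = 1$, i.e.\ $\varphi(n) = 2$, which holds exactly for $n \in \{3,4,6\}$, together with the trivial values $n=1$ (where $\cos 2\pi = 1$) and $n=2$ (where $\cos\pi = -1$); the second becomes the statement that $\varphi(n)/2 = 2$, i.e.\ $\varphi(n) = 4$, which holds exactly for $n \in \{5,8,10,12\}$.

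To establish the degree formula I would work in the cyclotomic field $\Q(\zeta)$ with $\zeta = e^{2\pi\ii/n}$. The irreducibility of the $n$-th cyclotomic polynomial, a classical fact, gives $[\Q(\zeta):\Q] = \varphi(n)$. Since $\cos\frac{2\pi}{n} = \tfrac{1}{2}(\zeta + \zeta^{-1})$ is real, the field $\Q(\cos\frac{2\pi}{n})$ lies in $\R$, and $\zeta$ satisfies the quadratic $x^2 - 2\cos\frac{2\pi}{n}\,x + 1 = 0$ over it. For $n \ge 3$ we have $\zeta \notin \R$ (otherwise $\zeta = \pm 1$, forcing $n \in \{1,2\}$), so this quadratic is irreducible over $\Q(\cos\frac{2\pi}{n})$ and $[\Q(\zeta):\Q(\cos\frac{2\pi}{n})] = 2$. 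Multiplicativity of degrees in the tower then yields $[\Q(\cos\frac{2\pi}{n}):\Q] = \varphi(n)/2$.

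It remains to solve $\varphi(n) = 2$ and $\varphi(n) = 4$ completely. Here I would use the elementary structure of $\varphi$: if a prime power $p^a$ exactly divides $n$, then $p^{a-1}(p-1)$ divides $\varphi(n)$, so $p^{a-1}(p-1) \le \varphi(n)$. For $\varphi(n) \le 4$ this forces every prime factor to satisfy $p \le 5$ and bounds every exponent, hence bounds $n$, reducing the problem to a finite check that produces exactly the lists $\{3,4,6\}$ and $\{5,8,10,12\}$.

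I expect the only genuinely nontrivial input to be the irreducibility of the cyclotomic polynomial over $\Q$, which supplies the degree $\varphi(n)$; everything else is bookkeeping. An alternative that sidesteps this for the first assertion is Niven's theorem: writing $2\cos(n\theta)$ as a monic integer (Chebyshev) polynomial in $2\cos\theta$ and setting $\theta = \frac{2\pi}{n}$ shows that $2\cos\frac{2\pi}{n}$ is an algebraic integer, so if it is rational it must be an ordinary integer lying in $[-2,2]$, forcing $\cos\frac{2\pi}{n} \in \{0,\pm\tfrac{1}{2},\pm 1\}$ and hence $n \in \{1,2,3,4,6\}$. However, the uniform degree formula disposes of both parts simultaneously, so that is the route I would follow.
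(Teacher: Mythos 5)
Your proof is correct. The paper gives no proof of this lemma at all---it cites Lehmer (1933), whose theorem is precisely the degree formula $[\Q(\cos\frac{2\pi}{n}):\Q]=\varphi(n)/2$ that you establish via the tower $\Q\subseteq\Q(\cos\frac{2\pi}{n})\subseteq\Q(e^{2\pi\ii/n})$---so your argument is essentially a self-contained reconstruction of the cited result, and the totient computations ($\varphi(n)=2$ giving $\{3,4,6\}$ plus the trivial $n\in\{1,2\}$, and $\varphi(n)=4$ giving $\{5,8,10,12\}$) are all correct.
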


\begin{lemma}
	\label{Lem:CnPeriodic+}
	The $s$-pair state $\bu=\be_0+\be_b$, for $1\leq b \leq \frac{n}{2}$, is periodic in $C_n$, $n\geq 4$, with minimum period $\tau$ if and only if the triple $(n, \be_b, \tau)$ belongs to
	\begin{equation*}
	\left\{\left(4,\be_1,\pi \right), \left(4,\be_2, \frac{\pi}{2}\right), \left(6,\be_1,2\pi\right), \left(6, \be_2,2\pi\right), \left(6,\be_3, \frac{2\pi}{3}\right), \left(8,\be_4,\pi\right), \left(12, \be_6, 2\pi\right)\right\}.
	\end{equation*}
	\ignore{
		\begin{enumerate}[i.]
			\item
			$n=4$, $\bu=\be_0+\be_1$ is periodic at time $\pi$.
			\item
			$n=4$, $\bu=\be_0+\be_2$ is periodic at time $\frac{\pi}{2}$.
			\item
			$n=6$, $\bu=\be_0+\be_1$ is periodic at time $2\pi$.
			\item
			$n=6$, $\bu=\be_0+\be_2$ is periodic at time $2\pi$.
			\item
			$n=6$, $\bu=\be_0+\be_3$ is periodic at time $\frac{2\pi}{3}$.
			\item
			$n=8$, $\bu=\be_0+\be_4$ is periodic at time $\pi$.
			\item
			$n=12$, $\bu=\be_0+\be_6$ is periodic at time $2\pi$.
		\end{enumerate}
	}
	\begin{proof}
		Suppose $\bu$ is periodic.
		From (\ref{EqnCn}), for $1 \leq j < \frac{n}{2}$, 
		\begin{equation*}
		\be_0^T E_{\lambda_j} \bu =
		\frac{2}{n} \left(\cos \frac{2\pi bj}{n} + 1\right).
		\end{equation*}
		Hence $\lambda_j \in \supp_\bu$ when $\frac{bj}{n}-\frac{1}{2} \not\in \Z$.

		If $b\neq \frac{n}{2}$ then $\lambda_0, \lambda_1 \in \supp_\bu$.  Since $\lambda_0=2$, Proposition~\ref{Prop:Rat} and Theorem~\ref{Thm:Quadratic} imply $2\cos \frac{2\pi}{n}  \in \Z$.
		Similarly, if $b=\frac{n}{2}$ then $\lambda_0, \lambda_2 \in \supp_\bu$ and $2\cos \frac{4\pi}{n}  \in \Z$. It follows from Lemma~\ref{Lem:Cos} that $n\in\{4,6,8,12\}$.
		
		For $n\in \{4,6,8,12\}$, we computationally search for all $s$-pair states $\be_0+\be_b$ that satisfy the conditions in Theorem~\ref{Thm:Quadratic}
		to get the list above.  The minimum period of each case is computed using the expression given in Theorem~\ref{Thm:Quadratic}.
	\end{proof}
\end{lemma}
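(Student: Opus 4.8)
The plan is to exploit the explicit spectral decomposition of $C_n$ recorded in Equation~(\ref{EqnCn}). First I would compute, for the plus state $\bu = \be_0 + \be_b$ and each index $1 \le j < \tfrac{n}{2}$,
\[
\be_0^T E_{\lambda_j}\bu = (E_{\lambda_j})_{0,0} + (E_{\lambda_j})_{0,b} = \frac{2}{n}\left(1 + \cos\frac{2\pi bj}{n}\right),
\]
so that this entry is nonzero exactly when $\frac{bj}{n} - \frac12 \notin \Z$. Since a single nonzero entry already certifies $\lambda_j \in \supp_\bu$, this is a convenient sufficient condition for membership in the eigenvalue support. In particular $\lambda_0 = 2$ always lies in $\supp_\bu$ (its $\be_0$-entry equals $\frac2n$), and choosing $j = 1$ when $b \ne \tfrac{n}{2}$, respectively $j = 2$ when $b = \tfrac{n}{2}$, shows that $\lambda_1 = 2\cos\frac{2\pi}{n}$, respectively $\lambda_2 = 2\cos\frac{4\pi}{n} = 2\cos\frac{2\pi}{n/2}$, also lies in $\supp_\bu$.

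Next I would feed periodicity into the eigenvalue constraints. Since $s = 1$ is rational and the adjacency matrix has integer entries, Proposition~\ref{Prop:Rat} guarantees that $\supp_\bu$ is closed under algebraic conjugation, while periodicity forces the ratio condition by Theorem~\ref{Thm:RatioCond}. When $\vert\supp_\bu\vert > 2$, Theorem~\ref{Thm:Quadratic} leaves only two possibilities: either all eigenvalues in $\supp_\bu$ are integers, or they share a common form $\frac{c + d\sqrt\Delta}{2}$ with a fixed integer $c$ and fixed square-free $\Delta > 1$. The crucial step is to eliminate the second case. Because $2 \in \supp_\bu$ is rational, matching it to $\frac{c + d\sqrt\Delta}{2}$ forces $c = 4$; then the algebraic conjugate of the irrational element $2\cos\frac{2\pi}{n} = \frac{4 + d\sqrt\Delta}{2}$ would be $4 - 2\cos\frac{2\pi}{n}$, which, by conjugate-closure, is itself an eigenvalue of $C_n$ and so at most $2$ in absolute value --- impossible, since $4 - 2\cos\frac{2\pi}{n} > 2$ for $n \ge 4$. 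Hence we land in the integer case, giving $2\cos\frac{2\pi}{n} \in \Z$ (and $2\cos\frac{4\pi}{n} \in \Z$ when $b = \tfrac{n}{2}$). Applying Lemma~\ref{Lem:Cos} to $\cos\frac{2\pi}{n}$, respectively $\cos\frac{2\pi}{n/2}$, then collapses the candidates to $n \in \{4, 6, 8, 12\}$.

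With $n$ confined to these four values, the argument finishes by a finite inspection: for each $n \in \{4,6,8,12\}$ and each $b$ with $1 \le b \le \tfrac{n}{2}$, I would read off $\supp_\bu$ from~(\ref{EqnCn}), verify the ratio and integrality conditions of Theorem~\ref{Thm:Quadratic}, and compute the minimum period as $\frac{2\pi}{g\sqrt\Delta}$ with $\Delta = 1$ and $g = \gcd\{\lambda - \theta\}_{\lambda,\theta\in\supp_\bu}$, producing exactly the seven listed triples. I expect the reduction, not the bookkeeping, to be the delicate part, for two reasons. First, ruling out the quadratic case hinges on the fact that Theorem~\ref{Thm:Quadratic} imposes a \emph{single} constant $c$ across the whole support together with the bound $\vert\lambda_j\vert \le 2$; this pairing is what makes the conjugate of $\lambda_1$ overshoot the spectrum. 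Second, Theorem~\ref{Thm:Quadratic} applies only when $\vert\supp_\bu\vert > 2$, so I must treat the small-support states separately. These are harmless: the entry computation above shows that $\lambda_j \in \supp_\bu$ for all $j$ outside a sparse index set, so $\vert\supp_\bu\vert \le 2$ already forces $n$ to be small (here $n \in \{4,6\}$), and such states are therefore captured by the same finite inspection.
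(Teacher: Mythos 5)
Your proposal is correct and follows essentially the same route as the paper's proof: the same entry computation from Equation~(\ref{EqnCn}) to locate $\lambda_0$ and $\lambda_1$ (resp.\ $\lambda_2$) in $\supp_\bu$, the same reduction via Proposition~\ref{Prop:Rat}, Theorem~\ref{Thm:Quadratic} and Lemma~\ref{Lem:Cos} to $n\in\{4,6,8,12\}$, and the same finite inspection with minimum periods read off as $\frac{2\pi}{g\sqrt{\Delta}}$. Your two elaborations --- explicitly ruling out the quadratic case via $c=4$ and the conjugate $4-2\cos\frac{2\pi}{n}>2$ exceeding the spectral radius, and separately handling the states with $\vert\supp_\bu\vert\leq 2$ where Theorem~\ref{Thm:Quadratic} does not apply --- are correct fill-ins of steps the paper leaves implicit, not a different method.
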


\begin{lemma}
	\label{Lem:CnPeriodic-}
	The $s$-pair state $\bu=\be_0-\be_b$, for $1\leq b \leq \frac{n}{2}$, is periodic in $C_n$ if and only if
	\begin{eqnarray*}
		(n, \be_b, \tau) &\in& \Bigg\{\left(4,\be_1,\pi \right), \left(5,\be_1,\frac{2\pi}{\sqrt{5}} \right), \left(5,\be_2, \frac{2\pi}{\sqrt{5}}\right), \left(6,\be_1,2\pi\right), \left(6, \be_2,\pi\right),\\
		&& \quad \left(6,\be_3, \frac{2\pi}{3}\right), \left(8,\be_2,\frac{2\pi}{\sqrt{2}}\right),\left(8,\be_4,\frac{\pi}{\sqrt{2}}\right), \left(12, \be_6, \frac{2\pi}{\sqrt{3}}\right)\Bigg\}.
	\end{eqnarray*}
	Moreover, $(\be_0-\be_2)$ is a fixed state of $C_4$.
	\ignore{
		\begin{enumerate}[i.]
			\item
			$n=5$, $\bu=\be_0-\be_1$ is periodic at time $\frac{2\pi}{\sqrt{5}}$.
			\item
			$n=5$, $\bu=\be_0-\be_2$ is periodic at time $\frac{2\pi}{\sqrt{5}}$.
			\item
			$n=6$, $\bu=\be_0-\be_1$ is periodic at time $2\pi$.
			\item
			$n=6$, $\bu=\be_0-\be_2$ is periodic at time $\pi$.
			\item
			$n=6$, $\bu=\be_0-\be_3$ is periodic at time $\frac{2\pi}{3}$.
			\item
			$n=8$, $\bu=\be_0-\be_2$ is periodic at time $\frac{2\pi}{\sqrt{2}}$.
			\item
			$n=8$, $\bu=\be_0-\be_4$ is periodic at time $\frac{\pi}{\sqrt{2}}$.
			\item
			$n=12$, $\bu=\be_0-\be_6$ is periodic at time $\frac{2\pi}{\sqrt{3}}$.
		\end{enumerate}
	}
	
	\begin{proof}
		Suppose $\bu$ is periodic.
		From (\ref{EqnCn}), for $1 \leq j < \frac{n}{2}$, 
		\begin{equation*}
		\be_0^T E_{\lambda_j} \bu =
		\frac{2}{n}\left(\cos \frac{2\pi bj}{n} - 1\right).
		\end{equation*}
		Hence $\lambda_j \in \supp_\bu$ if  $\frac{bj}{n} \not\in \Z$.

		If $1\leq b < \frac{n}{2}$ then  $\lambda_1, \lambda_2 \in \supp_\bu$.  If follows from Corollary~\ref{Cor:Gap} that $\vert \lambda_2 - \lambda_1\vert \geq 1$ which implies
		$n \leq 10$.
		Similarly, if $b=\frac{n}{2}$ then $\lambda_{j} \in \supp_\bu$ for odd $j$.  Proposition~\ref{Prop:Rat}, Theorem~\ref{Thm:Quadratic} and Lemma~\ref{Lem:Cos} imply 
		\begin{equation*}
		n\in \{4, 5, 6, 8, 10, 12\}.
		\end{equation*}
		
		For $n\in \{4,5, 6,8,10,12\}$, we computationally search for all $s$-pair states $\be_0-\be_b$ that satisfy the conditions in Theorem~\ref{Thm:Quadratic}
		to get the list above.  The minimum period of each case is computed using the expression given in Theorem~\ref{Thm:Quadratic}.
	\end{proof}
\end{lemma}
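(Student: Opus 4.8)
The plan is to read off $\supp_\bu$ from the projection formula~(\ref{EqnCn}), convert periodicity into an arithmetic condition on the eigenvalues in the support, use the gap estimate of Corollary~\ref{Cor:Gap} to reduce to finitely many cycles, and then finish by a direct inspection. The computation parallels that of Lemma~\ref{Lem:CnPeriodic+}, but with one crucial sign change.

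First I would compute the coordinates $\be_0^T E_{\lambda_j}\bu$ for $\bu=\be_0-\be_b$. Because $E_{\lambda_0}=\frac1n J_n$, we get $\be_0^T E_{\lambda_0}\bu=\frac1n(1-1)=0$, so $\lambda_0=2\notin\supp_\bu$; this vanishing of the Perron projection is the essential difference from the plus-state computation in Lemma~\ref{Lem:CnPeriodic+}. For $1\le j<\frac n2$, the middle line of~(\ref{EqnCn}) gives $\be_0^T E_{\lambda_j}\bu=\frac2n\left(\cos\frac{2\pi bj}{n}-1\right)$, so that $\lambda_j\in\supp_\bu$ precisely when $\frac{bj}{n}\notin\Z$. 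The exceptional index $j=\frac n2$ (when $n$ is even) I would treat with the third line of~(\ref{EqnCn}), which records $\lambda_{n/2}$ in the support according to the parity of $b$.

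Next comes the main reduction. By Theorem~\ref{Thm:RatioCond}, $\bu$ is periodic if and only if $\supp_\bu$ satisfies the ratio condition, and since the adjacency matrix has integer (hence algebraic-integer) entries and $s=-1\in\Q$, Proposition~\ref{Prop:Rat} shows $\supp_\bu$ is closed under algebraic conjugation; thus Corollary~\ref{Cor:Gap} applies and any two distinct eigenvalues in $\supp_\bu$ differ by at least $1$. I would then split on $b$. If $1\le b<\frac n2$, then $\frac{b}{n},\frac{2b}{n}\in(0,1)$, so $\lambda_1,\lambda_2\in\supp_\bu$ are distinct once $n>4$, and the gap inequality $\left|2\cos\frac{2\pi}{n}-2\cos\frac{4\pi}{n}\right|\ge 1$ forces $n\le 10$. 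If $b=\frac n2$, then $\lambda_j\in\supp_\bu$ for every odd $j$, so in particular $\lambda_1=2\cos\frac{2\pi}{n}$ lies in a support which, by Proposition~\ref{Prop:Rat} and Theorem~\ref{Thm:Quadratic}, consists of integers or quadratic irrationalities of a common form; Lemma~\ref{Lem:Cos} then restricts $n$ to $\{5,8,10,12\}$ together with the rational cases. Combining both cases leaves only $n\in\{4,5,6,8,10,12\}$.

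Finally, for each of these finitely many $n$ and each $b$ with $1\le b\le\frac n2$, I would list $\supp_\bu$ from the membership criterion above, test the integrality/quadratic conditions of Theorem~\ref{Thm:Quadratic}, and for the surviving pairs read off the minimum period from the formula $\frac{2\pi}{g\sqrt{\Delta}}$; this produces the tabulated triples. The isolated claim that $\be_0-\be_2$ is a fixed state of $C_4$ I would verify directly: the neighbours of $0$ and of $2$ coincide, so $A(\be_0-\be_2)=\zero$, giving $|\supp_\bu|=1$. I expect the genuine difficulty to be the bounding step: making the gap estimate tight, since $n=10$ is exactly the borderline case where $\left|\lambda_1-\lambda_2\right|=1$, and correctly accounting for the parity-dependent behaviour of the endpoint eigenvalue $\lambda_{n/2}$ when $b=\frac n2$. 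Once $n$ is confined to the six values above, the remaining verification and the period computations are routine.
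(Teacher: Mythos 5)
Your proposal follows the paper's proof essentially verbatim: the same support computation from~(\ref{EqnCn}) giving $\lambda_j\in\supp_\bu$ exactly when $\frac{bj}{n}\notin\Z$, the same use of Corollary~\ref{Cor:Gap} to force $n\le 10$ when $1\le b<\frac n2$, the same appeal to Proposition~\ref{Prop:Rat}, Theorem~\ref{Thm:Quadratic} and Lemma~\ref{Lem:Cos} when $b=\frac n2$, and the same finite search over $n\in\{4,5,6,8,10,12\}$ with minimum periods read off from $\frac{2\pi}{g\sqrt{\Delta}}$. Your added details --- that $\lambda_0\notin\supp_\bu$, the parity behaviour of $\lambda_{n/2}$, the observation that $n=10$ is the borderline case of the gap bound, and the direct check that $A(\be_0-\be_2)=\zero$ in $C_4$ --- are all correct and merely make explicit what the paper's terser exposition leaves implicit (including that the quadratic-degree restriction of Theorem~\ref{Thm:Quadratic} is what eliminates $n=7,9$ from the first case, since there $\vert\supp_\bu\vert\ge 3$ and the support contains a cubic irrationality).
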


\begin{lemma}
	\label{Lem:CnPeriodicR}
	For $s\neq \pm 1$,
	the $s$-pair state $\bu=\be_0 + s\be_b$ is periodic in $C_n$ if and only if one of the following holds.
	\begin{enumerate}[i.]
		\item
		\label{Lem:CnPeriodicR4}
		$n=4$, $\bu=\be_0+s \be_b$ is periodic at minimum time $\pi$, for $1\leq b \leq 3$ and $s\in \R \backslash \{\pm1\}$.
		\item
		\label{Lem:CnPeriodicR6}
		$n=6$, $\bu=\be_0+s\be_b$ is periodic at minimum time $2\pi$, for $1\leq b \leq 5$ and $s\in \R \backslash \{\pm1\}$.
	\end{enumerate}
	
	\begin{proof}
		It follows from Proposition~\ref{Prop:CospectralPeriodic} that $0$ and $b$ are periodic vertices in $C_n$, and $C_4$ and $C_6$
		are the only cycles with periodic vertices.
		
		Case~(\ref{Lem:CnPeriodicR4}) follows from the fact that the transition matrix for $C_4$ is equal to the identity matrix at minimum time $\pi$.
		Case~(\ref{Lem:CnPeriodicR6}) follows from the fact that the transition matrix for $C_6$ is equal to the identity matrix at minimum time $2\pi$.
	\end{proof}
\end{lemma}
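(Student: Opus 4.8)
The plan is to reduce the statement to the classification of cycles possessing periodic vertices, and then to pin down the minimum period by examining eigenvalue supports. First I would observe that $C_n$ is vertex-transitive, so $0$ and $b$ are cospectral vertices. Since we are assuming $s\neq \pm 1$, Proposition~\ref{Prop:CospectralPeriodic} applies directly: if $\bu=\be_0+s\be_b$ is periodic at some time $\tau$, then both $0$ and $b$ must themselves be periodic vertices at $\tau$. This at once restricts the admissible cycles to those that possess periodic vertices at all.

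Next I would establish that the only cycles with $n\geq 4$ having periodic vertices are $C_4$ and $C_6$. From the projection formula~(\ref{EqnCn}) the diagonal entry $(E_{\lambda_j})_{0,0}$ is strictly positive for every $j$, so the eigenvalue support of any vertex is the entire distinct spectrum $\{2\cos\frac{2\pi j}{n}: 0\leq j\leq \lfloor n/2\rfloor\}$. For $n\geq 5$ this set has at least three elements and contains both $2=\lambda_0$ and $2\cos\frac{2\pi}{n}$, so Theorem~\ref{Thm:Quadratic} together with Lemma~\ref{Lem:Cos} forces either $2\cos\frac{2\pi}{n}\in\Z$ (the integer case, which by Lemma~\ref{Lem:Cos} leaves only $n=6$ among $n\geq 5$) or a single common $c$ and square-free $\Delta$ in the quadratic case; running through $n\in\{5,8,10,12\}$ shows that $\lambda_0=2$ cannot share the required form $\frac{c+d\sqrt{\Delta}}{2}$ with the irrational eigenvalues, so these are ruled out. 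Hence $n\in\{4,6\}$.

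For the converse and the minimum-period claim, I would work directly in $C_4$ and $C_6$, whose spectra $\{2,0,-2\}$ and $\{2,1,-1,-2\}$ consist of integers. Since all eigenvalues are integers, $U_A(\pi)=I$ for $C_4$ and $U_A(2\pi)=I$ for $C_6$, so every state, and in particular every $s$-pair state, is periodic at these times. To see that these are the \emph{minimum} periods when $s\neq\pm 1$, I would use~(\ref{EqnCn}) to check that the eigenvalue support of $\be_0+s\be_b$ is the full spectrum: the projections onto $\lambda_1=0$ in $C_4$, and onto all four eigenvalues in $C_6$, carry factors of the form $(1\pm s)$ that vanish precisely when $s=\pm 1$ but are nonzero otherwise. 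With the full support in hand, the minimum-period formula of Theorem~\ref{Thm:Quadratic} (with $\Delta=1$) gives $g=\gcd$ of the pairwise eigenvalue differences, namely $g=2$ for $C_4$ and $g=1$ for $C_6$, yielding minimum periods $\frac{2\pi}{2}=\pi$ and $\frac{2\pi}{1}=2\pi$, as claimed. The fact that $U_A$ first returns to the identity at exactly these times confirms there is no smaller candidate.

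The main obstacle I anticipate is this last step: verifying that for every admissible $b$ and every $s\neq\pm 1$ the support is genuinely the full spectrum, so that minimality, and not merely periodicity, holds. The identity $U_A(\pi)=I$ only certifies that $\bu$ is periodic at time $\pi$; it does not by itself exclude a smaller period arising from a proper sub-support, a phenomenon that actually occurs when $s=\pm 1$, as witnessed by the entry $(4,\be_2,\frac{\pi}{2})$ in Lemma~\ref{Lem:CnPeriodic+}. The careful bookkeeping of which $(1\pm s)$-type factors survive, including the borderline sub-cases where a single projection entry vanishes at one coordinate $c$ but not at another, is where the hypothesis $s\neq\pm 1$ does its real work and is the only point at which an unavoidable computation enters.
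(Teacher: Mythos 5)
Your proposal is correct and follows essentially the same route as the paper: reduce via Proposition~\ref{Prop:CospectralPeriodic} (using vertex-transitivity for cospectrality of $0$ and $b$) to periodic vertices, establish that $C_4$ and $C_6$ are the only cycles with periodic vertices, and conclude from the behaviour of the transition matrix at times $\pi$ and $2\pi$. The one place you diverge is the minimality step, where the paper tersely cites that $U_A(t)$ first equals $I$ at these times (with the vertices' minimum periods implicitly doing the work through Proposition~\ref{Prop:CospectralPeriodic}), whereas you verify directly that for $s\neq\pm 1$ the eigenvalue support of $\be_0+s\be_b$ is the full spectrum and invoke the gcd formula of Theorem~\ref{Thm:Quadratic} to get minimum periods $\pi$ and $2\pi$ --- a more explicit justification of a point the paper glosses over, and you rightly flag the $s=\pm 1$ entry $\left(4,\be_2,\frac{\pi}{2}\right)$ of Lemma~\ref{Lem:CnPeriodic+} as showing why that check cannot be skipped.
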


\begin{theorem}
	\label{Thm:CnPST}
	Perfect $s$-pair state transfer occurs from initial state $\be_0+s \be_b$, for $1\leq b \leq \frac{n}{2}$, in $C_n$ if and only if 
	one of the following holds.
	\begin{enumerate}[i.]
		\item
		$n=4$, perfect $s$-pair state transfer from $\be_0+s \be_1$  to $\be_2+s \be_3$ at minimum time $\frac{\pi}{2}$, for $s\in \R\backslash\{0\}$.
		\item
		$n=4$, perfect $s$-pair state transfer from $\be_0+s \be_2$  to $\be_2+s \be_0$ at minimum time $\frac{\pi}{2}$, for $s\in \R\backslash\{-1,0,1\}$.
		\item
		$n=4$, perfect $s$-pair state transfer from $\be_0+ \be_2$  to $\be_1+ \be_3$ at minimum time $\frac{\pi}{4}$.
		\item
		$n=6$, perfect $s$-pair state transfer from $\be_0- \be_2$  to $\be_3- \be_5$ at minimum time $\frac{\pi}{2}$.
		\item
		$n=6$, perfect $s$-pair state transfer from $\be_0+2\be_2$  to $\be_0+2 \be_4$ at minimum time $\pi$.
		\item
		$n=6$, perfect $s$-pair state transfer from $\be_0+\frac{1}{2}\be_2$  to $\be_4+\frac{1}{2} \be_2$ at minimum time $\pi$.
		
		\item
		$n=8$, perfect $s$-pair state transfer from $\be_0- \be_2$  to $\be_4 - \be_6$ at minimum time $\frac{\pi}{\sqrt{2}}$.
		\item
		$n=8$, perfect $s$-pair state transfer from $\be_0+ \be_4$  to $\be_2+ \be_6$ at minimum time $\frac{\pi}{2}$.
	\end{enumerate}
	
	\begin{proof}
		By Theorem~\ref{Thm:RST} (\ref{Thm:RST3}), it is sufficient to check if perfect $s$-pair state transfer occurs at half of the minimum periods list in Lemmas~\ref{Lem:CnPeriodic+}, \ref{Lem:CnPeriodic-} and \ref{Lem:CnPeriodicR}.
	\end{proof}
\end{theorem}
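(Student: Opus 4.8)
The plan is to turn the statement into a finite verification driven by Theorem~\ref{Thm:RST}. Since a real $s$-pair state can undergo perfect $s$-pair state transfer only if it is periodic (Theorem~\ref{Thm:RST}(\ref{Thm:RST2})), the only possible initial states $\be_0 + s\be_b$ are precisely those enumerated, together with their minimum periods $\tau$, in Lemmas~\ref{Lem:CnPeriodic+}, \ref{Lem:CnPeriodic-} and \ref{Lem:CnPeriodicR}. Moreover, Theorem~\ref{Thm:RST}(\ref{Thm:RST3}) forces the transfer, if it happens at all, to occur at time $\tau/2$, and Theorem~\ref{Thm:RST}(\ref{Thm:RST4}) guarantees the target density matrix is unique. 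Hence for each periodic candidate it suffices to compute the single vector $U_A(\tau/2)\bu$ and decide whether it is a unit-modulus scalar multiple of a genuine $s$-pair state $\be_\alpha + s\be_\beta$ whose density matrix differs from $D_\bu$. This one computation does double duty: a positive answer exhibits the transfer (sufficiency), while ranging over the complete periodic list rules out every other possibility (necessity).

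To carry out each case I would expand $U_A(\tau/2) = \sum_j e^{-\ii(\tau/2)\lambda_j} E_{\lambda_j}$ and evaluate the $c$-th coordinate of $U_A(\tau/2)\bu$ for $c \in \Z_n$ using the explicit idempotent entries of Equation~(\ref{EqnCn}); translation invariance of the Cayley graph $C_n$ reduces this to shifting a single row. The eigenvalue structure forced by periodicity (Theorems~\ref{Thm:RatioCond} and \ref{Thm:Quadratic}) makes each phase $e^{-\ii(\tau/2)\lambda_j}$ lie in a small explicit set, so every coordinate is a short trigonometric sum whose nonzero positions can be read off directly. One then checks whether exactly two coordinates survive, in the correct ratio $1:s$.

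For most candidates this is routine: the $C_4$ cases (i) and (ii) fall out of the antipodal vertex perfect state transfer $U_A(\pi/2)\be_c = -\be_{c+2}$, with the caveats that $s = 1$ would send $\be_0 + \be_2$ to itself (no genuine transfer) and $s = -1$ makes $\be_0 - \be_2$ a fixed state; case (iii) uses the shorter period $\pi/2$ of the plus state $\be_0 + \be_2$, whose support omits the zero eigenvalue, and the $s = \pm 1$ entries in $C_6$ and $C_8$ follow by the same direct evaluation. The main obstacle I anticipate is the $C_6$ real case with $s \neq \pm 1$: here $U_A(\pi)(\be_0 + s\be_2)$ generically has three nonzero coordinates, spread over $\be_0,\be_2,\be_4$, so no transfer occurs, and one must determine the special values of $s$ at which one coordinate vanishes to leave a genuine $s$-pair state. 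Solving these conditions singles out $s = 2$ and $s = \frac{1}{2}$, producing cases (v) and (vi). The real arithmetic is elementary; the care lies in systematically separating, among all periodic candidates, those that yield a genuinely new $s$-pair state from those that merely return to themselves or remain fixed.
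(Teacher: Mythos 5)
Your proposal is correct and follows essentially the same route as the paper: the paper's proof is exactly the observation that, by Theorem~\ref{Thm:RST}, any transfer from a real $s$-pair state forces periodicity, so the candidates are confined to the lists in Lemmas~\ref{Lem:CnPeriodic+}, \ref{Lem:CnPeriodic-} and \ref{Lem:CnPeriodicR}, and by Theorem~\ref{Thm:RST}~(\ref{Thm:RST3}) one need only evaluate $U_A(\tau/2)\bu$ at half of each minimum period. Your write-up merely makes explicit the finite case check (via the idempotent entries in Equation~(\ref{EqnCn})) that the paper leaves implicit, including the correct resolution of the $C_6$ case yielding $s=2$ and $s=\tfrac{1}{2}$.
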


\subsection{Antipodal distance-regular graphs}
\label{Subsection:DRG}
In this section, we determine all occurrences of perfect $s$-pair state transfers in  the distance-regular graphs that have (vertex) perfect state transfer.
See \cite{BCN1989} for a background of distance-regular graphs.

Coutinho et al.  \cite{Coutinho2015} proved that if a distance regular graph $X$ admits (vertex) perfect state transfer and $d$ is the diameter of $X$,
then every vertex in $X$ is at distance $d$ from a unique vertex.  
We call such a graph an {\sl antipodal distance-regular graph with class size two}, and
we say two vertices are {\sl antipodal} in $X$ if they are at distance $d$ from each other.

Suppose $X$ is an antipodal distance-regular graph of class size $2$ and it has diameter $d$.  For $j=0,\ldots, d$, let $A_j$ be the $j$-th distance matrix of $X$.
Then $A_0=I$, $A_d$ is the  anti-diagonal permutation matrix (up to permutation of vertices), and 
\begin{equation*}
A_j A_d = A_{d-j}, \quad \text{for $j=0,\ldots, d$,}
\end{equation*}
Let $k_j$ be the column sum of $A_j$, for $j=0,\ldots, d$.  Then $k_0=k_d=1$ and
the sequence $k_0, k_1,\ldots, k_{d-1}, k_d$ is unimodal, see Theorem 5.1.1~(i) of \cite{BCN1989}.
Further, if $X$ is not a cycle then  Theorem 5.1.1~(ii) of \cite{BCN1989} implies that $k_j \geq 3$, for $j=1,\ldots, d-1$.

Since $\{A_0, A_1, \ldots, A_d\}$ is a basis of the adjacency algebra of $X$ over $\C$, 
\begin{equation*}
U(t):=U_{A_1}(t) \in \spn \{A_0, A_1, \ldots, A_d\}.
\end{equation*}
If $X$ admits perfect state transfer between two vertices at time $\tau$ then 
\begin{equation}
\label{Eqn:A_d}
U(\tau) = \eta A_d, 
\end{equation}
for some phase factor $\eta$.

\begin{theorem}
	\label{Thm:DRG}
	Let $X$ be a distance-regular graph that is not a cycle.
	If $X$  admits (vertex) perfect state transfer,
	then $X$ has perfect $s$-pair state transfer from $\bu = \be_a + s \be_b$ to $\bmu = \be_\alpha + s \be_\beta$
	if and only if one of the following holds.
	\begin{enumerate}[(i)]
		\item
		$\{a, b\}$ is not an antipodal pair, $\{a,\alpha\}$ and $\{b, \beta\}$ are antipodal pairs, for $s \in \R \backslash \{0\}$.
		\item
		$\{a, b\}$ is an antipodal pair, $\alpha=b$ and $\beta=a$, for $s \in \R \backslash \{-1,0,1\}$.
	\end{enumerate}
	
	\begin{proof}
		Let $d$ be the diameter of $X$, and let $\tau$ be the minimum (vertex) perfect state transfer time.
		
		Suppose $\dist(a,b)<d$.  Let $\alpha$ and $\beta$ be vertices in $X$ that are antipodal to $a$ and $b$, respectively.
		It follows from Equation~(\ref{Eqn:A_d}) that
		\begin{equation*}
		U\left(\tau \right) (\be_a + s \be_b) = \eta (\be_\alpha + s \be_\beta).
		\end{equation*}
		Since both $D_\bu$ and $D_{\bmu}$ are real, it follows from Theorem~\ref{Thm:RST}~(\ref{Thm:RST4}) that perfect state transfer does not occur from $\bu$ to another real state.
		
		Suppose $\dist(a,b)=d$, that is, $\be_b=A_d\be_a$.  Then Equation~(\ref{Eqn:A_d}) gives 
		\begin{equation*}
		U\left(\tau \right) (\be_a + s \be_b) = \eta (\be_b + s \be_a).
		\end{equation*}
		For $s\neq \pm 1$, $(\be_a + s \be_b)$ and $(\be_b + s \be_a)$ represent distinct real states and Theorem~\ref{Thm:RST}~(\ref{Thm:RST4}) implies perfect state transfer does not occur from $\bu$ to another real state. When $s=\pm 1$, the state $\bu = (I + s A_d)\be_a$ is periodic at $\tau$.  
		Let $\tau'$ be the minimum period of $\bu$.
		\ignore{
			and
			\begin{equation*}
			U(\tau')(I+sA_d)\be_a=\eta'(I+sA_d)\be_a,
			\end{equation*}
			for some phase factor $\eta'$, which implies
			\begin{equation*}
			U(\tau') (I+sA_d) = \eta'(I+sA_d).
			\end{equation*}
		}
		Let
		\begin{equation*}
		U\left(\frac{\tau'}{2}\right) = \sum_{j=0}^d \varphi_j A_j,
		\end{equation*}
		for some $\varphi_0,\ldots, \varphi_d \in \C$.
		Then
		\begin{equation*}
		U\left(\frac{\tau'}{2}\right) (\be_a + s \be_b) =U\left(\frac{\tau'}{2}\right) (I + s A_d) \be_a = \sum_{j=0}^d \left(\varphi_j + s \varphi_{d-j}\right)A_j\be_a.
		\end{equation*}
		If there exists $1\leq j \leq d-1$ such that $\varphi_j + s \varphi_{d-j} \neq 0$, then 
		$U\left(\frac{\tau'}{2}\right) (\be_a + s \be_b)$ has at least $k_j \geq 3$ non-zero entries because $X$ is not a cycle, so it is not an $s$-pair state.
		Otherwise, 
		\begin{eqnarray*}
			U\left(\frac{\tau'}{2}\right) \bu &=& U\left(\frac{\tau'}{2}\right) (I+s A_d) \be_a \\
			&=&\left( (\varphi_0+s\varphi_d) I + (\varphi_d + s \varphi_0)A_d\right)\be_a\\
			&=& (\varphi_0+s\varphi_d) \left(I+sA_d\right)\be_a\\
			&=&  (\varphi_0+s\varphi_d)\bu,
		\end{eqnarray*}
		which contradicts the assumption that $\tau'$ is the minimum period of $\bu$. By Theorem~\ref{Thm:RST}~(\ref{Thm:RST3}), we conclude that no perfect $s$-pair state transfer occurs from the states $(\be_a \pm \be_b)$.
	\end{proof}
\end{theorem}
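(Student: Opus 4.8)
The plan is to split into two cases according to whether $\{a,b\}$ is an antipodal pair, using Equation~(\ref{Eqn:A_d}) as the central tool: at the minimum vertex perfect state transfer time $\tau$ we have $U(\tau)=\eta A_d$, where $A_d$ is the permutation matrix sending each vertex to its unique antipode.

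First I would dispose of the case $\dist(a,b)<d$. Here I set $\alpha$ and $\beta$ to be the antipodes of $a$ and $b$, so that $A_d\be_a=\be_\alpha$ and $A_d\be_b=\be_\beta$. Applying $U(\tau)=\eta A_d$ immediately yields $U(\tau)\bu=\eta\bmu$, which proves sufficiency of (i). Since $s$ is real, $D_\bu$ and $D_\bmu$ are real density matrices, and $\bu\neq\bmu$ because $a\neq\alpha$; thus Theorem~\ref{Thm:RST}~(\ref{Thm:RST4}) forbids perfect state transfer from $\bu$ to any real state other than $D_\bu$ and $D_\bmu$, which gives necessity.

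Next I would treat the antipodal case $\dist(a,b)=d$, i.e.\ $\be_b=A_d\be_a$. Writing $\bu=(I+sA_d)\be_a$ and applying $U(\tau)=\eta A_d$ gives $U(\tau)\bu=\eta(\be_b+s\be_a)$. For $s\neq\pm1$ the states $\be_a+s\be_b$ and $\be_b+s\be_a$ are distinct real states, so as above Theorem~\ref{Thm:RST}~(\ref{Thm:RST4}) pins down the unique target and yields (ii). The delicate part is $s=\pm1$: here $\be_b+s\be_a=s\bu$ (using $s^2=1$), so $\bu$ is merely periodic at $\tau$, and I must rule out perfect state transfer entirely rather than identify a target.

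For the $s=\pm1$ subcase I would invoke Theorem~\ref{Thm:RST}~(\ref{Thm:RST3}): letting $\tau'$ be the minimum period of $\bu$, any perfect $s$-pair state transfer must occur at $\tau'/2$. Expanding $U(\tau'/2)=\sum_{j=0}^d\varphi_jA_j$ in the distance matrices and using $A_jA_d=A_{d-j}$, I would compute $U(\tau'/2)\bu=\sum_{j=0}^d(\varphi_j+s\varphi_{d-j})A_j\be_a$. The vectors $A_j\be_a$ have disjoint supports of sizes $k_j$, and since $X$ is not a cycle, $k_j\geq3$ for $1\leq j\leq d-1$. Hence if any middle coefficient $\varphi_j+s\varphi_{d-j}$ is nonzero, the image has at least three nonzero entries and is not an $s$-pair state; while if all middle coefficients vanish, the identity $s^2=1$ collapses the image to a scalar multiple of $\bu$, making $\bu$ periodic at $\tau'/2<\tau'$ and contradicting minimality of $\tau'$. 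Either way no perfect $s$-pair state transfer occurs, completing the characterization. I expect this last contradiction—balancing the ``too many nonzero entries'' outcome against the ``periodic too early'' outcome—to be the crux of the argument.
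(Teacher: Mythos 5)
Your proposal is correct and follows essentially the same route as the paper's own proof: the same case split on $\dist(a,b)<d$ versus $\dist(a,b)=d$ via $U(\tau)=\eta A_d$, the same appeal to Theorem~\ref{Thm:RST}~(\ref{Thm:RST4}) to pin down the unique real target in each case, and the same treatment of $s=\pm 1$ by expanding $U(\tau'/2)$ in the distance matrices and playing the bound $k_j\geq 3$ against the minimality of the period $\tau'$ via Theorem~\ref{Thm:RST}~(\ref{Thm:RST3}). I see no gaps.
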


Please see \cite{Coutinho2015} for a list of distance-regular graphs that have (vertex)
perfect state transfer, which includes the $n$-cubes.

Note that every even cycle is an antipodal distance-regular graph of class size two.  The cycle $C_4$ is the only cycle admitting (vertex) perfect state transfer.
From Theorem~\ref{Thm:CnPST}, $C_6$ and $C_8$ are the only other cycles that have perfect $s$-pair state transfer.


\section{Line graphs}
\label{Section:Line}

Given a graph $X$, we use $V(X)$ and $E(X)$ to denote its vertex set and its edge set, respectively.
The {\sl line graph of $X$}, denoted by $\Line{X}$, is the graph with vertex set $E(X)$ and two vertices  (or two edges of $X$) are adjacent in $\Line{X}$ 
if and only if they are incident in $X$.

When $X$ admits perfect plus state transfer between $\be_a+\be_b$ and $\be_\alpha+\be_\beta$, for some edges $\{a,b\}$, $\{\alpha,\beta\}$ in $X$,
a natural question arises: does the line graph $\Line{X}$ admit (vertex) perfect state transfer between the corresponding vertices, and vice versa?
We address this question in this section, where we assume that $Q$ is the Hamiltonian of $X$ and the adjacency matrix, $\LineA$, is the Hamiltonian of $\Line{X}$.

Suppose $X$ has $n$ vertices and $m$ edges.  The {\sl incidence matrix of $X$} is an $n\times m$ matrix $R$ satisfying
\begin{equation*}
R_{a,\varepsilon} = 
\begin{cases}
1 & \text{if the vertex $a$ is incident to the edge $\varepsilon$ in $X$,}\\
0 & \text{otherwise.}
\end{cases}
\end{equation*}
Note that each column of $R$ is a plus state $\be_a+\be_b$, for some edge $\{a,b\} \in E(X)$.
It also follows immediately that
\begin{equation}
\label{Eqn:AQR}
Q = RR^T, \quad \LineA = R^TR - 2I,
\end{equation}
and
\begin{equation}
\label{Eqn:AandQ}
R U_{\LineA}(t) = e^{2\ii t} U_Q(t) R,
\end{equation}
for any time $t \in \R$.

To simplify notations, for an edge $\{a,b\}$ in $X$, we use $ab$ to denote the corresponding vertex in $\Line{X}$, and $\bff_{ab}$ for its vertex state.
We use $\Lsupp_{\bff_{ab}}$ to denote the support of $\bff_{ab}$ with respect to $\LineA$.

\begin{theorem}
	\label{Thm:LinetoQ}
	Let $\{a,b\}, \{\alpha, \beta\} \in E(X)$.
	If $\Line{X}$ admits (vertex) perfect state transfer between $\bff_{ab}$ and $\bff_{\alpha\beta}$, then $X$ admits perfect plus state transfer between
	$\be_a+\be_b$ and $\be_\alpha+\be_\beta$.
	\begin{proof}
		Suppose there exist $\tau >0$ and phase factor $\eta$ such that  $U_{\LineA}(\tau)\bff_{ab} = \eta \bff_{\alpha\beta}$.
		Left-multiplying $R$ to the both sides yields
		\begin{equation*}
		R U_{\LineA}(t) \bff_{ab}  = e^{2\ii t} U_Q(t) R\bff_{ab} = e^{2\ii t}U_Q(t) \left(\be_a+\be_b\right),
		\end{equation*}
		and
		\begin{equation*}\eta R\bff_{\alpha\beta} = \eta \left(\be_\alpha+\be_\beta\right).
		\end{equation*}
		Therefore, we get $U_Q(t) \left(\be_a+\be_b\right) = \left(\eta e^{-2\ii t}\right) \left(\be_\alpha+\be_\beta\right)$.
	\end{proof}
\end{theorem}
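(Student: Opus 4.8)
The plan is to push the perfect state transfer in $\Line{X}$ down to $X$ using the incidence matrix $R$ as an intertwiner between the two transition matrices. The single fact that does all the work is the relation~(\ref{Eqn:AandQ}), namely $R\, U_{\LineA}(t) = e^{2\ii t}\, U_Q(t)\, R$, together with the observation recorded just above it that the $\{a,b\}$-column of $R$ is exactly the plus state $R\bff_{ab} = \be_a + \be_b$.

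First I would unpack the hypothesis: vertex perfect state transfer in $\Line{X}$ between $\bff_{ab}$ and $\bff_{\alpha\beta}$ means there is a time $\tau>0$ and a phase factor $\eta$ with $U_{\LineA}(\tau)\,\bff_{ab} = \eta\,\bff_{\alpha\beta}$. Left-multiplying this identity by $R$ and invoking~(\ref{Eqn:AandQ}) turns the left-hand side into $e^{2\ii\tau}\,U_Q(\tau)\,R\bff_{ab} = e^{2\ii\tau}\,U_Q(\tau)(\be_a+\be_b)$, while the right-hand side becomes $\eta\,R\bff_{\alpha\beta} = \eta(\be_\alpha+\be_\beta)$. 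Equating the two and dividing through by $e^{2\ii\tau}$ yields
\begin{equation*}
U_Q(\tau)(\be_a+\be_b) = \bigl(\eta e^{-2\ii\tau}\bigr)(\be_\alpha+\be_\beta).
\end{equation*}
Since $\eta e^{-2\ii\tau}$ is again a unit-modulus complex number, it is a legitimate phase factor, and this is precisely the statement that $X$ admits perfect plus state transfer from $\be_a+\be_b$ to $\be_\alpha+\be_\beta$ relative to $Q$ at time $\tau$.

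There is essentially no obstacle in this direction; the content is entirely in the intertwining relation, which is itself immediate once one notes $R\,(R^TR)^k = (RR^T)^k R$ for every $k$ (so that $R$ commutes past any power series in $R^TR$ at the cost of swapping it for $RR^T$) and that $\LineA = R^TR - 2I$ contributes only the scalar $e^{2\ii t}$. The subtlety worth flagging is that the argument is inherently one-directional: $R$ is an $n\times m$ matrix and generally has no left inverse, so one cannot simply reverse the computation to recover line-graph state transfer from plus state transfer in $X$. Establishing the converse would require a genuinely different argument controlling the kernel of $R$ and the eigenvalue supports, which is why the present theorem asserts only this single implication.
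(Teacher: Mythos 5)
Your proof is correct and is essentially identical to the paper's own argument: left-multiply the perfect state transfer equation by $R$, apply the intertwining relation $R\,U_{\LineA}(t) = e^{2\ii t}\,U_Q(t)\,R$ from~(\ref{Eqn:AandQ}), and use $R\bff_{ab} = \be_a+\be_b$. Your added remarks justifying the intertwining relation and flagging why the converse fails are sound but supplementary; the core derivation matches the paper step for step.
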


The null space of $R$ plays a significant role in whether the converse of Theorem~\ref{Thm:LinetoQ} holds.
Note that
\begin{equation*}
\text{nullity of  $R$} =
\begin{cases}
m-n+1 & \text{if $X$ is bipartite,}\\
m-n & \text{if $X$ is  non-bipartite.}
\end{cases}
\end{equation*}
\begin{proposition}
	\label{Prop:FullColR}
	Suppose $X$ is either a tree or a unicylic graph with an odd cycle, and
	$\{a,b\}, \{\alpha, \beta\} \in E(X)$.   
	Perfect plus state transfer occurs between $(\be_a + \be_b)$ and $(\be_{\alpha}+\be_{\beta})$ in $X$ if and only if 
	(vertex) perfect state transfer occurs between $\bff_{ab}$ and $\bff_{\alpha\beta}$ in $\Line{X}$.
	\begin{proof}
		Suppose $U_Q(\tau)(\be_a+\be_b) = \eta (\be_\alpha+\be_\beta)$.  Equation~(\ref{Eqn:AandQ}) gives
		\begin{equation*}
		R \left(U_{\LineA}(\tau) \bff_{ab}\right)  =R \left( e^{2\ii \tau}\eta  \bff_{\alpha\beta}\right).
		\end{equation*}
		As the columns of $R$ are linearly independent, we get
		\begin{equation*}
		U_{\LineA}(\tau) \bff_{ab} = e^{2\ii \tau}\eta  \bff_{\alpha\beta}.
		\end{equation*}
		The converse follows from Theorem~\ref{Thm:LinetoQ}.
	\end{proof}
\end{proposition}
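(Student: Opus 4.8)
The plan is to exploit the intertwining relation~(\ref{Eqn:AandQ}) together with the fact that, for the two families of graphs in the hypothesis, the incidence matrix $R$ has trivial null space. One direction comes for free: Theorem~\ref{Thm:LinetoQ} already shows that (vertex) perfect state transfer between $\bff_{ab}$ and $\bff_{\alpha\beta}$ in $\Line{X}$ forces perfect plus state transfer between $\be_a+\be_b$ and $\be_\alpha+\be_\beta$ in $X$, with no restriction on $X$ whatsoever. So the content of the proposition lies entirely in the converse, and this is precisely where the hypothesis on $X$ enters.

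First I would record that the stated nullity formula gives $\ker R = \{\zero\}$ for both families. If $X$ is a tree then it is bipartite with $m=n-1$, so the nullity is $m-n+1=0$; if $X$ is unicyclic with an odd cycle then it is non-bipartite with $m=n$, so the nullity is $m-n=0$. In either case the columns of $R$ are linearly independent, equivalently $R$ is injective as a map $\C^m\to\C^n$.

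For the converse, suppose $U_Q(\tau)(\be_a+\be_b)=\eta(\be_\alpha+\be_\beta)$ for some phase factor $\eta$. I would use that each column of $R$ is the plus state of the corresponding edge, so $R\bff_{ab}=\be_a+\be_b$ and $R\bff_{\alpha\beta}=\be_\alpha+\be_\beta$. Applying the intertwining relation~(\ref{Eqn:AandQ}) to $\bff_{ab}$ at time $\tau$ then gives
\begin{equation*}
R\,U_{\LineA}(\tau)\bff_{ab}=e^{2\ii\tau}U_Q(\tau)(\be_a+\be_b)=e^{2\ii\tau}\eta(\be_\alpha+\be_\beta)=R\left(e^{2\ii\tau}\eta\,\bff_{\alpha\beta}\right).
\end{equation*}
Since $R$ is injective, I can cancel it from the left to obtain $U_{\LineA}(\tau)\bff_{ab}=e^{2\ii\tau}\eta\,\bff_{\alpha\beta}$, which is exactly (vertex) perfect state transfer in $\Line{X}$.

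I do not expect a genuine obstacle here; the only point requiring care is recognizing that trees and odd-unicyclic graphs are precisely the connected graphs for which $\ker R$ is trivial, so that the left-cancellation of $R$ is legitimate. For any other connected graph the null space is nontrivial and the argument breaks down: a component of $U_{\LineA}(\tau)\bff_{ab}$ lying in $\ker R$ becomes invisible after multiplying by $R$, so the plus-state transfer in $X$ need not lift to vertex transfer in $\Line{X}$. This is exactly why the hypothesis restricts to these two families, and why the converse (unlike Theorem~\ref{Thm:LinetoQ}) is not available in general.
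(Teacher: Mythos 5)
Your proposal is correct and follows essentially the same route as the paper's proof: the forward direction by left-cancelling $R$ in the intertwining relation~(\ref{Eqn:AandQ}), justified by the triviality of $\ker R$ for trees and odd-unicyclic graphs, and the converse by citing Theorem~\ref{Thm:LinetoQ}. Your explicit verification of the nullity via the formula preceding the proposition is a detail the paper leaves implicit, but the argument is the same.
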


\begin{example}
	\label{Ex:K2_4n}
	For $n\geq 1$, let $a$ and $\alpha$ be the two vertices of degree $4n$ in $K_{2,4n}$.  Let $b$ be a vertex of degree two in $K_{2,4n}$. 
	Then 
	\begin{equation*}
	U_Q\left(\frac{\pi}{2}\right) (\be_a+\be_b) = -(\be_\alpha+\be_b)
	\quad \text{and}\quad
	U_{\LineA}\left(\frac{\pi}{2}\right)\bff_{ab} = \bff_{\alpha b}.
	\end{equation*}
\end{example}

The converse of Theorem~\ref{Thm:LinetoQ} is not true in general, see Corollary~\ref{Cor:LnCube}.
We proceed to characterize when the converse of Theorem~\ref{Thm:LinetoQ} holds by 
investigating strong cospectrality between $\bff_{ab}$ and $\bff_{\alpha\beta}$ in $\Line{X}$.
Let the spectral decomposition of $Q$ be
\begin{equation*}
Q = \sum_{\lambda \in \spec(Q)} \lambda E_\lambda,
\end{equation*}
and the spectral decomposition of $\LineA$ be
\begin{equation*}
\LineA = \sum_{\theta \in \spec(\LineA)} \theta F_\theta.
\end{equation*}
It follows from  (\ref{Eqn:AQR}) that if $\lambda >0$ is an eigenvalue of $Q$ then $(\lambda -2)$ is an eigenvalue of $\LineA$,  and vice versa.

\begin{lemma}
	\label{Lem:EandF}
	For $\lambda \in \spec(Q)\backslash\{0\}$,  
	\begin{equation*}
	F_{(\lambda-2)} \bff_{ab} = \pm F_{(\lambda-2)} \bff_{\alpha\beta} 
	\end{equation*}
	if and only if
	\begin{equation*}
	E_\lambda (\be_a+\be_b) = \pm E_\lambda (\be_\alpha+\be_\beta).
	\end{equation*}
	\begin{proof}
		If $Q\bxx=\lambda\bxx$ with $\lambda >0$,
		then $\LineA R^T\bxx = (\lambda-2) R^T\bxx$ and the column space of $R^TE_\lambda R$ is the $(\lambda -2)$-eigenspace of $\LineA$.
		Further, as
		\begin{equation*}
		\left(R^TE_\lambda R\right)\left(R^TE_\lambda' R\right)=
		\begin{cases}
		\lambda \left(R^TE_\lambda R\right) & \text{if $\lambda = \lambda'$,}\\
		0 & \text{otherwise.}
		\end{cases}
		\end{equation*}
		We see that 
		\begin{equation}
		\label{Eqn:EF}
		F_{(\lambda-2)} = \lambda^{-1} \left(R^TE_\lambda R\right)
		\end{equation} 
		is the matrix of orthogonal projection onto the $(\lambda-2)$-eigenspace of $\LineA$.

		Left-multiplying both sides of Equation~(\ref{Eqn:EF}) with $R$ yields $RF_{(\lambda-2)} = E_\lambda R$.
		As 
		\begin{equation*}
		R\bff_{ab} = (\be_a+\be_b)
		\quad \text{and}\quad
		R\bff_{\alpha\beta}=(\be_\alpha+\be_\beta),
		\end{equation*} 
		$F_{(\lambda-2)} \bff_{ab} = \pm F_{(\lambda-2)} \bff_{\alpha\beta}$ implies 
		$E_\lambda \left(\be_a+\be_b\right)=\pm E_\lambda \left(\be_\alpha+\be_\beta\right)$.
		
		Conversely, 
		$E_\lambda (\be_a+\be_b) = \pm E_\lambda (\be_\alpha+\be_\beta)$ implies 
		$RF_{(\lambda-2)} \bff_{ab} = \pm RF_{(\lambda-2)} \bff_{\alpha\beta}$.
		Right multiplying both sides with $R^T$ gives 
		\begin{equation*}
		\left(\LineA+2I\right)F_{(\lambda-2)} \bff_{ab}  = \pm \left(\LineA+2I\right) F_{(\lambda-2)} \bff_{\alpha\beta}.
		\end{equation*}
		Since $\lambda\neq 0$, we get $F_{(\lambda-2)} \bff_{ab} = \pm F_{(\lambda-2)} \bff_{\alpha\beta}$.
	\end{proof}
\end{lemma}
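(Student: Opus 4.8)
The plan is to exploit the standard spectral correspondence between $Q = RR^T$ and $R^TR = \LineA + 2I$ furnished by the incidence matrix $R$, thereby reducing the equivalence of the two sign conditions to a single intertwining relation $R F_{(\lambda-2)} = E_\lambda R$. The whole argument is driven by moving vectors back and forth between $\C^n$ and $\C^m$ via $R$ and $R^T$, tracking how the projections $E_\lambda$ and $F_{(\lambda-2)}$ transform.

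First I would establish that, for each nonzero eigenvalue $\lambda$ of $Q$, the operator $\lambda^{-1} R^T E_\lambda R$ is precisely the orthogonal projection $F_{(\lambda-2)}$ onto the $(\lambda-2)$-eigenspace of $\LineA$. Symmetry is immediate from symmetry of $E_\lambda$. For the projection identity I would compute $(R^T E_\lambda R)(R^T E_\lambda R) = R^T E_\lambda Q E_\lambda R = \lambda R^T E_\lambda R$, using $RR^T = Q$, $Q E_\lambda = \lambda E_\lambda$ and $E_\lambda^2 = E_\lambda$; dividing by $\lambda^2$ gives idempotency. That the range is exactly the $(\lambda-2)$-eigenspace follows because $R^T$ maps the $\lambda$-eigenspace of $Q$ injectively into the $(\lambda-2)$-eigenspace of $\LineA$ (injectivity from $\|R^T\bx\|^2 = \lambda\|\bx\|^2 > 0$ for $\bx$ in that eigenspace), and because the nonzero eigenvalues of $RR^T$ and $R^TR$ coincide with matching multiplicities, forcing this injection to be a bijection onto the full $(\lambda-2)$-eigenspace.

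From this formula I would immediately derive the intertwining identity $R F_{(\lambda-2)} = \lambda^{-1} R R^T E_\lambda R = E_\lambda R$, again using $RR^T = Q$ and $Q E_\lambda = \lambda E_\lambda$. The forward direction is then a one-line left-multiplication: applying $R$ to $F_{(\lambda-2)} \bff_{ab} = \pm F_{(\lambda-2)} \bff_{\alpha\beta}$ and invoking $R F_{(\lambda-2)} = E_\lambda R$ together with $R\bff_{ab} = \be_a + \be_b$ and $R\bff_{\alpha\beta} = \be_\alpha + \be_\beta$ yields $E_\lambda(\be_a+\be_b) = \pm E_\lambda(\be_\alpha+\be_\beta)$ directly.

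For the converse I would start from $E_\lambda(\be_a+\be_b) = \pm E_\lambda(\be_\alpha+\be_\beta)$, rewrite the left sides as $E_\lambda R \bff_{ab} = R F_{(\lambda-2)}\bff_{ab}$ (and similarly for $\alpha\beta$), and then left-multiply by $R^T$ to obtain $(\LineA + 2I) F_{(\lambda-2)} \bff_{ab} = \pm (\LineA + 2I) F_{(\lambda-2)} \bff_{\alpha\beta}$, using $R^TR = \LineA + 2I$. Since $F_{(\lambda-2)}\bff_{ab}$ lies in the $(\lambda-2)$-eigenspace, $\LineA + 2I$ acts on it as the scalar $\lambda$, so both sides reduce to $\lambda F_{(\lambda-2)}\bff_{ab} = \pm \lambda F_{(\lambda-2)}\bff_{\alpha\beta}$, and cancelling $\lambda$ recovers the claim. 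The crux of the argument—and the one place where $\lambda \neq 0$ is genuinely needed—is the eigenspace correspondence in the first step: one must confirm that $\lambda^{-1}R^T E_\lambda R$ is the \emph{full} projection $F_{(\lambda-2)}$ and not a projection onto a proper subspace, and this is exactly where the matched-multiplicity fact for $RR^T$ versus $R^TR$ is invoked. The hypothesis $\lambda \neq 0$ also reappears crucially at the cancellation step in the converse.
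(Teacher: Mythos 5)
Your proposal is correct and follows essentially the same route as the paper: you establish $F_{(\lambda-2)} = \lambda^{-1} R^T E_\lambda R$, derive the intertwining relation $R F_{(\lambda-2)} = E_\lambda R$, obtain the forward implication by left-multiplying by $R$, and the converse by left-multiplying by $R^T$ and cancelling $\lambda \neq 0$ via $R^T R = \LineA + 2I$. The only difference is that you spell out in more detail why $\lambda^{-1} R^T E_\lambda R$ is the \emph{full} projection (injectivity of $R^T$ on the eigenspace plus the matched multiplicities of the nonzero spectra of $RR^T$ and $R^T R$), which the paper asserts more briefly.
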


Note that $0\in \spec(Q)$ if and only if $X$ is bipartite.  
Let $\bxx$ be an $0$-eigenvector of the signless Laplacian matrix of a bipartite graph $X$.  Then, for $\{a,b\}\in E(X)$, we have $\bxx_a=-\bxx_b$.  Thus the eigenvalue support of 
the plus state $\be_a+\be_b$ does not contain the eigenvalue $0$. 
We conclude that  strong cospectrality of the vertex states $\bff_{ab}$ and $\bff_{\alpha\beta}$ in $\Line{X}$ implies
strong cospectrality of the plus states $\be_a+\be_b$ and $\be_\alpha+\be_\beta$ in $X$.  
For the converse, we need an additional condition given in the lemma below.

\begin{lemma}
	\label{Lem:-2}
	Let $X$ be a graph with $n$ vertices and $m$ edges,
	where $m\geq n$ if $X$ is bipartite and $m>n$ otherwise.
	Suppose, for some $\{a,b\}, \{\alpha,\beta\}\in E(X)$, the plus states $(\be_a+\be_b)$ and $(\be_\alpha+\be_\beta)$ are strongly cospectral with respect to $Q$.
	Then $\bff_{ab}$ and $\bff_{\alpha\beta}$ are strongly cospectral with respect to $\LineA$ if and only if
	\begin{equation*}
	F_{-2} \bff_{ab} = \pm F_{-2}\bff_{\alpha\beta}.
	\end{equation*}
\end{lemma}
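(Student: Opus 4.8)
The plan is to exploit the correspondence between the nonzero spectrum of $Q=RR^T$ and the spectrum of $\LineA=R^TR-2I$ already encoded in Lemma~\ref{Lem:EandF}, and to isolate the eigenvalue $-2$ as the single place where that correspondence is silent. First I would record the spectral bookkeeping. Since the nonzero eigenvalues of $RR^T$ and $R^TR$ coincide with matching multiplicities, the shift $\lambda\mapsto\lambda-2$ is a bijection from $\spec(Q)\setminus\{0\}$ onto $\spec(\LineA)\setminus\{-2\}$; indeed $-2$ could only arise from $\lambda-2$ with $\lambda=0$, which is excluded. Moreover $-2\in\spec(\LineA)$ precisely when $R$ has nontrivial null space, and under the stated hypotheses ($m\geq n$ if $X$ is bipartite, $m>n$ otherwise) the nullity formula recorded before the lemma guarantees this, so $F_{-2}\neq 0$ and the displayed condition is genuinely a constraint. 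Thus every $\theta\in\spec(\LineA)$ is either $-2$ or of the form $\lambda-2$ with $\lambda\in\spec(Q)$ positive.

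Next I would dispatch the two directions. Strong cospectrality of $\bff_{ab}$ and $\bff_{\alpha\beta}$ with respect to $\LineA$ means $F_\theta\bff_{ab}=\pm F_\theta\bff_{\alpha\beta}$ for every $\theta\in\spec(\LineA)$, so the forward implication is immediate by specializing to $\theta=-2$. For the converse, I would check $F_\theta\bff_{ab}=\pm F_\theta\bff_{\alpha\beta}$ at each $\theta$. At $\theta=-2$ this is exactly the hypothesis $F_{-2}\bff_{ab}=\pm F_{-2}\bff_{\alpha\beta}$. At each $\theta=\lambda-2$ with $\lambda>0$, Lemma~\ref{Lem:EandF} states that $F_{\lambda-2}\bff_{ab}=\pm F_{\lambda-2}\bff_{\alpha\beta}$ is equivalent to $E_\lambda(\be_a+\be_b)=\pm E_\lambda(\be_\alpha+\be_\beta)$, which holds because the plus states are assumed strongly cospectral with respect to $Q$. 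Since these cases exhaust $\spec(\LineA)$, strong cospectrality in $\Line{X}$ follows.

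The one point meriting care is why $-2$ is the sole obstruction beyond the strong cospectrality of the plus states, i.e.\ why the $\lambda=0$ eigenvalue of $Q$ (present when $X$ is bipartite) contributes nothing. Concretely, $F_{-2}$ projects onto the null space of $R$, so $RF_{-2}=0$; the $-2$-component of $\bff_{ab}$ is thus exactly the part annihilated by $R$ and hence invisible to $\be_a+\be_b=R\bff_{ab}$, which is why it must be imposed separately. This is also consistent with the observation made just before the lemma, that a $0$-eigenvector of $Q$ is orthogonal to every plus state, so the $\lambda=0$ strong-cospectrality condition is vacuous and plays no role. No genuine computation is required here; the essential work was already carried out in deriving the projection identity~(\ref{Eqn:EF}) and Lemma~\ref{Lem:EandF}, and the present lemma is the clean assembly of those facts together with the spectral count.
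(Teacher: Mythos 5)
Your proposal is correct and follows essentially the same route as the paper: the paper states this lemma without a separate proof precisely because it is the assembly of Lemma~\ref{Lem:EandF} (handling every eigenvalue $\lambda-2$ with $\lambda\in\spec(Q)\setminus\{0\}$), the observation made just before the lemma that a $0$-eigenvector of $Q$ satisfies $\bx_a=-\bx_b$ on every edge and so the eigenvalue $0$ never lies in the support of a plus state, and the nullity count for $R$ showing that under the hypotheses $-2\in\spec(\LineA)$ is the one remaining eigenvalue to control. Your additional care in verifying that $m\geq n$ (bipartite) or $m>n$ (non-bipartite) forces $F_{-2}\neq 0$ matches exactly the role those hypotheses play in the paper.
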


Note that $-2 \not \in \spec(\LineA)$ if and only if $X$ is a tree or a non-bipartite unicyclic graph (see Proposition~\ref{Prop:FullColR}).
For all other graphs, we see from  the proofs of Theorems~1 and 2 in \cite{Akbari} that  $-2 \not\in \Lambda_{\bff_{ab}}$  only if $\{a,b\}$ is a cut-edge in $X$.

\begin{lemma}
	\label{Lem:Bip}
	Suppose $X$ is a bipartite graph on $n$ vertices with $m\geq n$ edges.
	If $\bff_{ab}$ and $\bff_{\alpha\beta}$ are strongly cospectral vertices in $\Line {X}$, then $\left\{\{a,b\}, \{\alpha, \beta\}\right\}$ is an edge-cut in $X$.
	\begin{proof}
		Let $\{a,b\}$ and $\{\alpha, \beta\}$ be edges in $X$ that do not form an edge-cut of $X$.
		From the proof of Theorem~1 of \cite{Akbari}, using a spanning tree of $X\backslash \left\{\{a,b\}, \{\alpha, \beta\}\right\}$, we can construct a vector $\by$  in the null space of $R$ such that
		\begin{equation*}
		\bff_{ab}^T \by = 1\quad \text{and}\quad \bff_{\alpha\beta}^T \by = 0.
		\end{equation*}
		Now $\by$ is an $(-2)$-eigenvector of $\LineA$, which implies
		$F_{-2} \bff_{ab} \neq \pm F_{-2}\bff_{\alpha\beta}$.
		The result follows from Lemma~\ref{Lem:-2}.
	\end{proof}
\end{lemma}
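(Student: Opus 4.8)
The plan is to prove the contrapositive: assuming the two edges $\{a,b\}$ and $\{\alpha,\beta\}$ do \emph{not} form an edge-cut of $X$, I will produce a $(-2)$-eigenvector of $\LineA$ that distinguishes $\bff_{ab}$ from $\bff_{\alpha\beta}$, thereby showing they are not strongly cospectral. The starting observation is that since $\LineA + 2I = R^T R$ by (\ref{Eqn:AQR}), the $(-2)$-eigenspace of $\LineA$ is exactly the null space of $R$, and $F_{-2}$ is the orthogonal projection onto it. Because $X$ is bipartite and connected with $m \geq n$, the nullity of $R$ equals $m-n+1 \geq 1$, so $-2 \in \spec(\LineA)$ and $F_{-2} \neq \zero$; this is precisely where the hypothesis $m \geq n$ enters.

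Next I would build the required null vector $\by$. Since $\{\{a,b\},\{\alpha,\beta\}\}$ is not an edge-cut, the graph $X' = X \setminus \{\{a,b\},\{\alpha,\beta\}\}$ is connected, so it has a spanning tree $T$ that spans all of $V(X)$ and contains neither $\{a,b\}$ nor $\{\alpha,\beta\}$. Adding the edge $\{a,b\}$ to $T$ creates a unique fundamental cycle $C$, which has even length because $X$ is bipartite. On the edges of $C$, taken in cyclic order, I assign alternating values $\pm 1$ scaled so that the coordinate at $\{a,b\}$ equals $1$, and set $\by$ to be zero on every edge outside $C$. At each vertex the two incident cycle-edges then carry opposite signs, so $R\by = \zero$; the even length of $C$ is exactly what lets the alternating assignment close up consistently. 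Because $T \subseteq X'$ and $\{a,b\}\ne\{\alpha,\beta\}$, the edge $\{\alpha,\beta\}$ lies on neither $T$ nor $\{a,b\}$, hence not on $C$, so $\by_{\{\alpha,\beta\}} = 0$.

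With $\by$ in hand the conclusion is short. By construction $\bff_{ab}^T\by = \by_{\{a,b\}} = 1$ and $\bff_{\alpha\beta}^T\by = \by_{\{\alpha,\beta\}} = 0$. Since $\by$ is a $(-2)$-eigenvector we have $F_{-2}\by = \by$, and $F_{-2}$ is symmetric, so $(F_{-2}\bff_{ab})^T\by = 1$ while $(F_{-2}\bff_{\alpha\beta})^T\by = 0$. If $F_{-2}\bff_{ab} = \pm F_{-2}\bff_{\alpha\beta}$ held, pairing with $\by$ would force $1 = \pm 0$, a contradiction; hence $F_{-2}\bff_{ab} \neq \pm F_{-2}\bff_{\alpha\beta}$. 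By Lemma~\ref{Lem:-2} (equivalently, because strong cospectrality would demand equality of the projections at \emph{every} eigenvalue, including $-2$), the vertices $\bff_{ab}$ and $\bff_{\alpha\beta}$ are not strongly cospectral in $\Line{X}$, which establishes the contrapositive.

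I expect the main obstacle to be the construction of $\by$: one must simultaneously guarantee that the fundamental cycle is even (so the alternating signing is consistent and $\by$ genuinely lies in $\ker R$) and that it avoids $\{\alpha,\beta\}$. Both are secured by deleting \emph{both} edges before choosing the spanning tree, together with the bipartiteness of $X$; this is the construction underlying Theorem~1 of \cite{Akbari}. The remaining algebra — transporting the inner products through the symmetric projection $F_{-2}$ — is then immediate.
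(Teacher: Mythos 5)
Your proof is correct and takes essentially the same route as the paper: argue the contrapositive, build a vector in the null space of $R$ (equivalently a $(-2)$-eigenvector of $\LineA$) supported on a cycle through $\{a,b\}$ avoiding $\{\alpha,\beta\}$ via a spanning tree of $X\backslash\left\{\{a,b\},\{\alpha,\beta\}\right\}$, and pair it through the symmetric projection $F_{-2}$ to get $F_{-2}\bff_{ab}\neq\pm F_{-2}\bff_{\alpha\beta}$, concluding by Lemma~\ref{Lem:-2}. The only difference is that you spell out the fundamental-cycle, alternating-sign construction (with bipartiteness guaranteeing the even cycle closes up) that the paper delegates to the proof of Theorem~1 of \cite{Akbari}.
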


Theorem~\ref{Thm:DRG} implies that $n$-cube has plus state transfer, for $n \geq 2$.  We now rule out perfect state transfer 
in the line graph of $n$-cube, for $n\geq 3$.

\begin{corollary}
	\label{Cor:LnCube}
	For $n\geq 3$, the line graph of the $n$-cube does not admit (vertex) perfect state transfer.
	\begin{proof}
		It is well-known that the $n$-cube is a bipartite graph with edge-connectivity $n$.  This result follows immediately from Lemma~\ref{Lem:Bip}.
	\end{proof}
\end{corollary}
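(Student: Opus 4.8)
The plan is to reduce the non-existence of perfect state transfer in the line graph to a purely combinatorial statement about the $n$-cube, using the machinery already in place. Since strong cospectrality of the two vertices involved is a necessary condition for perfect state transfer (as derived in the discussion leading to Theorem~\ref{Thm:RealPST}), it suffices to show that $\Line{X}$ contains no pair of distinct strongly cospectral vertices $\bff_{ab}$ and $\bff_{\alpha\beta}$ when $X$ is the $n$-cube and $n \geq 3$.

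First I would check that the $n$-cube satisfies the hypotheses of Lemma~\ref{Lem:Bip}. It is bipartite (colour each vertex by the parity of its coordinate sum), it has $2^n$ vertices and $n\cdot 2^{n-1}$ edges, and since $n\cdot 2^{n-1} \geq 2^n$ for every $n \geq 2$, its number of edges is at least its number of vertices; this is precisely the hypothesis of Lemma~\ref{Lem:Bip}. Applying that lemma, any two distinct strongly cospectral vertices $\bff_{ab}$ and $\bff_{\alpha\beta}$ of $\Line{X}$ must come from a pair of edges $\{a,b\}, \{\alpha,\beta\}$ whose deletion disconnects $X$; that is, $\{\{a,b\},\{\alpha,\beta\}\}$ is a two-element edge-cut of the $n$-cube.

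The concluding step invokes the connectivity of the hypercube. It is standard that the edge-connectivity of the $n$-cube equals its minimum degree $n$ (the two coincide because the hypercube is vertex-transitive). For $n \geq 3$ this is at least $3$, so no set of two edges can disconnect the graph, and hence no two-element edge-cut exists. This contradicts the conclusion of the previous step, so $\Line{X}$ has no pair of distinct strongly cospectral vertices, and therefore admits no perfect state transfer.

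Given Lemma~\ref{Lem:Bip}, the argument is essentially immediate and I do not anticipate a genuine obstacle; the entire content has been front-loaded into that lemma (and the earlier null-space analysis of the incidence matrix $R$). The only points needing a moment's care are verifying the edge-count hypothesis $m\geq n$ and recalling that the edge-connectivity of the $n$-cube is exactly $n$ rather than merely at least the minimum degree --- both standard facts.
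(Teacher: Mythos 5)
Your proof is correct and follows exactly the paper's argument: verify the $n$-cube is bipartite with $m\geq n$, apply Lemma~\ref{Lem:Bip} to force a two-edge cut, and rule that out via edge-connectivity $n\geq 3$. The paper states this in one line; you have merely filled in the routine verifications.
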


\begin{lemma}
	\label{Lem:nonBip}
	Suppose $X$ is a non-bipartite graph on $n$ vertices with $m > n$ edges.
	If $\bff_{ab}$ and $\bff_{\alpha\beta}$ are strongly cospectral vertices in $\Line {X}$, then the removal of the edges $\{a,b\}$ and $\{\alpha,\beta\}$ from $X$
	results in either a disconnected graph or a bipartite graph.
	\begin{proof}
		Let $\{a,b\}$ and $\{\alpha, \beta\}$ be edges in $X$ whose removal result in a connected non-bipartite subgraph of $X$.
		From the proof of Theorem~2 of \cite{Akbari}, there exists a vector $\by$  in the null space of $R$ such that
		\begin{equation*}
		\bff_{ab}^T \by = 1\quad \text{and}\quad \bff_{\alpha\beta}^T \by = 0.
		\end{equation*}
		Now $\by$ is an $(-2)$-eigenvector of $\LineA$, which implies
		$F_{-2} \bff_{ab} \neq \pm F_{-2}\bff_{\alpha\beta}$.
		The result follows from Lemma~\ref{Lem:-2}.
	\end{proof}
\end{lemma}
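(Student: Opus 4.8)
The plan is to argue by contraposition. Assume the deletion $X':=X\backslash\{\{a,b\},\{\alpha,\beta\}\}$ is connected and non-bipartite; I will exhibit a single vector that obstructs strong cospectrality of $\bff_{ab}$ and $\bff_{\alpha\beta}$ in $\Line{X}$. The governing identity is $\LineA=R^TR-2I$ from (\ref{Eqn:AQR}): a vector $\by$ lies in the null space of $R$ if and only if $\LineA\by=-2\by$, so the null space of $R$ is exactly the $(-2)$-eigenspace of $\LineA$ and $F_{-2}$ projects onto it. Consequently, it suffices to produce one $\by$ in the null space of $R$ with $\bff_{ab}^T\by=1$ and $\bff_{\alpha\beta}^T\by=0$. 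Indeed, since $F_{-2}\by=\by$, the equality $F_{-2}\bff_{ab}=\pm F_{-2}\bff_{\alpha\beta}$ would give $1=\bff_{ab}^T\by=\by^TF_{-2}\bff_{ab}=\pm\by^TF_{-2}\bff_{\alpha\beta}=\pm\bff_{\alpha\beta}^T\by=0$, a contradiction; hence $F_{-2}\bff_{ab}\neq\pm F_{-2}\bff_{\alpha\beta}$, and Lemma~\ref{Lem:-2} then forces $\bff_{ab}$ and $\bff_{\alpha\beta}$ to fail strong cospectrality.

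The heart of the argument is therefore the construction of such a $\by$. Since $\bff_{\alpha\beta}^T\by=\by_{\alpha\beta}=0$, the vector is supported on $E(X')\cup\{\{a,b\}\}$, i.e.\ it is a null vector of the incidence matrix of $X'':=X'+\{a,b\}$ whose coordinate on $\{a,b\}$ is nonzero. I would use the combinatorial description of the null space of an unsigned incidence matrix: it is spanned by the signed indicators of even cycles (alternating values $\pm1$) together with, in the non-bipartite case, vectors supported on two edge-disjoint odd cycles joined by a path, each odd cycle carrying a nonzero ``half-source'' that the connecting path transports. Taking a path $P$ from $a$ to $b$ in the connected graph $X'$, there are two cases. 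If $P$ has odd length, then $P+\{a,b\}$ is an even cycle in $X''$, and the alternating null vector around it, normalized so that $\{a,b\}$ receives the value $1$, is the desired $\by$. If $P$ has even length, then $P+\{a,b\}$ is an odd cycle through $\{a,b\}$; since $X'$ is non-bipartite it contains a second odd cycle $C$, and since $X'$ is connected there is a path joining $P+\{a,b\}$ to $C$. The odd-cycle-pair null vector supported on this configuration assigns a nonzero value to every edge of $P+\{a,b\}$, in particular to $\{a,b\}$, and scaling produces $\by$.

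I expect the main obstacle to lie in the even-length case: one must verify that the second odd cycle $C$ and the joining path can be selected so that the resulting odd-cycle-pair vector is genuinely nonzero on $\{a,b\}$ and never uses the deleted edge $\{\alpha,\beta\}$, while correctly accounting for possible edge overlaps among $P+\{a,b\}$, the cycle $C$, and the connecting path. Rather than re-deriving this by hand, I would invoke the explicit construction in the proof of Theorem~2 of \cite{Akbari}, which builds precisely such a null vector of the incidence matrix from a spanning tree of the connected non-bipartite graph $X'$ (mirroring how the bipartite spanning-tree construction was used for Lemma~\ref{Lem:Bip}). This yields a $\by$ in the null space of $R$ with $\bff_{ab}^T\by=1$ and $\bff_{\alpha\beta}^T\by=0$, and the reduction of the first paragraph then completes the proof.
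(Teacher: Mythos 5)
Your proposal is correct and follows essentially the same route as the paper: argue by contraposition, invoke the null-space vector $\by$ from the proof of Theorem~2 of \cite{Akbari} with $\bff_{ab}^T\by=1$ and $\bff_{\alpha\beta}^T\by=0$, identify the null space of $R$ with the $(-2)$-eigenspace of $\LineA$ via $\LineA=R^TR-2I$, and conclude through Lemma~\ref{Lem:-2}. The only additions are your explicit inner-product verification that such a $\by$ forces $F_{-2}\bff_{ab}\neq\pm F_{-2}\bff_{\alpha\beta}$ and your sketch of the cycle-space combinatorics, both of which are harmless elaborations of steps the paper leaves implicit.
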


We are ready to characterize when $\Line{X}$ has (vertex) perfect state transfer,  based on information from the signless Laplacian matrix of $X$.
\begin{theorem}
	\label{Thm:LinePST}
	Let $\bu=\be_a+\be_b$ and $\bmu=\be_\alpha+\be_\beta$, for some edges $\{a,b\}$ and  $\{\alpha,\beta\}$ in $X$.
	The line graph of $X$ admits (vertex) perfect state transfer between $\bff_{ab}$ and $\bff_{\alpha\beta}$ if and only if the following conditions hold.
	\begin{enumerate}[(i)]
		\item
		\label{Thm:LinePST1}
		$X$ admits perfect plus state transfer between $\bu$ and $\bmu$.
		\item
		\label{Thm:LinePST2}
		If $-2 \in \Lambda_{\bff_{ab}} $ then $F_{-2}\bff_{ab} = \pm F_{-2}\bff_{\alpha\beta}$, and either the elements in $\Lambda_{\bff_{ab}} = \Lambda_{\bff_{\alpha\beta}}$
		are
		\begin{enumerate}[(i)]
			\item
			\label{Thm:LinePST2a}
			all integers, or
			\item
			\label{Thm:LinePST2b}
			there exists a square-free integer $\Delta>1$ such that each element of $\Lambda_{\bff_{ab}}$
			is in the form $-2+\frac{d\sqrt{\Delta}}{2}$, for some $d\in \Z$.
		\end{enumerate}
		\item
		\label{Thm:LinePST3}
		Let $\theta' \in  \Lambda_{\bff_{ab}, \bff_{\alpha\beta}}^+$, and
		\begin{equation*}
		g=\gcd \left\{\frac{\theta' - \theta}{\sqrt{\Delta}}\right\}_{\theta \in \Lambda_{\bff_{ab}}}
		\end{equation*}
		(with $\Delta=1$ for Case~(a) above).
		Then $\theta \in \Lambda_{\bff_{ab}, \bff_{\alpha\beta}}^+$ if and only if $\frac{\theta' - \theta}{g\sqrt{\Delta}}$ is even.
	\end{enumerate}
	\begin{proof}
		If $-2 \not \in \Lambda_{\bff_{ab}}$ then $\spec(\LineA) = \{\theta-2 : \theta \in \supp_\bu\}$, and 
		it follows from Lemma~\ref{Lem:EandF} and Theorem~\ref{Thm:RealPST} that Condition~(\ref{Thm:LinePST1}) implies perfect state transfer occurring
		between $\bff_{ab}$ and $\bff_{\alpha\beta}$ in $\Line{X}$.
		
		Suppose $-2 \in \Lambda_{\bff_{ab}}$.  Then Condition~(\ref{Thm:LinePST1}), $F_{-2}\bff_{ab} = \pm F_{-2}\bff_{\alpha\beta}$ and Lemma~\ref{Lem:EandF}
		imply that $\bff_{ab}$ and $\bff_{\alpha\beta}$ are strongly cospectral with respect to $\LineA$. Conditions~(\ref{Thm:LinePST2}) and (\ref{Thm:LinePST3})
		are equivalent to Theorem~\ref{Thm:RealPST}~(\ref{Thm:RealPST2}) and (\ref{Thm:RealPST3}).
	\end{proof}
\end{theorem}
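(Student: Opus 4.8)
The plan is to apply the general characterization of perfect state transfer between real states, Theorem~\ref{Thm:RealPST}, directly to the vertex states $\bff_{ab}$ and $\bff_{\alpha\beta}$ in $\Line{X}$ with Hamiltonian $\LineA$, and then translate each of its three conditions into the language of $Q$ using the spectral correspondence coming from~(\ref{Eqn:AQR}). Because $\LineA$ has integer entries, Proposition~\ref{Prop:Rat} guarantees that $\Lambda_{\bff_{ab}}$ is closed under taking algebraic conjugates, so Theorem~\ref{Thm:RealPST} applies: (vertex) perfect state transfer between $\bff_{ab}$ and $\bff_{\alpha\beta}$ holds if and only if these states are strongly cospectral with respect to $\LineA$ (condition~(\ref{Thm:RealPST1})), $\Lambda_{\bff_{ab}}$ satisfies the integer/quadratic eigenvalue condition~(\ref{Thm:RealPST2}), and the gcd/partition condition~(\ref{Thm:RealPST3}) holds. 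The engine of the translation is that $\theta\mapsto\theta+2$ is a bijection from $\spec(\LineA)\setminus\{-2\}$ onto $\spec(Q)\setminus\{0\}$, and every plus state avoids the $0$-eigenspace of $Q$, so that $\supp_{\be_a+\be_b}\subseteq\spec(Q)\setminus\{0\}$ while $-2$ may or may not lie in $\Lambda_{\bff_{ab}}$.

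I would first handle strong cospectrality. By Lemma~\ref{Lem:EandF}, for each $\lambda\neq 0$ the identity $F_{\lambda-2}\bff_{ab}=\pm F_{\lambda-2}\bff_{\alpha\beta}$ is equivalent to $E_\lambda(\be_a+\be_b)=\pm E_\lambda(\be_\alpha+\be_\beta)$, so the ``$\theta\neq -2$'' part of strong cospectrality in $\Line{X}$ matches strong cospectrality of the plus states in $X$ exactly. The only eigenvalue with no partner is $-2$, and Lemma~\ref{Lem:-2} packages the outcome: assuming the plus states are strongly cospectral with respect to $Q$, the states $\bff_{ab}$ and $\bff_{\alpha\beta}$ are strongly cospectral with respect to $\LineA$ if and only if $F_{-2}\bff_{ab}=\pm F_{-2}\bff_{\alpha\beta}$, a requirement that is vacuous when $-2\notin\Lambda_{\bff_{ab}}$. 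This is exactly the strong-cospectrality content of Condition~(\ref{Thm:LinePST1}) together with the $-2$ clause of Condition~(\ref{Thm:LinePST2}). Next, the shift by the integer $-2$ preserves the ``all integers'' property and sends $\frac{c+d\sqrt{\Delta}}{2}$ to $\frac{(c-4)+d\sqrt{\Delta}}{2}$, so $\Lambda_{\bff_{ab}}$ satisfies condition~(\ref{Thm:RealPST2}) if and only if $\supp_{\be_a+\be_b}$ does; and if $-2\in\Lambda_{\bff_{ab}}$, then writing $-2=\frac{c+d\sqrt{\Delta}}{2}$ with $\sqrt{\Delta}$ irrational forces $d=0$ and $c=-4$, which is precisely why Condition~(\ref{Thm:LinePST2b}) records the form $-2+\frac{d\sqrt{\Delta}}{2}$. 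Finally, condition~(\ref{Thm:RealPST3}) is stated over $\Lambda_{\bff_{ab}}$ already, so it is verbatim Condition~(\ref{Thm:LinePST3}).

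With these translations the two implications are short. For the forward direction, perfect state transfer in $\Line{X}$ yields Condition~(\ref{Thm:LinePST1}) immediately by Theorem~\ref{Thm:LinetoQ}; strong cospectrality with respect to $\LineA$ specializes at $\theta=-2$ to the $-2$ clause of Condition~(\ref{Thm:LinePST2}); and conditions~(\ref{Thm:RealPST2}) and~(\ref{Thm:RealPST3}) of Theorem~\ref{Thm:RealPST} for $\LineA$ give the eigenvalue form in Condition~(\ref{Thm:LinePST2}) and all of Condition~(\ref{Thm:LinePST3}). For the converse, Condition~(\ref{Thm:LinePST1}) supplies strong cospectrality of the plus states with respect to $Q$; combined with the $-2$ clause of Condition~(\ref{Thm:LinePST2}), Lemma~\ref{Lem:-2} then yields strong cospectrality of $\bff_{ab}$ and $\bff_{\alpha\beta}$ with respect to $\LineA$, while the eigenvalue form in Condition~(\ref{Thm:LinePST2}) and Condition~(\ref{Thm:LinePST3}) are exactly conditions~(\ref{Thm:RealPST2}) and~(\ref{Thm:RealPST3}); Theorem~\ref{Thm:RealPST} then delivers perfect state transfer in $\Line{X}$. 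When $-2\notin\Lambda_{\bff_{ab}}$ the supports are in exact bijection under the shift, and Condition~(\ref{Thm:LinePST1}) alone already forces everything.

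The step I expect to be the main obstacle is the bookkeeping around the eigenvalue $-2$, which has no partner under $\theta\mapsto\theta+2$ because $0\in\spec(Q)$ contributes nothing to plus-state supports. Three distinct effects must be kept straight: the extra strong-cospectrality requirement $F_{-2}\bff_{ab}=\pm F_{-2}\bff_{\alpha\beta}$; the fact that the mere presence of $-2$ pins down $c=-4$ in the quadratic case; and the way the additional eigenvalue can alter the gcd $g$ and the $\pm$ partition in condition~(\ref{Thm:RealPST3}). Lemmas~\ref{Lem:EandF} and~\ref{Lem:-2} are exactly what isolate these effects, letting the remaining spectral data transfer transparently across the shift.
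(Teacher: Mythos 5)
Your proposal is correct and takes essentially the same route as the paper's proof: both reduce the statement to Theorem~\ref{Thm:RealPST} applied to $\bff_{ab}$ and $\bff_{\alpha\beta}$ with Hamiltonian $\LineA$, use Theorem~\ref{Thm:LinetoQ} for Condition~(\ref{Thm:LinePST1}) in the forward direction, and translate strong cospectrality across the shift $\theta\mapsto\theta+2$ via Lemma~\ref{Lem:EandF}, with Lemma~\ref{Lem:-2} isolating the eigenvalue $-2$ as the only exceptional case. Your extra bookkeeping (deriving $c=-4$ in the quadratic case, and observing that $-2\in\Lambda_{\bff_{ab}}$ forces $F_{-2}\bff_{\alpha\beta}$ to behave accordingly) simply makes explicit steps the paper's terser proof leaves implicit.
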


\begin{remark}
	If $-2 \in \Lambda_{\bff_{ab}} $ then $\Lambda_{\bff_{ab}} =  \{\theta-2 : \theta \in \supp_\bu\} \cup \{-2\}$.
	Suppose Theorem~\ref{Thm:LinePST}~(\ref{Thm:LinePST2b}) holds, let 
	\begin{equation*}
	h=\gcd \left\{\frac{\lambda}{\sqrt{\Delta}}, \frac{\lambda-\lambda'}{\sqrt{\Delta}}\right\}_{\lambda, \lambda' \in \Phi_{\bu}}
	\end{equation*}
	(with $\Delta=1$ for Case (a) above).
	Then Theorem~\ref{Thm:LinePST}~(\ref{Thm:LinePST3}) can be expressed as
	\begin{equation*}
	\Phi^\varsigma_{\bu,\bmu} = \left\{\lambda \in \Phi_{\bu}  : \frac{\lambda}{h\sqrt{\Delta}} \text{ is even}\right\}
	\end{equation*}
	if $-2 \in \Lambda^\varsigma_{\bff_{ab},\bff_{\alpha\beta}}$, for $\varsigma\in \{+,-\}$.
\end{remark}

\section{Line graphs of Cartesian products}
\label{Section:Cart}

In Corollary~\ref{Cor:LnCube}, we ruled out (vertex) perfect state transfer in the line graph of  the $n$-cube, for $n\geq 3$.
In this section, we characterize adjacency perfect state transfer in $\Line{X}$ when $X$ is a Cartesian product of two connected graphs.

We introduce some notation for this section. 
Let $X=X_1\square X_2$ where $X_1$ and $X_2$ are connected graphs on $n_1$ and $n_2$ vertices with $m_1$ and $m_2$ edges, respectively. 
We assume $n_1, n_2 \geq 2$.
For $j=1,2$, we let $R_j$  be the $n_j\times m_j$ $01$-incidence matrices of $X_j$, and let $r_j=\rank R_j$.
From \cite{fan2022stabilizing}, we can assume that $R$ has the form
\begin{equation}
\label{Eqn:R}
R = \begin{bmatrix}
I_{n_1}\otimes R_2 \;\;&\;\; R_1\otimes I_{n_2}
\end{bmatrix},
\end{equation}
which has rank $\left(n_1r_2+r_1n_2-r_1r_2\right)$.

Let $Q_j=R_jR_j^T$ be the signless Laplacian matrix of $X_j$, for $j=1,2$.
Then the signless Laplacian matrix of $X_1\square X_2$ is
\begin{equation*}
\LineQ = I_{n_1}\otimes Q_2 + Q_1\otimes I_{n_2},
\end{equation*}
and the adjacency matrix, $\LineA$, of $\Line{X_1\square X_2}$ satisfies
\begin{equation*}
\renewcommand*{\arraystretch}{1.25}
\LineA + 2I_{(n_1m_2+n_2m_1)} =
\begin{bmatrix}
I_{n_1}\otimes \left(\LineA_2 + 2I_{n_2}\right) & R_1\otimes R_2^T\\
R_1^T \otimes R_2 & (\LineA_1+2I_{n_1}) \otimes I_{n_2}
\end{bmatrix},
\end{equation*}
where $\LineA_j$ is the adjacency matrix of $\Line{X_j}$ for $j=1,2$.

\begin{figure}[h!]
	\begin{center}
		\includegraphics[scale=0.75]{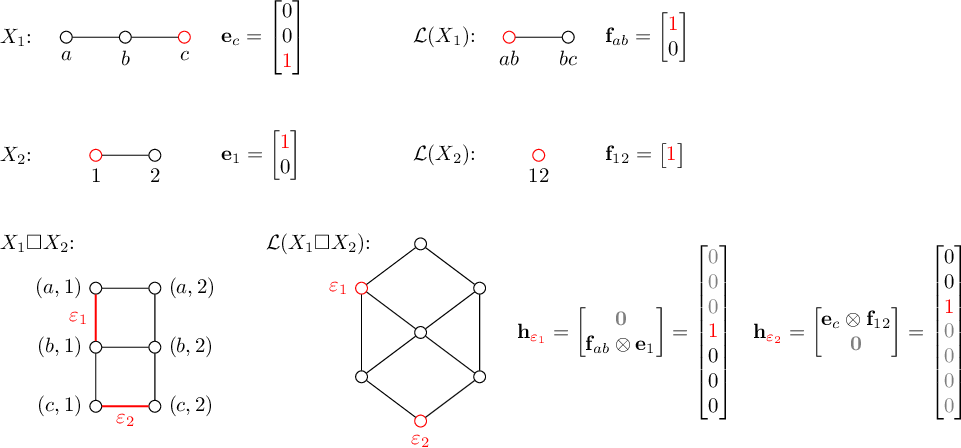}
		\caption{$\Line{P_3\square K_2}$} \label{Fig:Cart}
	\end{center}
\end{figure}

For an edge $\varepsilon$ of $X_1\square X_2$, we use $\bh_{\varepsilon}$ to denote the characteristic vector of $\varepsilon$ as a vertex state of $\Line{X_1\square X_2}$.
If $\varepsilon_1$ joins vertices $(a,\gamma)$ and $(b, \gamma)$ in $X_1\square X_2$, for some $\{a, b\} \in E(X_1)$ and $\gamma \in V(X_2)$, then 
\begin{equation*}
\bh_{\varepsilon_1} = \begin{bmatrix}\zero\\ \bff_{ab} \otimes \be_{\gamma}\end{bmatrix}.
\end{equation*}
(Recall $\{a,b\}$ is a vertex in $\Line{X_1}$ and $\bff_{ab}$ is the corresponding vertex state in $\Line{X_1}$.  Also $\be_{\gamma}$ is the vertex state for $\gamma$ in $X_2$.)
If $\varepsilon_2$ joins vertices $(c,\alpha)$ and $(c, \beta)$, for some $c \in V(X_1)$ and $\{\alpha, \beta\} \in E(X_2)$, then 
\begin{equation*}
\bh_{\varepsilon_2} = \begin{bmatrix} \be_{c}\otimes \bff_{\alpha\beta} \\ \zero \end{bmatrix}.
\end{equation*}
See Figure~\ref{Fig:Cart} for an example of these vectors when $X_1= P_3$ and $X_2=K_2$.

Suppose $Q_j \bxx_j = \lambda_j \bxx_j$, for $j=1,2$.  Then 
\begin{equation*}
\LineQ \left(\bxx_1\otimes \bxx_2\right) = \left(\lambda_1+\lambda_2\right)\left( \bxx_1\otimes \bxx_2\right)
\end{equation*} 
and
\begin{equation}
\label{Eqn:AL}
\renewcommand*{\arraystretch}{1.25}
\LineA R^T\left( \bxx_1\otimes \bxx_2\right) = \LineA \begin{bmatrix} \bxx_1\otimes R_2^T \bxx_2\\ R_1^T \bxx_1 \otimes \bxx_2\end{bmatrix} = 
\left(\lambda_1+\lambda_2-2\right) \begin{bmatrix} \bxx_1\otimes R_2^T \bxx_2\\ R_1^T \bxx_1 \otimes \bxx_2\end{bmatrix}.
\end{equation}
Hence $ \left(\lambda_1+\lambda_2-2\right)$ belongs to the eigenvalue support of $\bh_{\varepsilon_1}$ with respect to $\LineA$ only if
$\bff_{ab}^TR_1^T\bxx_1 \neq \zero$, which implies $\left(\lambda_1-2\right) \in \Lambda_{\bff_{ab}}$, the eigenvalue support of $\bff_{ab}$ with respect to $\LineA_1$.
Similarly, if $ \left(\lambda_1+\lambda_2-2\right)$ belongs to the eigenvalue support of $\bh_{\varepsilon_2}$, then $\left(\lambda_2-2\right) \in \Lambda_{\bff_{\alpha\beta}}$,
the eigenvalue support of $\bff_{\alpha\beta}$ with respect to $\LineA_2$.

For $j=1,2$, let $N_j$ be a $m_j \times (m_j-r_j)$ matrix whose column space is equal to the null space of $R_j$.
Let $N$ be a  matrix whose column space is the null space of $R$.  The matrix $N$ has $(n_1m_2+n_2m_1)$ rows and $\left(n_1(m_2-r_2)+n_2(m_1-r_1)+r_1r_2\right)$ columns.
Moreover, it follows from the proofs of Theorems~1 and 2 in \cite{Akbari} that $N_j$ has a zero row  only if the corresponding edge is a cut-edge in $X_j$.

\begin{proposition}
	\label{Prop:N}
	Let $X=X_1 \square X_2$.
	For $j=1,2$, let $\wt{R}_j$ be a matrix consisting $r_j$ linearly independent columns of $R_j$,
	and let $\wt{I}_j$ be the $(m_j \times r_j)$ matrix whose $l$-th column is $\bff_{ab}$
	if the $l$-th column of $\wt{R_j}$ is the column in $R_j$ indexed by $\{a,b\} \in E(X_j)$.
	Then the column space of
	\begin{equation}
	\renewcommand*{\arraystretch}{1.25}
	\label{Eqn:N}
	N=
	\begin{bmatrix}
	I_{n_1} \otimes N_2 & \zero & -\wt{R}_1 \otimes \wt{I}_2\\ \zero & N_1 \otimes I_{n_2} & \wt{I}_1 \otimes \wt{R}_2
	\end{bmatrix}
	\end{equation}
	is exactly the null space of $R$.  
	\begin{proof}
		For $j=1,2$, we have $R_j \wt{I}_j = \wt{R}_j$ and $R_jN_j=\zero$.  It follows immediately that $RN=\zero$.
		Since $N$ has $\big(n_1(m_2-r_2)+n_2(m_1-r_1)+r_1r_2\big)$ linearly independent columns, we conclude that the column space of $N$ is equal to 
		the null space of $R$.
	\end{proof}
\end{proposition}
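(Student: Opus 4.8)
The plan is to establish the two inclusions between $\operatorname{col}(N)$ and $\ker R$ (the null space of $R$) by a direct verification that $RN=\zero$ together with a dimension count, the latter reducing the whole problem to the linear independence of the columns of $N$.

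First I would record the two identities $R_j\wt{I}_j=\wt{R}_j$ and $R_jN_j=\zero$ for $j=1,2$. The first is immediate from the definition of $\wt{I}_j$, whose $l$-th column is the standard basis vector $\bff_{ab}$ that selects precisely the column of $R_j$ appearing as the $l$-th column of $\wt{R}_j$; the second holds because $\operatorname{col}(N_j)=\ker R_j$. Using the block form $R=[\,I_{n_1}\otimes R_2 \mid R_1\otimes I_{n_2}\,]$ from (\ref{Eqn:R}) and the mixed-product rule $(A\otimes B)(C\otimes D)=AC\otimes BD$, I would multiply $R$ against each of the three column-blocks of $N$. The first two blocks vanish since $R_2N_2=\zero$ and $R_1N_1=\zero$, and the third block evaluates to $-\wt{R}_1\otimes R_2\wt{I}_2+R_1\wt{I}_1\otimes\wt{R}_2=-\wt{R}_1\otimes\wt{R}_2+\wt{R}_1\otimes\wt{R}_2=\zero$ after substituting $R_j\wt{I}_j=\wt{R}_j$. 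Hence $\operatorname{col}(N)\subseteq\ker R$.

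Next I would invoke the stated rank $\rank R=n_1r_2+r_1n_2-r_1r_2$, so that the nullity of $R$ equals $(n_1m_2+n_2m_1)-\rank R=n_1(m_2-r_2)+n_2(m_1-r_1)+r_1r_2$, exactly the number of columns of $N$. Thus $\operatorname{col}(N)=\ker R$ follows once the columns of $N$ are shown to be independent, and this is where the real work lies. The structural fact I would isolate is that, for each $j$, the column spaces of $N_j$ and $\wt{I}_j$ intersect trivially: if $v\in\operatorname{col}(N_j)\cap\operatorname{col}(\wt{I}_j)$, then $R_jv=\zero$ because $v\in\ker R_j$, while writing $v=\wt{I}_jw$ gives $\zero=R_jv=\wt{R}_jw$, and the full column rank of $\wt{R}_j$ forces $w=\zero$, hence $v=\zero$. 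Since $\dim\operatorname{col}(N_j)+\dim\operatorname{col}(\wt{I}_j)=(m_j-r_j)+r_j=m_j$, this yields the internal direct sum $\operatorname{col}(N_j)\oplus\operatorname{col}(\wt{I}_j)=\C^{m_j}$.

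Finally, to obtain independence, suppose $Nc=\zero$ and partition $c$ into blocks $c_1,c_2,c_3$ matching the three column-blocks. Reshaping each block as a matrix via the vectorization identity $(A\otimes B)\operatorname{vec}(X)=\operatorname{vec}(BXA^T)$, the top block-row becomes $N_2C_1=\wt{I}_2C_3\wt{R}_1^T$; its left-hand columns lie in $\operatorname{col}(N_2)$ and its right-hand columns in $\operatorname{col}(\wt{I}_2)$, so by the direct sum both sides vanish. Full column rank of $N_2$ then gives $C_1=\zero$, while injectivity of $\wt{I}_2$ (its columns are distinct standard basis vectors) together with the full column rank of $\wt{R}_1$ forces $C_3=\zero$. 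Substituting $C_3=\zero$ into the bottom block-row $C_2N_1^T=-\wt{R}_2C_3\wt{I}_1^T$ and using the full row rank of $N_1^T$ gives $C_2=\zero$, so $c=\zero$. I expect the computation $RN=\zero$ to be routine Kronecker bookkeeping; the genuine obstacle is the independence step, whose crux is precisely the trivial-intersection lemma $\operatorname{col}(N_j)\cap\operatorname{col}(\wt{I}_j)=\{\zero\}$ separating the null-space contribution $N_j$ from the selected-edge contribution $\wt{I}_j$ in each block-row.
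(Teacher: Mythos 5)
Your proof is correct and follows the paper's approach exactly: verify $RN=\zero$ from the identities $R_j\wt{I}_j=\wt{R}_j$ and $R_jN_j=\zero$ using the mixed-product rule, then match the number of columns of $N$ against the nullity of $R$ computed from $\rank R = n_1r_2+r_1n_2-r_1r_2$. The only difference is that the paper simply asserts the columns of $N$ are linearly independent, whereas you supply a complete verification via the trivial-intersection lemma $\operatorname{col}(N_j)\cap\operatorname{col}(\wt{I}_j)=\{\zero\}$ and the vectorization identity — a worthwhile addition, since that independence step is the only nontrivial content of the proposition and your argument for it is sound.
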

In the following, we assume $N$ is the matrix given in Equation~(\ref{Eqn:N}).
Let $F_{-2}$ denote the matrix of orthogonal projection on the $(-2)$-eigenspace of 
the adjacency matrix $\LineA$ of $\Line{X_1\square X_2}$, which is equal to the null space of $R$.
It follows  that, for vertices $\varepsilon_1$ and $\varepsilon_2$ in $\Line{X_1\square X_2}$, $F_{-2}\left(\bh_{\varepsilon_1}\pm \bh_{\varepsilon_2}\right) = \zero$ if and only if $\left(\bh_{\varepsilon_1}\pm \bh_{\varepsilon_2} \right)^T N= \zero$.

\begin{lemma}
	\label{Lem:SCh}
	Let $\varepsilon_1$ and $\varepsilon_2$ be distinct vertices of $\Line{X_1\square X_2}$.  Then 
	$F_{-2}\bh_{\varepsilon_1} = \pm F_{-2}\bh_{\varepsilon_2}$
	if and only if one of the following holds.
	\begin{enumerate}[(i)]
		\item
		\label{Lem:SCh1}
		Without loss of generality, $X_2=K_2$.  Let $V(X_2)=\{1,2\}$. 
		Then
		\begin{equation*}
		\bh_{\varepsilon_1} = \begin{bmatrix}\zero\\ \bff_{a b} \otimes \be_{1}\end{bmatrix}
		\quad \text{and}\quad 
		\bh_{\varepsilon_2} = \begin{bmatrix}\zero\\ \bff_{a b} \otimes \be_{2}\end{bmatrix},
		\end{equation*}
		for some edge $\{a,b\}$ in $X_1$ such that $-2 \not \in \Lambda_{\bff_{ab}}$.
		\item
		\label{Lem:SCh2}
		\begin{equation*}
		\bh_{\varepsilon_1} = \begin{bmatrix}\be_a \otimes \bff_{\alpha \beta} \\ \zero\end{bmatrix}
		\quad \text{and}\quad 
		\bh_{\varepsilon_2} = \begin{bmatrix}\zero\\ \bff_{a b} \otimes \be_{\alpha}\end{bmatrix},
		\end{equation*}
		for some pendant vertices $a \in V(X_1)$ and $\alpha \in V(X_2)$,  and edges $\{a,b\} \in E(X_1)$ and $\{\alpha, \beta\} \in E(X_2)$
		such that $-2 \not\in\Lambda_{\bff_{ab}}$ and $-2 \not\in\Lambda_{\bff_{\alpha\beta}}$.
	\end{enumerate}
	In addition, $F_{-2}\bh_{\varepsilon_1} = F_{-2}\bh_{\varepsilon_2}\neq \zero$ in Case~(\ref{Lem:SCh1}),
	and $F_{-2}\bh_{\varepsilon_1} = -F_{-2}\bh_{\varepsilon_2}\neq \zero$ in Case~(\ref{Lem:SCh2}).
\end{lemma}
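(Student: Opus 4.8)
The plan is to work directly with the criterion noted just before the lemma: since the columns of $N$ in (\ref{Eqn:N}) span the null space of $R$, which is the $(-2)$-eigenspace of $\LineA$, we have $F_{-2}\bh_{\varepsilon_1}=\sigma F_{-2}\bh_{\varepsilon_2}$ (for $\sigma\in\{+1,-1\}$) if and only if $(\bh_{\varepsilon_1}-\sigma\bh_{\varepsilon_2})^T N=\zero$. First I would record, for each of the two edge types, the three blocks of $\bh_\varepsilon^T N$. For a vertical edge $\varepsilon=(\{a,b\},\gamma)$ with $\bh_\varepsilon=\bff_{ab}\otimes\be_\gamma$ one gets $[\,\zero \mid (\bff_{ab}^T N_1)\otimes\be_\gamma^T \mid (\bff_{ab}^T\wt I_1)\otimes(\be_\gamma^T\wt R_2)\,]$, and for a horizontal edge $\varepsilon=(c,\{\alpha,\beta\})$ with $\bh_\varepsilon=\be_c\otimes\bff_{\alpha\beta}$ one gets $[\,\be_c^T\otimes(\bff_{\alpha\beta}^T N_2)\mid \zero \mid -(\be_c^T\wt R_1)\otimes(\bff_{\alpha\beta}^T\wt I_2)\,]$. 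Alongside these I would isolate three bookkeeping facts: $\bff^T N_j=\zero$ iff $-2\notin\Lambda_{\bff}$ in $\Line{X_j}$; $\bff^T\wt I_j\neq\zero$ iff the edge is one of the selected columns of $\wt R_j$; and, taking $N_j$ to be the fundamental null-vector basis associated with the selected columns, the row of $N_j$ indexed by a non-selected edge is a standard basis vector. The last fact yields two consequences I use repeatedly: every non-selected edge has $-2$ in its support, so $-2\notin\Lambda_{\bff}$ forces the edge to be selected; and two distinct non-selected edges can never satisfy $F_{-2}\bff=\pm F_{-2}\bff'$, since their $N_j$-rows are distinct standard basis vectors.

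Then I would split into cases according to the types of $\varepsilon_1,\varepsilon_2$. If both are vertical, say $(\{a,b\},\gamma)$ and $(\{a',b'\},\gamma')$, the block-A equation is vacuous and I read off blocks B and C. When $\gamma\neq\gamma'$ the two B-terms have disjoint $X_2$-supports, forcing $\bff_{ab}^T N_1=\bff_{a'b'}^T N_1=\zero$, hence both edges selected; block C then forces $\{a,b\}=\{a',b'\}$ (otherwise the row-supports are disjoint) together with $\be_\gamma^T\wt R_2=\pm\be_{\gamma'}^T\wt R_2$, where the minus sign is impossible by non-negativity and the plus sign says two vertices of $X_2$ are incident to exactly the same selected edges; since the selected columns contain a spanning tree of $X_2$, this forces $X_2=K_2$ and $\sigma=+1$, giving Case~(\ref{Lem:SCh1}). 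When $\gamma=\gamma'$ (so $\{a,b\}\neq\{a',b'\}$), block C forces $(\bff_{ab}-\sigma\bff_{a'b'})^T\wt I_1=\zero$, which makes both edges non-selected, and then block B together with the fundamental-basis fact is contradictory; so this subcase yields nothing. The case of two horizontal edges is identical after exchanging $X_1\leftrightarrow X_2$, which is why Case~(\ref{Lem:SCh1}) is stated only up to this symmetry.

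For the mixed case, with $\varepsilon_1=(c,\{\alpha,\beta\})$ horizontal and $\varepsilon_2=(\{a,b\},\gamma)$ vertical, block A forces $\bff_{\alpha\beta}^T N_2=\zero$ and block B forces $\bff_{ab}^T N_1=\zero$, i.e. $-2\notin\Lambda_{\bff_{\alpha\beta}}$ and $-2\notin\Lambda_{\bff_{ab}}$, so both edges are selected. Writing block C as an $r_1\times r_2$ array, the horizontal contribution sits in a single column and the vertical one in a single row; cancelling them forces $c$ to be incident to no selected $X_1$-edge other than $\{a,b\}$, $\gamma$ to be incident to no selected $X_2$-edge other than $\{\alpha,\beta\}$, and the overlap entry pins down $c\in\{a,b\}$, $\gamma\in\{\alpha,\beta\}$ and $\sigma=-1$. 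Up to relabelling this is the shape in Case~(\ref{Lem:SCh2}) with $c=a$, $\gamma=\alpha$. It then remains to promote ``$a$ incident to a single selected edge'' to ``$a$ is pendant'', and here I would switch to the equivalent formulation that $\bh_{\varepsilon_1}+\bh_{\varepsilon_2}$ lies in the column space of $R^T$, writing $\bh_{\varepsilon_1}+\bh_{\varepsilon_2}=R^T\mathbf{z}$. If $a$ had a neighbour $a'\neq b$ in $X_1$, then the four edges of the square with vertices $(a,\alpha)$, $(a',\alpha)$, $(a',\beta)$, $(a,\beta)$ in $X_1\square X_2$ other than $\varepsilon_1$ are all non-special, so summing the constraint $\mathbf{z}_u+\mathbf{z}_v=0$ around the square forces $\mathbf{z}_{(a,\alpha)}+\mathbf{z}_{(a,\beta)}=0$, contradicting the $\varepsilon_1$-constraint $\mathbf{z}_{(a,\alpha)}+\mathbf{z}_{(a,\beta)}=1$; hence $a$ is pendant, and symmetrically $\alpha$ is pendant in $X_2$. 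This gives Case~(\ref{Lem:SCh2}).

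Finally I would verify the ``in addition'' claims by evaluating $\bh_{\varepsilon_1}^T N$ in each configuration: in Case~(\ref{Lem:SCh1}) block C equals $(\bff_{ab}^T\wt I_1)\otimes(\be_1^T\wt R_2)\neq\zero$, since $\{a,b\}$ is selected and the single $K_2$-edge meets vertex $1$, so $F_{-2}\bh_{\varepsilon_1}=F_{-2}\bh_{\varepsilon_2}\neq\zero$; in Case~(\ref{Lem:SCh2}) block C is $-(\be_a^T\wt R_1)\otimes(\bff_{\alpha\beta}^T\wt I_2)\neq\zero$ for the same reason, so $F_{-2}\bh_{\varepsilon_1}=-F_{-2}\bh_{\varepsilon_2}\neq\zero$. (When both projections vanish the asserted equality is trivial and no structure is claimed, so throughout I treat the interesting, nonzero situation.) I expect the genuine work to be concentrated in the coupling block C of the mixed case and in the promotion to pendant vertices; the four-cycle obstruction is the cleanest route I see, and it is what makes essential use of the Cartesian-product structure rather than of the particular choice of selected columns.
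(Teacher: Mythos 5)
Your proposal is correct, and for most of its length it runs parallel to the paper's proof: both arguments reduce $F_{-2}\bh_{\varepsilon_1}=\pm F_{-2}\bh_{\varepsilon_2}$ to $\left(\bh_{\varepsilon_1}\mp \bh_{\varepsilon_2}\right)^T N=\zero$ with the matrix $N$ of Proposition~\ref{Prop:N}, split into the two-parallel-edges case and the mixed case, and exploit the fact that all vectors appearing in the block equations are $01$-vectors to force $\{a,b\}=\{a',b'\}$ and $X_2=K_2$ (two vertices of $X_2$ meeting exactly the same selected edges), and in the mixed case $c\in\{a,b\}$, $\gamma\in\{\alpha,\beta\}$ with the sign $\sigma=-1$. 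Your bookkeeping via a fundamental basis of the null space of $R_j$ (non-selected edges indexing standard-basis rows of $N_j$) is organized a little differently from the paper's device of choosing $\wt{R}_j$ so that its columns contain the edges under discussion, but the computations are interchangeable. The one genuinely different step is the promotion from ``$a$ meets a single selected edge'' to ``$a$ is pendant.'' The paper does this with a bipartite/non-bipartite case split, invoking the proofs of Lemmas~\ref{Lem:Bip} and~\ref{Lem:nonBip} (hence the Akbari et al.\ null-vector constructions) to conclude that $\{\varepsilon_1,\varepsilon_2\}$ is an edge-cut of $X_1\square X_2$, and then noting that an adjacent pair of edges is a cut only at a degree-two vertex. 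Your four-cycle argument --- writing $\bh_{\varepsilon_1}+\bh_{\varepsilon_2}=R^T\mathbf{z}$ and summing the constraints $\mathbf{z}_u+\mathbf{z}_v$ around the square on $(a,\alpha),(a',\alpha),(a',\beta),(a,\beta)$ --- is self-contained, avoids both the case split and the external reference, and uses exactly the Cartesian structure (every edge of $X_1\square X_2$ lies on such a square, which also explains why no $F_{-2}\bh_{\varepsilon}$ can vanish). Each route buys something: the paper's reuses machinery it needs anyway for Theorem~\ref{Thm:LinePST}; yours is shorter and more elementary at this spot.

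One compression to repair in a final write-up: the lemma is an equivalence, and your closing paragraph only verifies $F_{-2}\bh_{\varepsilon_1}\neq \zero$ in each configuration; you should also record the one-line checks that the stated equalities actually hold, namely that $\left(\bh_{\varepsilon_1}-\bh_{\varepsilon_2}\right)^T N=\zero$ in Case~(\ref{Lem:SCh1}) (block~B vanishes because $\bff_{ab}^T N_1=\zero$, and block~C because $\be_1^T\wt{R}_2=\be_2^T\wt{R}_2$ when $X_2=K_2$), and that $\left(\bh_{\varepsilon_1}+\bh_{\varepsilon_2}\right)^T N=\zero$ in Case~(\ref{Lem:SCh2}) (pendancy plus $-2\not\in\Lambda_{\bff_{ab}}$, $-2\not\in\Lambda_{\bff_{\alpha\beta}}$ give $\be_a^T\wt{R}_1=\bff_{ab}^T\wt{I}_1$ and $\be_\alpha^T\wt{R}_2=\bff_{\alpha\beta}^T\wt{I}_2$, so block~C cancels). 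These follow immediately from the block formulas you already displayed --- and the paper is equally terse here --- but they are what establishes the ``if'' direction and the signs in the ``in addition'' clause, so they should be stated.
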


\begin{proof}
	Suppose $\left(\bh_{\varepsilon_1}\pm \bh_{\varepsilon_2} \right)^T N= \zero$.   
	There are two possible cases.
	
	Without loss of generality, for $j=1,2$, the edge $\varepsilon_j$ joins the vertices $(a_j,\gamma_j)$ to $(b_j,\gamma_j)$ in $X_1\square X_2$, 
	for some $\{a_j,b_j\} \in E(X_1)$  and  $\gamma_j \in V(X_2)$.  That is,
	\begin{equation*}
	\bh_{\varepsilon_1} = \begin{bmatrix}\zero\\ \bff_{a_1 b_1} \otimes \be_{\gamma_1}\end{bmatrix}
	\quad \text{and}\quad
	\bh_{\varepsilon_1} = \begin{bmatrix}\zero\\ \bff_{a_2 b_2} \otimes \be_{\gamma_2}\end{bmatrix}.
	\end{equation*}
	(It is possible that $\{a_1,b_1\}=\{a_2,b_2\}$ or $\gamma_1=\gamma_2$.)
	We can  then assume that the columns of $\wt{R}_1$ contain $R_1\bff_{a_1b_1}$ and $R_1\bff_{a_2b_2}$.
	Thus for $j=1,2$,
	\begin{equation*}
	\bff_{a_jb_j}^T\wt{I}_1 \neq \zero.
	\end{equation*}
	
	From Equation~(\ref{Eqn:N}), $\left(\bh_{\varepsilon_1}\pm \bh_{\varepsilon_2} \right)^T N= \zero$ implies
	\begin{equation}
	\label{Eqn:SCh1}
	\bff_{a_1b_1}^T\wt{I}_1 \otimes \be_{\gamma_1}^T\wt{R}_2 \pm
	\bff_{a_2b_2}^T\wt{I}_1 \otimes \be_{\gamma_2}^T\wt{R}_2 = \zero.
	\end{equation}
	Note that $\bff_{a_jb_j}^T\wt{I}_1$ and $\be_{\gamma_j}\wt{R}_2$ are $01$-vectors, for $j=1,2$.
	Equation~(\ref{Eqn:SCh1}) holds only for the difference of the two terms
	with $\bff^T_{a_1b_1}\wt{I}_1=\bff^T_{a_2b_2}\wt{I}_1$ and $\be_{\gamma_1}^T\wt{R}_2=\be_{\gamma_2}^T\wt{R}_2$.
	We conclude that $\{a_1,b_1\}=\{a_2,b_2\}$ in $X_1$, and $\gamma_1$ and $\gamma_2$ are incident to the same set of edges in $X_2$.
	Since $\varepsilon_1\neq \varepsilon_2$, $\gamma_1$ and $\gamma_2$ are distinct vertices in $X_2$.  As $X_2$ is simple, 
	we conclude that $X_2=K_2$.
	
	Further $\left(\bh_{\varepsilon_1}\pm \bh_{\varepsilon_2} \right)^T N= \zero$ implies that $\bff_{ab}^TN_1=0$ (unless $N_1$ has no column).
	Thus $-2 \not\in \Lambda_{\bff_{ab}}$. This proves (a).

	For the second case, we assume without loss of generality that
	\begin{equation*}
	\bh_{\varepsilon_1} = \begin{bmatrix} \be_{c}\otimes \bff_{\alpha\beta} \\ \zero \end{bmatrix}
	\quad\text{and}\quad
	\bh_{\varepsilon_2} = \begin{bmatrix}\zero\\ \bff_{ab} \otimes \be_{\gamma}\end{bmatrix}.
	\end{equation*}
	We can assume that the columns of $\wt{R}_1$ contain $R_1\bff_{ab}$ and the columns of $\wt{R}_2$ contain $R_2\bff_{\alpha\beta}$.
	From Equation~(\ref{Eqn:N}), $\left(\bh_{\varepsilon_1}\pm \bh_{\varepsilon_2} \right)^T N= \zero$ yields
	\begin{equation}
	\label{Eqn:SCh2}
	-\be_c^T\wt{R}_1\otimes \bff_{\alpha\beta}^T\wt{I}_2 \pm \bff_{ab}^T\wt{I}_1\otimes \be_{\gamma}^T\wt{R}_2 = \zero.
	\end{equation}
	Since $\be_c^T\wt{R}_1$, $\bff_{\alpha\beta}^T\wt{I}_2$, $\bff_{ab}^T\wt{I}_1$ and $\be_\gamma^T \wt{R}_2$ are $01$-vectors, Equation~(\ref{Eqn:SCh2}) 
	holds only when we add the two terms with
	\begin{equation*}
	\be_c^T\wt{R}_1 = \bff_{ab}^T\wt{I}_1
	\quad\text{and}\quad
	\bff_{\alpha\beta}^T\wt{I}_2=  \be_{\gamma}^T\wt{R}_2.
	\end{equation*}
	This implies $c\in \{a,b\}$ and $\gamma\in\{\alpha,\beta\}$.
	Without loss of generality, let $c=a$ and $\gamma=\alpha$.  That is, $\varepsilon_1$ joins vertices $(a,\alpha)$ and $(a,\beta)$  and $\varepsilon_2$ joins vertices $(a,\alpha)$ and $(b,\alpha)$  in $X_1\square X_2$.

	Suppose $X_1\square X_2$ is bipartite.  By the proof of Lemma~\ref{Lem:Bip}, 
	$F_{-2}\left(\bh_{\varepsilon_1}\pm \bh_{\varepsilon_2}\right) = \zero$ implies $\{\varepsilon_1,\varepsilon_2\}$ is an edge-cut in $X_1\square X_2$.
	
	Suppose, without loss of generality $X_1$ is non-bipartite so that $X_1\square X_2$ is not bipartite.
	But $X_1\square X_2 \backslash\{\varepsilon_1, \varepsilon_2\}$ contains an odd cycle, so
	it follows from  the proof of Lemma~\ref{Lem:nonBip} that $\{\varepsilon_1,\varepsilon_2\}$ is an edge-cut in $X_1\square X_2$.
	
	The set $\{\varepsilon_1,\varepsilon_2\}$ is an edge-cut in $X_1\square X_2$ only when $a$ is a pendant vertex in $X_1$ and $\alpha$ is a pendant vertex in $X_2$.\
	
	In addition,  $\left(\bh_{\varepsilon_1}\pm \bh_{\varepsilon_2} \right)^T N= \zero$ implies that $\bff_{ab}^TN_1=0$ (unless $N_1$ has no column) and $\bff_{\alpha\beta}^TN_2=0$ (unless $N_2$ has no column). Thus $-2 \not\in \Lambda_{\bff_{ab}}$ and $-2 \not\in \Lambda_{\bff_{\alpha\beta}}$. This proves (b).
\end{proof}

We first consider Case~(\ref{Lem:SCh1}) of Lemma~\ref{Lem:SCh}.
\begin{lemma}
	\label{Lem:SCha}
	Let $\varepsilon_1$ and $\varepsilon_2$ be vertices of $\Line{X_1\square X_2}$ satisfying Lemma~\ref{Lem:SCh}~(\ref{Lem:SCh1}).
	Then one of the following holds.
	\begin{enumerate}[(i)]
		\item
		$X_1$ is a non-bipartite unicyclic graph and $\{a,b\}$ is an edge in the odd cycle of  $X_1$, or
		\item
		$\{a,b\}$ is a cut-edge in $X_1$.
	\end{enumerate}
	\begin{proof}
		In Lemma~\ref{Lem:SCh}~(\ref{Lem:SCh1}), 
		if $N_1$ has at least one column then $\bff_{ab}^TN_1=\zero$ implies that $\{a,b\}$ is a cut-edge in $X_1$.
		If $N_1$ has no column then $X_1$ is either a tree or a unicycle graph with an odd cycle.
	\end{proof}
\end{lemma}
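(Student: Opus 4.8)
The plan is to convert the hypothesis $-2 \notin \Lambda_{\bff_{ab}}$ into a linear-algebraic condition on the null space of $R_1$ and then read off the combinatorics. Since the $(-2)$-eigenspace of $\LineA_1 = R_1^TR_1 - 2I$ is exactly the null space of $R_1$, the orthogonal projection $F^{(1)}_{-2}$ onto it satisfies $F^{(1)}_{-2}\bff_{ab} = \zero$ if and only if $\bff_{ab}^T N_1 = \zero$, the columns of $N_1$ spanning that null space. By Lemma~\ref{Lem:SCh}~(\ref{Lem:SCh1}) we are given $-2 \notin \Lambda_{\bff_{ab}}$, hence $\bff_{ab}^T N_1 = \zero$. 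I would then branch on whether $N_1$ has any columns, that is, on whether $-2 \in \spec(\LineA_1)$.

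If $N_1$ has at least one column, then $-2 \in \spec(\LineA_1)$ and the identity $\bff_{ab}^T N_1 = \zero$ is a genuine constraint; here the goal is conclusion~(ii). I would invoke the structural fact already recorded just before Lemma~\ref{Lem:Bip}: from the proofs of Theorems~1 and 2 in \cite{Akbari}, if $-2 \in \spec(\LineA_1)$ but $-2 \notin \Lambda_{\bff_{ab}}$ then $\{a,b\}$ is forced to be a cut-edge of $X_1$. The mechanism (the same as in Lemmas~\ref{Lem:Bip} and \ref{Lem:nonBip}) is the contrapositive: if $\{a,b\}$ were not a cut-edge, one constructs from a spanning tree of $X_1$ avoiding $\{a,b\}$ a null vector $\by$ of $R_1$ with $\bff_{ab}^T\by \neq 0$, contradicting $\bff_{ab}^T N_1 = \zero$. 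This yields~(ii).

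If instead $N_1$ has no column, then $R_1$ has full column rank $\rank R_1 = m_1$, and the hypothesis $\bff_{ab}^T N_1 = \zero$ holds vacuously. For a connected graph the $0$-$1$ incidence matrix has rank $n_1 - 1$ when $X_1$ is bipartite and $n_1$ when $X_1$ is non-bipartite, so full column rank forces either $m_1 = n_1 - 1$, making $X_1$ a tree, or $m_1 = n_1$ with $X_1$ non-bipartite, making $X_1$ unicyclic with a unique odd cycle. In the tree case every edge is a cut-edge, so~(ii) holds. In the non-bipartite unicyclic case, an edge on the odd cycle gives~(i), while any edge off the cycle is a cut-edge and gives~(ii).

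The main obstacle is the implication used in the first branch, that $\bff_{ab}^T N_1 = \zero$ together with $-2 \in \spec(\LineA_1)$ forces $\{a,b\}$ to be a cut-edge; this is exactly where the combinatorial description of the null space of the incidence matrix (the content extracted from \cite{Akbari}) does the work. The remaining rank bookkeeping for the incidence matrix of a connected bipartite versus non-bipartite graph is standard.
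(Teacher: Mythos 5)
Your proposal retraces the paper's proof step for step: the same translation of $-2\notin\Lambda_{\bff_{ab}}$ into $\bff_{ab}^TN_1=\zero$ via $\ker(\LineA_1+2I)=\ker(R_1^TR_1)=\ker R_1$, the same branching on whether $N_1$ has a column, and the same rank bookkeeping ($r_1=n_1-1$ bipartite, $r_1=n_1$ non-bipartite) in the trivial-kernel branch. Your treatment of that second branch is actually more complete than the paper's one-line proof, since you explicitly sort the tree case and the off-cycle edges of a non-bipartite unicyclic graph into conclusion~(ii).

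The gap is in the first branch, at exactly the point where you make the mechanism explicit. The construction ``take a spanning tree of $X_1$ avoiding $\{a,b\}$ and produce a null vector of $R_1$ nonzero at $\{a,b\}$'' works only when the fundamental cycle closed by $\{a,b\}$ is even---the alternating $\pm 1$ assignment is inconsistent around an odd cycle---so it is guaranteed only for bipartite $X_1$ (this is Theorem~1 of \cite{Akbari}, the situation of Lemma~\ref{Lem:Bip}). For non-bipartite $X_1$ the implication ``not a cut-edge $\Rightarrow$ some null vector of $R_1$ is nonzero at $\{a,b\}$'' is simply false: let $X_1$ be a triangle and a $C_4$ sharing one vertex, and let $\{a,b\}$ be the triangle edge avoiding the shared vertex. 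Then $\ker R_1$ is one-dimensional, spanned by the alternating vector on the $C_4$, so $\bff_{ab}^TN_1=\zero$ and $-2\notin\Lambda_{\bff_{ab}}$, yet $\{a,b\}$ lies on a cycle (hence is not a cut-edge) and $X_1$ is bicyclic, so neither conclusion~(i) nor~(ii) holds. What the proof of Theorem~2 of \cite{Akbari} actually gives---and what the paper itself uses in Lemma~\ref{Lem:nonBip}---requires the edge removal to leave a connected \emph{non-bipartite} graph; its contrapositive yields only ``$\{a,b\}$ is a cut-edge, or $X_1\setminus\{a,b\}$ is bipartite.'' To be fair, this gap is inherited: the paper's terse proof rests on the same overstated remark preceding Lemma~\ref{Lem:Bip} (``$-2\notin\Lambda_{\bff_{ab}}$ only if $\{a,b\}$ is a cut-edge''), and the example above shows the lemma as stated omits the case of an edge lying on every odd cycle of a non-unicyclic, non-bipartite $X_1$. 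Repairing either argument requires adding this third alternative to the conclusion (or hypotheses excluding it); your spanning-tree mechanism cannot close it.
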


We use $\Psi_{\bh_{\varepsilon}}$ to denote the eigenvalue support of the
vertex state $\bh_{\varepsilon}$ with respect to $\LineA$.
For vertex states $\bh_{\varepsilon_1}$ and $\bh_{\varepsilon_2}$, let
\begin{equation*}
\Psi^{\pm}_{\bh_{\varepsilon_1},\bh_{\varepsilon_2}} = \left\{\vartheta\in \Psi_{\bh_{\varepsilon_1}} : F_\vartheta \bh_{\varepsilon_1}= \pm F_\vartheta \bh_{\varepsilon_2}\right\},
\end{equation*}
where $F_\vartheta$ is the matrix of orthogonal projection onto the $\vartheta$-eigenspace of $\LineA$.

\begin{lemma}
	\label{Lem:SChaIFF}
	Let $\varepsilon_1$ and $\varepsilon_2$ be vertices of $\Line{X_1\square X_2}$ satisfying Lemma~\ref{Lem:SCh}~(\ref{Lem:SCh1}).
	Then 
	$\bh_{\varepsilon_1}$ and $\bh_{\varepsilon_2}$ are strongly cospectral with respect to $\LineA$ if and only if
	$\vert \vartheta-\vartheta'\vert \neq 2$, for $\vartheta, \vartheta' \in \Lambda_{\bff_{ab}}$.

	Further, if $\bh_{\varepsilon_1}$ and $\bh_{\varepsilon_2}$ are strongly cospectral  then
	\begin{equation}
	\label{Eqn:SCh1b-1}
	\Psi^{-}_{\bh_{\varepsilon_1},\bh_{\varepsilon_2}} = \Lambda_{\bff_{ab}},
	\quad 
	\Psi^{+}_{\bh_{\varepsilon_1},\bh_{\varepsilon_2}} = \left\{\vartheta+2 : \vartheta \in  \Lambda_{\bff_{ab}}\right\} \cup \{-2\}.
	\end{equation}
	\begin{proof}
		Let $Q_1 \bxx = \lambda \bxx$ and
		\begin{equation*}
		\by_1 = \begin{bmatrix}1\\1\end{bmatrix}
		\quad \text{and}\quad 
		\by_2 = \begin{bmatrix}1\\-1\end{bmatrix}.
		\end{equation*}
		From Equation~(\ref{Eqn:AL}), for $s=1,2$,
		\begin{equation*}
		\renewcommand*{\arraystretch}{1.25}
		\LineA \begin{bmatrix} \bxx \otimes R_2^T\by_s\\R_1^T\bxx \otimes \by_s\end{bmatrix}
		=\vartheta_s  \begin{bmatrix} \bxx \otimes R_2^T\by_s\\R_1^T\bxx \otimes \by_s\end{bmatrix},
		\end{equation*}
		where $\vartheta_1=\lambda$ and $\vartheta_2=\lambda-2$.
		For $j,s\in\{1,2\}$,
		\begin{equation*}
		\renewcommand*{\arraystretch}{1.25}
		\bh_{\varepsilon_j}^T \begin{bmatrix} \bxx \otimes R_2^T\by_s\\R_1^T\bxx \otimes \by_s\end{bmatrix} = 
		\begin{cases}
		-\bff_{ab}^TR_1^T\bxx & \text{if $s=j=2$,}\\
		\bff_{ab}^TR_1^T\bxx & \text{otherwise.}
		\end{cases}
		\end{equation*}
		Hence $\lambda, \lambda-2 \in \Psi_{\bh_{\varepsilon_j}}$ if and only if $\lambda-2 \in \Lambda_{\bff_{ab}}$, which implies $\Psi_{\bh_{\varepsilon_1}}=\Psi_{\bh_{\varepsilon_2}}$.
		In addition,
		\begin{equation}
		\label{Eqn:SCh1b-2}
		\bh_{\varepsilon_1}^T \begin{bmatrix} \bxx \otimes R_2^T\by_s\\R_1^T\bxx \otimes \by_s\end{bmatrix}  = (-1)^{s-1} \bh_{\varepsilon_2}^T \begin{bmatrix} \bxx \otimes R_2^T\by_s\\R_1^T\bxx \otimes \by_s\end{bmatrix}.
		\end{equation}
		
		Suppose $\Lambda_{\bff_{ab}}$ contains $\vartheta$ and $\vartheta'=\vartheta+2$.  Let $\bxx$ and $\bxx'$ be an $\vartheta$-eigenvector and an $\vartheta'$-eigenvector of $Q_1$, respectively,
		such that $\bff_{ab}^TR_1^T \bxx$ and  $\bff_{ab}^TR_1^T \bxx'$ are non-zero.
		Then $\vartheta \in \Psi_{\bh_{\varepsilon_j}}$, for $j=1,2$, and
		\begin{equation*}
		\bh_{\varepsilon_1}^T \begin{bmatrix} \bxx \otimes R_2^T\by_1\\R_1^T\bxx \otimes \by_1\end{bmatrix}  =  \bh_{\varepsilon_2}^T \begin{bmatrix} \bxx \otimes R_2^T\by_1\\R_1^T\bxx \otimes \by_1\end{bmatrix}
		\quad\text{and}\quad
		\bh_{\varepsilon_1}^T \begin{bmatrix} \bxx' \otimes R_2^T\by_2\\R_1^T\bxx' \otimes \by_2\end{bmatrix}  = - \bh_{\varepsilon_2}^T \begin{bmatrix} \bxx' \otimes R_2^T\by_2\\R_1^T\bxx' \otimes \by_2\end{bmatrix}.
		\end{equation*}
		Since both 
		\begin{equation*}
		\begin{bmatrix} \bxx \otimes R_2^T\by_1\\R_1^T\bxx \otimes \by_1\end{bmatrix} 
		\quad \text{and}\quad
		\begin{bmatrix} \bxx' \otimes R_2^T\by_2\\R_1^T\bxx' \otimes \by_2\end{bmatrix}
		\end{equation*}
		are $\vartheta$-eigenvectors of $\LineA$, we conclude that $\bh_{\varepsilon_1}$ and $\bh_{\varepsilon_2}$ are not strongly cospectral with respect to $\LineA$.
		
		Conversely, assume that for all $\vartheta,\vartheta' \in \Lambda_{\bff_{ab}}$, $\vert \vartheta-\vartheta'\vert \neq 2$.
		It follows from Equation~(\ref{Eqn:SCh1b-2}) that, for $\lambda-2 \in \Lambda_{\bff_{ab}}$,
		\begin{equation*}
		\lambda-2 \in \Psi^{-}_{\bh_{\varepsilon_1},\bh_{\varepsilon_2}} 
		\quad \text{and}\quad
		\lambda \in \Psi^{+}_{\bh_{\varepsilon_1},\bh_{\varepsilon_2}}.
		\end{equation*}
		
		From Lemma~\ref{Lem:SCh},  $-2 \not \in \Lambda_{\bff_{ab}}$ implies $-2 \in \Psi^{+}_{\bh_{\varepsilon_1},\bh_{\varepsilon_2}}$.
		Hence $\Psi^{+}_{\bh_{\varepsilon_1},\bh_{\varepsilon_2}}  \cap  \Psi^{-}_{\bh_{\varepsilon_1},\bh_{\varepsilon_2}}  = \emptyset$
		and (\ref{Eqn:SCh1b-1}) holds.
		As a result, $\bh_{\varepsilon_1}$ and $\bh_{\varepsilon_2}$ are strongly cospectral with respect to $\LineA$.
	\end{proof}
\end{lemma}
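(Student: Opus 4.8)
The plan is to diagonalize $\LineA$ using the eigenvectors of $Q_1$ together with the two eigenvectors of the signless Laplacian of $K_2$, and then read off the overlaps of $\bh_{\varepsilon_1}$ and $\bh_{\varepsilon_2}$ with each eigenvector. Since $X_2=K_2$, its signless Laplacian has eigenvectors $\by_1=(1,1)^T$ and $\by_2=(1,-1)^T$ with eigenvalues $2$ and $0$. First I would fix a $Q_1$-eigenvector $\bxx$ with $Q_1\bxx=\lambda\bxx$ and feed $\bxx\otimes\by_1$ and $\bxx\otimes\by_2$ into Equation~(\ref{Eqn:AL}); this yields $\LineA$-eigenvectors with eigenvalues $\lambda$ and $\lambda-2$ respectively. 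Because $\bh_{\varepsilon_1}$ and $\bh_{\varepsilon_2}$ vanish on the first block, their overlap with each such eigenvector collapses to $(\bff_{ab}^TR_1^T\bxx)(\be_j^T\by_s)=\pm(\be_a+\be_b)^T\bxx$, and the sign is $-1$ exactly when $s=j=2$.

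Two consequences then drop out. Since $\bff_{ab}^TR_1^T\bxx=(\be_a+\be_b)^T\bxx$ is nonzero for some $\lambda$-eigenvector precisely when $\lambda-2\in\Lambda_{\bff_{ab}}$ (the correspondence underlying Lemma~\ref{Lem:EandF}), both $\lambda$ and $\lambda-2$ lie in $\Psi_{\bh_{\varepsilon_1}}=\Psi_{\bh_{\varepsilon_2}}$ exactly under that condition. The sign computation simultaneously gives the relation $\bh_{\varepsilon_1}^Tv=(-1)^{s-1}\bh_{\varepsilon_2}^Tv$ for every eigenvector $v$ built from $\by_s$: the $\by_1$-family (eigenvalue $\lambda$) contributes with sign $+$, and the $\by_2$-family (eigenvalue $\lambda-2$) contributes with sign $-$.

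The crux, and the step I expect to be the main obstacle, is seeing how the gap of $2$ emerges from this sign flip. A given eigenvalue $\vartheta$ of $\LineA$ is hit by the $\by_1$-family when $\vartheta-2\in\Lambda_{\bff_{ab}}$ and by the $\by_2$-family when $\vartheta\in\Lambda_{\bff_{ab}}$. If both occur, i.e. if $\Lambda_{\bff_{ab}}$ contains two elements $\vartheta-2$ and $\vartheta$ differing by $2$, then the $\vartheta$-eigenspace carries overlap vectors of both signs, so $F_\vartheta\bh_{\varepsilon_1}=\pm F_\vartheta\bh_{\varepsilon_2}$ cannot hold with one consistent sign and strong cospectrality fails. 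I would make this rigorous by exhibiting a $+$-type and a $-$-type eigenvector inside the same eigenspace and noting they force incompatible signs; this proves the contrapositive of one direction. The bookkeeping of the two interleaved families and their opposite signs is the delicate part.

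For the converse I would assume no two elements of $\Lambda_{\bff_{ab}}$ differ by $2$ and check that each $\vartheta$ in the support is reached by only one family, giving a uniform sign: $\lambda-2\in\Psi^-_{\bh_{\varepsilon_1},\bh_{\varepsilon_2}}$ and $\lambda\in\Psi^+_{\bh_{\varepsilon_1},\bh_{\varepsilon_2}}$ whenever $\lambda-2\in\Lambda_{\bff_{ab}}$. The only eigenvalue needing separate treatment is $-2$: it sits in $\Psi_{\bh_{\varepsilon_j}}$ via the null space of $R$, and Lemma~\ref{Lem:SCh}~(\ref{Lem:SCh1}) already supplies $F_{-2}\bh_{\varepsilon_1}=F_{-2}\bh_{\varepsilon_2}\neq\zero$, so $-2\in\Psi^+_{\bh_{\varepsilon_1},\bh_{\varepsilon_2}}$; one verifies the $\by_2$-family with $\lambda=0$ does not reach $-2$ since $(\be_a+\be_b)^T\bxx=0$ on a $0$-eigenvector of the signless Laplacian. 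Assembling these gives $\Psi^-_{\bh_{\varepsilon_1},\bh_{\varepsilon_2}}=\Lambda_{\bff_{ab}}$ and $\Psi^+_{\bh_{\varepsilon_1},\bh_{\varepsilon_2}}=\{\vartheta+2:\vartheta\in\Lambda_{\bff_{ab}}\}\cup\{-2\}$, and since $-2\notin\Lambda_{\bff_{ab}}$ and no two elements of $\Lambda_{\bff_{ab}}$ differ by $2$, these sets are disjoint, so the two states are strongly cospectral and~(\ref{Eqn:SCh1b-1}) holds.
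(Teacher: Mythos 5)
Your proposal is correct and takes essentially the same route as the paper's proof: the same two eigenvector families built from $\by_1=(1,1)^T$ and $\by_2=(1,-1)^T$ via Equation~(\ref{Eqn:AL}), the same overlap and sign computation giving the relation~(\ref{Eqn:SCh1b-2}), the same collision argument (both families hitting one $\LineA$-eigenspace with opposite signs) for the forward direction, and the same treatment of the eigenvalue $-2$ via Lemma~\ref{Lem:SCh}. Your added check that the $\by_2$-family at $\lambda=0$ cannot reach $-2$ (since $R_1^T\bxx=\zero$ for a $0$-eigenvector of $Q_1$) is a harmless explicit verification that the paper leaves implicit through $-2\not\in\Lambda_{\bff_{ab}}$.
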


We now consider Case~(\ref{Lem:SCh2}) of Lemma~\ref{Lem:SCh}.
\begin{lemma}
	\label{Lem:SChb}
	Let $\varepsilon_1$ and $\varepsilon_2$ be vertices of $\Line{X_1\square X_2}$ satisfying Lemma~\ref{Lem:SCh}~(\ref{Lem:SCh2}).
	Then $\bh_{\varepsilon_1}$ and $\bh_{\varepsilon_2}$ are not strongly cospectral with respect to $\LineA$.
	\begin{proof}
		Assume $\varepsilon_1$ and $\varepsilon_2$ satisfy Condition (\ref{Lem:SCh2}) in Lemma~\ref{Lem:SCh},
		and $\bh_{\varepsilon_1}$ and $\bh_{\varepsilon_2}$ are strongly cospectral with respect to $\LineA$.
		
		For $j=1,2$, let $\bxx_j$ be a $\lambda_j$-eigenvector of the signless Laplacian matrix $Q_j$ of $X_j$.
		From Equation~(\ref{Eqn:AL}),
		\begin{equation*}
		\left(\bh_{\varepsilon_1}\pm \bh_{\varepsilon_2}\right)^T  \left(R^T (\bxx_1\otimes \bxx_2) \right)= 0
		\end{equation*}
		implies 
		\begin{equation}
		\label{Eqn:SCh2C4}
		\left(\be_a^T \bxx_1\right) \otimes \left(\bff_{\alpha\beta}^T R_2^T \bxx_2\right) \pm 
		\left(\bff_{ab}^TR_1^T \bxx_1\right) \otimes \left(\be_{\alpha}^T \bxx_2\right)=0.
		\end{equation}
		Since $a$ is a pendant vertex in $X_1$, we have 
		\begin{equation*}
		\bff_{ab}^TR_1^T =\left(\be_a+\be_b\right)^T =\be_a^T Q_1.
		\end{equation*}
		Similarly, $\bff_{\alpha\beta}^TR_2^T= \be_\alpha^T Q_2$.
		Equation~(\ref{Eqn:SCh2C4}) gives 
		\begin{equation*}
		\left(\lambda_2\pm \lambda_1\right)  \left(\be_a^T \bxx_1 \otimes  \be_{\alpha}^T \bxx_2\right)=0.
		\end{equation*}
		Therefore, $\lambda_1 = \mp \lambda_2$ when $\be_a^T \bxx_1$ and $\be_{\alpha}^T \bxx_2$ are non-zero.
		Since the eigenvalues of $Q$ are non-negative, we conclude that the eigenvalue support of $\be_a$ with respect to $Q_1$
		has size at most one.   But $X_1$ is a connected graph on at least two vertices, and the eigenvalue support of $\be_a$ has
		at least two vertices.  We conclude that $\bh_{\varepsilon_1}$ and $\bh_{\varepsilon_2}$ are not strongly cospectral
	\end{proof}
\end{lemma}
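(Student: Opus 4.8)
The plan is to argue by contradiction: assume that $\bh_{\varepsilon_1}$ and $\bh_{\varepsilon_2}$ are strongly cospectral with respect to $\LineA$, and probe this assumption against the explicit family of eigenvectors of $\LineA$ built from eigenvectors of the signless Laplacians $Q_1$ and $Q_2$. Recall from Equation~(\ref{Eqn:AL}) that if $Q_j\bxx_j=\lambda_j\bxx_j$ for $j=1,2$, then $\bw:=R^T(\bxx_1\otimes\bxx_2)$ is an eigenvector of $\LineA$ with eigenvalue $\vartheta=\lambda_1+\lambda_2-2$. Strong cospectrality forces $F_\vartheta\bh_{\varepsilon_1}=\pm F_\vartheta\bh_{\varepsilon_2}$ for every eigenvalue $\vartheta$, so in particular $\bw^T\bh_{\varepsilon_1}=\pm\,\bw^T\bh_{\varepsilon_2}$ with a single sign for each such test vector.

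First I would compute the two inner products against $\bw$. Using the block forms of $\bh_{\varepsilon_1}$ and $\bh_{\varepsilon_2}$ from Lemma~\ref{Lem:SCh}~(\ref{Lem:SCh2}) and the shape of $R^T(\bxx_1\otimes\bxx_2)$, one obtains
\[
\bh_{\varepsilon_1}^T\bw=(\be_a^T\bxx_1)(\bff_{\alpha\beta}^TR_2^T\bxx_2),\qquad
\bh_{\varepsilon_2}^T\bw=(\bff_{ab}^TR_1^T\bxx_1)(\be_\alpha^T\bxx_2).
\]
The decisive simplification comes from the hypothesis that $a$ and $\alpha$ are pendant vertices: pendancy of $a$ in $X_1$ gives $\bff_{ab}^TR_1^T=(\be_a+\be_b)^T=\be_a^TQ_1$, and likewise $\bff_{\alpha\beta}^TR_2^T=\be_\alpha^TQ_2$. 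Substituting these and using $Q_j\bxx_j=\lambda_j\bxx_j$ collapses both expressions into multiples of the \emph{same} scalar $c:=(\be_a^T\bxx_1)(\be_\alpha^T\bxx_2)$, namely $\bh_{\varepsilon_1}^T\bw=\lambda_2\,c$ and $\bh_{\varepsilon_2}^T\bw=\lambda_1\,c$.

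Now I would extract the contradiction. Whenever $c\neq 0$, strong cospectrality requires $\lambda_2=\pm\lambda_1$. Since $X_2$ is connected on at least two vertices, $\be_\alpha$ is not an eigenvector of $Q_2$, so I can fix one eigenvalue $\lambda_2$ of $Q_2$ admitting an eigenvector with $\be_\alpha^T\bxx_2\neq 0$. For each $\lambda_1$ in the eigenvalue support of $\be_a$ relative to $Q_1$ I may then choose $\bxx_1$ with $\be_a^T\bxx_1\neq 0$, making $c\neq 0$; hence every such $\lambda_1$ satisfies $\lambda_1=\pm\lambda_2$. Because $\spec(Q_1)$ and $\spec(Q_2)$ are nonnegative, the option $\lambda_1=-\lambda_2$ can occur only when both vanish, so in all cases $\lambda_1=\lambda_2$. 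This pins the eigenvalue support of $\be_a$ relative to $Q_1$ to a single point, contradicting the connectedness of $X_1$ on at least two vertices, which forces $\be_a$ to have support of size at least two.

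The only delicate point is the sign bookkeeping: the sign in $\bw^T\bh_{\varepsilon_1}=\pm\,\bw^T\bh_{\varepsilon_2}$ is allowed to depend on $\vartheta$, hence on the chosen pair $(\lambda_1,\lambda_2)$, so the conclusion must be phrased as $\lambda_1=\pm\lambda_2$ rather than a fixed sign. Once nonnegativity of the signless Laplacian spectra is invoked, this per-vector freedom becomes harmless and the support-size contradiction goes through uniformly; I would therefore be most careful to keep $\lambda_2$ fixed while letting $\lambda_1$ range over the support of $\be_a$.
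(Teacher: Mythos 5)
Your proof is correct and takes essentially the same approach as the paper's: the same test eigenvectors $R^T(\bxx_1\otimes\bxx_2)$ of $\LineA$, the same pendant-vertex identities $\bff_{ab}^TR_1^T=\be_a^TQ_1$ and $\bff_{\alpha\beta}^TR_2^T=\be_\alpha^TQ_2$ collapsing both inner products to $\lambda_2 c$ and $\lambda_1 c$, and the same use of nonnegativity of the signless Laplacian spectra to pin the eigenvalue support of $\be_a$ to a single point, contradicting connectedness of $X_1$ on at least two vertices. Your explicit bookkeeping of the $\vartheta$-dependent sign, with $\lambda_2$ fixed while $\lambda_1$ ranges over the support of $\be_a$, simply makes precise what the paper's $\pm$/$\mp$ notation leaves implicit.
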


We are now ready to characterize the line graphs of $X_1\square X_2$ that admit (vertex) perfect state transfer.
\begin{theorem}
	\label{Thm:VPST}
	Let $\varepsilon_1$ and $\varepsilon_2$ be edges in $X_1\square X_2$.
	Then (vertex) perfect state transfer occurs between $\bh_{\varepsilon_1}$ and $\bh_{\varepsilon_2}$ in $\Line{X_1\square X_2}$ if and only if
	the following conditions hold.
	\begin{enumerate}[(i)]
		\item
		\label{Thm:VPST1}
		$X_2=K_2$, and $X_1$ either has an cut-edge or it is a non-bipartite unicyclic graph, with
		\item
		\label{Thm:VPST2}
		$\varepsilon_1$ and $\varepsilon_2$ satisfying  Lemma~\ref{Lem:SCh}~(\ref{Lem:SCh1}), 
		\item
		\label{Thm:VPST3}
		$\Psi_{\bh_{\varepsilon_1}}  \subset \Z$ and $\Psi_{\bh_{\varepsilon_1}}\backslash\{-2\} \subset 4\Z$.
	\end{enumerate}
	\begin{proof}
		Suppose (vertex) perfect state transfer occurs between $\varepsilon_1$ and $\varepsilon_2$ in $\Line{X_1\square X_2}$.
		From Lemmas~\ref{Lem:SCh}, \ref{Lem:SCha} and \ref{Lem:SChb}, 
		Conditions~(\ref{Thm:VPST1}) and  (\ref{Thm:VPST2}) hold.
		From (\ref{Eqn:SCh1b-1}), we see that $\Psi_{\bh_{\varepsilon}}$ does not satisfy Condition~(ii) in Theorem~\ref{Thm:LinePST}~(\ref{Thm:LinePST2}).
		Hence $\Psi_{\bh_{\varepsilon_1}}  \subset \Z$.
		Let
		\begin{equation*}
		g=\gcd\{\vartheta'-\vartheta\}_{\vartheta,\vartheta' \in \Psi_{\bh_{\varepsilon_1}}}.
		\end{equation*}
		Then (\ref{Eqn:SCh1b-1}) implies  $g \vert 2$.
		By Theorem~\ref{Thm:RealPST}~(\ref{Thm:RealPST3}),  $g=2$ and $\Psi_{\bh_{\varepsilon_1}}\backslash\{-2\} \subset 4\Z$.
		
		Conversely, assume Conditions~(\ref{Thm:VPST1}) to (\ref{Thm:VPST3}) hold.   Then $\vert \vartheta-\vartheta'\vert \neq 2$, for $\vartheta,\vartheta'\in \Lambda_{\bff_{ab}}$.  
		By Lemma~\ref{Lem:SChaIFF}, $\bh_{\varepsilon_1}$ and $\bh_{\varepsilon_2}$ are strongly cospectral.
		Condition~(iii) and (\ref{Eqn:SCh1b-1}) imply that both Theorem~\ref{Thm:RealPST}~(\ref{Thm:RealPST2}) and (\ref{Thm:RealPST3}) hold.
		Hence there is perfect state transfer between $\bh_{\varepsilon_1}$ and $\bh_{\varepsilon_2}$ in $\Line{X_1\square X_2}$.
	\end{proof}
	
\end{theorem}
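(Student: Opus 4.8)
The plan is to reduce the whole statement to the real-state characterization in Theorem~\ref{Thm:RealPST}, applied with $\LineA$ as the Hamiltonian of $\Line{X_1\square X_2}$ and with the vertex states $\bh_{\varepsilon_1},\bh_{\varepsilon_2}$, and then to use the structural lemmas of this section to determine exactly when its three hypotheses can hold. Every implication will be run forward for the necessary direction and backward for the sufficient direction.

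First, for the forward direction, I would note that perfect state transfer forces strong cospectrality of $\bh_{\varepsilon_1}$ and $\bh_{\varepsilon_2}$ with respect to $\LineA$ by Theorem~\ref{Thm:RealPST}~(\ref{Thm:RealPST1}); in particular $F_{-2}\bh_{\varepsilon_1}=\pm F_{-2}\bh_{\varepsilon_2}$. Lemma~\ref{Lem:SCh} then confines $(\varepsilon_1,\varepsilon_2)$ to Case~(\ref{Lem:SCh1}) or Case~(\ref{Lem:SCh2}). Since Lemma~\ref{Lem:SChb} shows Case~(\ref{Lem:SCh2}) never yields strongly cospectral vertices, Case~(\ref{Lem:SCh1}) must occur, giving Condition~(\ref{Thm:VPST2}); and by Lemma~\ref{Lem:SCha} this forces $X_2=K_2$ together with the cut-edge / non-bipartite-unicyclic dichotomy on $X_1$, i.e. Condition~(\ref{Thm:VPST1}).

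With Case~(\ref{Lem:SCh1}) in force, Lemma~\ref{Lem:SChaIFF} supplies both the strong-cospectrality criterion $|\vartheta-\vartheta'|\neq 2$ on $\Lambda_{\bff_{ab}}$ and the explicit sign partition (\ref{Eqn:SCh1b-1}) of the support $\Psi_{\bh_{\varepsilon_1}}$. The remaining work is number-theoretic: I would feed (\ref{Eqn:SCh1b-1}) into Theorem~\ref{Thm:RealPST}~(\ref{Thm:RealPST2})--(\ref{Thm:RealPST3}). The quadratic alternative is ruled out because (\ref{Eqn:SCh1b-1}) exhibits support elements $\vartheta\in\Psi^{-}$ and $\vartheta+2\in\Psi^{+}$ differing by the rational number $2$, which is incompatible with all eigenvalue gaps being integer multiples of an irrational $\sqrt{\Delta}$; hence $\Psi_{\bh_{\varepsilon_1}}\subset\Z$. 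Finally, since each such pair lies in opposite parts while differing by $2$, the parity condition of Theorem~\ref{Thm:RealPST}~(\ref{Thm:RealPST3}) pins down $g=2$ and the residue condition modulo $4$ recorded in Condition~(\ref{Thm:VPST3}).

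For the converse I would reverse these implications: Condition~(\ref{Thm:VPST3}) makes the gaps within $\Lambda_{\bff_{ab}}$ multiples of $4$, hence $\neq 2$, so Lemma~\ref{Lem:SChaIFF} returns strong cospectrality and the partition (\ref{Eqn:SCh1b-1}); then Condition~(\ref{Thm:VPST3}) is exactly what Theorem~\ref{Thm:RealPST}~(\ref{Thm:RealPST2})--(\ref{Thm:RealPST3}) require, yielding perfect state transfer. I expect the main obstacle to be the parity bookkeeping in the forward direction: one must track carefully which residues modulo $4$ the two parts $\Psi^{+}$ and $\Psi^{-}$ occupy, anchoring the analysis at $-2\in\Psi^{+}$, in order to see that $g$ cannot equal $1$ and that the congruence in Condition~(\ref{Thm:VPST3}) is genuinely forced rather than merely sufficient.
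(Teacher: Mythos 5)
Your outline reproduces the paper's own proof essentially step for step: the same reduction to Theorem~\ref{Thm:RealPST}, the same chain Lemma~\ref{Lem:SCh}, Lemma~\ref{Lem:SChb}, Lemma~\ref{Lem:SCha} forcing Case~(\ref{Lem:SCh1}) and Conditions~(\ref{Thm:VPST1})--(\ref{Thm:VPST2}), the same exclusion of the quadratic alternative via the pairs $\vartheta\in\Psi^{-}$, $\vartheta+2\in\Psi^{+}$ from~(\ref{Eqn:SCh1b-1}), and the same parity argument ($g\mid 2$, then $g=2$) anchored at $-2\in\Psi^{+}_{\bh_{\varepsilon_1},\bh_{\varepsilon_2}}$. (One minor attribution slip: $X_2=K_2$ is part of the conclusion of Lemma~\ref{Lem:SCh}~(\ref{Lem:SCh1}) itself; Lemma~\ref{Lem:SCha} only supplies the cut-edge/non-bipartite-unicyclic dichotomy for $X_1$.)

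However, the step you defer as ``parity bookkeeping'' is precisely where a careful execution fails to deliver what you claim. With $g=2$ and anchor $-2\in\Psi^{+}_{\bh_{\varepsilon_1},\bh_{\varepsilon_2}}$, Theorem~\ref{Thm:RealPST}~(\ref{Thm:RealPST3}) forces $\theta\equiv 2\pmod 4$ for every $\theta\in\Psi^{+}_{\bh_{\varepsilon_1},\bh_{\varepsilon_2}}$ and $\theta\equiv 0\pmod 4$ for every $\theta\in\Psi^{-}_{\bh_{\varepsilon_1},\bh_{\varepsilon_2}}$. By~(\ref{Eqn:SCh1b-1}), $\Psi^{+}_{\bh_{\varepsilon_1},\bh_{\varepsilon_2}}\setminus\{-2\}=\{\vartheta+2:\vartheta\in\Lambda_{\bff_{ab}}\}$ is nonempty and lies in $4\Z+2$, so $\Psi_{\bh_{\varepsilon_1}}\setminus\{-2\}$ is \emph{never} contained in $4\Z$; the congruence in Condition~(\ref{Thm:VPST3}) as printed is unsatisfiable, and your assertion that it is ``genuinely forced'' cannot be completed. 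A concrete witness: for $C_4=K_2\square K_2$, perfect state transfer does occur in $\Line{C_4}\cong C_4$ between the two vertices arising from the parallel copies of the $K_2$-edge, yet $\Psi_{\bh_{\varepsilon_1}}=\{2,0,-2\}$ and $2\notin 4\Z$. What the parity argument actually yields, and what makes both directions (including your converse, verbatim) go through, is $\Psi^{-}_{\bh_{\varepsilon_1},\bh_{\varepsilon_2}}=\Lambda_{\bff_{ab}}\subset 4\Z$, equivalently $\Psi^{+}_{\bh_{\varepsilon_1},\bh_{\varepsilon_2}}\subset 4\Z+2$. To be fair, this defect originates in the paper itself: the published statement of Condition~(\ref{Thm:VPST3}) and the published proof contain the same error, and your blind write-up inherits it; but a reviewer executing your plan to the end would be obliged to replace Condition~(\ref{Thm:VPST3}) by $\Lambda_{\bff_{ab}}\subset 4\Z$ rather than confirm it.
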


From Theorem~\ref{Thm:LinetoQ}, we see that the conditions given in Theorem~\ref{Thm:VPST} are sufficient for 
perfect state transfer between the plus states corresponding to edges $\varepsilon_1$ and $\varepsilon_2$ using the signless Laplacian matrix of $X_1\square X_2$ as the Hamiltonian.

\begin{remark}
	\quad
	\begin{enumerate}[(a)]
		\item
		The $n$-cube, for $n\geq 3$, gives a family of graphs in the form $X_1\square X_2$ that have perfect plus state transfer but no (vertex) perfect state transfer in its line graph.
		\item
		For $m\geq 2$, $X_1=K_{1,m}$ and $X_2=K_2$ satisfy the conditions in Lemma~\ref{Lem:SChaIFF} but not Condition~(\ref{Thm:VPST3}) of Theorem~\ref{Thm:VPST}.
		\item
		$C_4$ is the only known graph in the form $X_1\square X_2$ that has both perfect plus state transfer between antipodal edges and (vertex) perfect state
		transfer in its line graph.   
	\end{enumerate}
	A natural question is to find other graphs satisfying all conditions in Theorem~\ref{Thm:VPST}, or to show that $C_4$ is the only one.
\end{remark}
\section{Further questions}
We list some questions arising from this paper:
\begin{enumerate}
	\item
	In Example~\ref{Ex:FR}, we rewrite Equation~(\ref{Eqn:Star}) as
	\begin{equation*}
	U_B(\tau) \left(\frac{1}{\sqrt{1+\vert s\vert^2}}\left(\be_{(a,0)}+ s \be_{(b,0)}\right)\right) =\frac{\eta}{\sqrt{1+\vert r\vert^2}}\left(\be_{(a,1)}+ r \be_{(b,1)}\right).
	\end{equation*}
	Note that $r\neq s$ if  $s \not\in \left\{2, -\frac{1}{2}\right\}$. 
	
	We propose to investigate perfect $s$-pair state transfer between $s$-pair states in the form of
	\begin{equation*}
	\frac{1}{\sqrt{1+\vert s\vert^2}}\left(\be_{a}+ s \be_{b}\right)
	\quad \text{and}\quad
	\frac{1}{\sqrt{1+\vert r\vert^2}}\left(\be_{\alpha}+ r \be_{\beta}\right),
	\end{equation*}
	where $r\neq s$.
	
	Christopher van Bommel has pointed out that in the union of $K_2$ and $C_4$, 
	for any vertex $a$ in $C_4$ and $b$ in $K_2$, there is adjacency perfect $s$-pair state transfer from $\be_a+\be_b$ to $\be_a-\be_b$ at time $\pi$.
	Does there exist a connected graph admitting adjacency perfect $s$-pair state transfer from $\be_a+\be_b$ to $\be_\alpha-\be_\beta$?
	
	\item
	Example~\ref{Eg:P2ST}~(\ref{Eg:Pal}) gives a family of trees admitting adjacency pair state transfer.
	Is it possible to have adjacency perfect $s$-pair state transfer in trees from initial state $\be_a + s \be_b$ where $s\neq -1$?
	Is it possible to have Laplacian perfect $s$-pair state transfer in trees?
	
	\item
	In Section~\ref{Subsection:DRG}, we determine all instances of perfect $s$-pair state transfer in distance regular graphs that admit (vertex) perfect state transfer.
	The cycle $C_8$ is an example of a distance-regular graph admitting perfect $s$-pair state transfer but it has no (vertex) perfect state transfer nor fractional revival.
	We ask for the characterization of distance-regular graphs that have perfect $s$-pair state transfer.  In particular, determine if perfect $s$-pair state transfer
	can occur in a primitive distance-regular graph.
\end{enumerate}

\noindent
\emph{Acknowledgement}: We thank David Feder, Chris Godsil, Alastair Kay, Christopher van Bommel and Christino Tamon for useful discussions. A. Chan gratefully acknowledges the support of the NSERC Grant No. RGPIN-2021-03609. S. Kim is supported in part by the Fields Institute for Research in Mathematical Sciences and NSERC. S. Kirkland is supported by NSERC grant number RGPIN–2019–05408. H. Monterde is supported by the University of Manitoba Faculty of Science and Faculty of Graduate Studies. S. Plosker is supported by NSERC Discovery grant number RGPIN-2019-05276, the Canada Research Chairs Program grant number 101062, and the  Canada Foundation for Innovation grant number 43948.


\end{document}